\newcounter{magicrownumbers}
\definecolor{Darkblue}{rgb}{0,0,0.4}
\definecolor{Brown}{cmyk}{0,0.61,1.,0.60}
\definecolor{Purple}{cmyk}{0.45,0.86,0,0}
\definecolor{Darkgreen}{rgb}{0.133,0.543,0.133}
\newcommand{\namedref}[2]{\hyperref[#2]{#1~\ref*{#2}}}
\newcommand{\propref}[1]{\hyperref[#1]{property~(\ref*{#1})}}
\newtheorem*{theorem*}{Theorem}
\newtheorem{theorem}{Theorem}
\newtheorem{lemma}{Lemma}
\newtheorem{definition}{Definition}
\newtheorem*{question*}{Question}
\newtheorem*{conjecture*}{Conjecture}
\newtheorem{question}{Question}
\newcommand{\old}[1]{{}}
\title{Fully Dynamic Maximum Independent Sets of Disks\\ in Polylogarithmic Update Time
%\\(Appendix: Full Version)
}
\date{}
\author{Sujoy Bhore\thanks{Department of Computer Science \& Engineering, Indian Institute of Technology Bombay, Mumbai, India.\\ Email: \texttt{sujoy@cse.iitb.ac.in}}
\quad
Martin N\"ollenburg\thanks{TU Wien, Vienna, Austria. Email: \texttt{noellenburg@ac.tuwien.ac.at}.}
 \quad
 Csaba D. T\'oth\thanks{Department of Mathematics, California State University Northridge, Los Angeles, CA; and Department of Computer Science, Tufts University, Medford, MA, USA. Email: \texttt{csaba.toth@csun.edu}}
 \quad
 Jules Wulms\thanks{Department of Mathematics and Computer Science, TU Eindhoven, Eindhoven, the Netherlands. Email: \texttt{j.j.h.m.wulms@tue.nl}} 
}
\begin{document}
\maketitle

\begin{abstract}
A fundamental question is whether one can 
maintain a maximum independent set in polylogarithmic update time for a dynamic collection of geometric objects in Euclidean space. 
Already, for a set of intervals, it is known that no dynamic algorithm can maintain an exact maximum independent set in sublinear update time. Therefore, the typical objective is to explore the trade-off between update time and solution size. Substantial efforts have been made in recent years to understand this question for various families of geometric objects, such as intervals, hypercubes, hyperrectangles, and fat objects. 

We present the first fully dynamic approximation algorithm for disks of arbitrary radii in the plane that maintains a constant-factor approximate maximum independent set in polylogarithmic expected amortized update time.  
Moreover, for a fully dynamic set of $n$ disks of unit radius in the plane, we show that a $12$-approximate maximum independent set can be maintained with worst-case update time $O(\log n)$, and optimal output-sensitive reporting. This result generalizes to fat objects of comparable sizes in any fixed dimension $d$, where the approximation ratio depends on the dimension and the fatness parameter. 
Further, we note that, even for a dynamic set of disks of unit radius in the plane, it is impossible to maintain $O(1+\varepsilon)$-approximate maximum independent set in truly sublinear update time, under standard complexity assumptions.

Our results build on two recent technical tools: (i) The MIX algorithm by Cardinal et al.\ (ESA~2021) that can smoothly transition from one independent set to another; hence it suffices to maintain a family of independent sets where the largest one is a constant-factor approximation of a maximum independent set. (ii) A dynamic nearest/farthest neighbor data structure for disks by Kaplan et al.\ (DCG~2020) and Liu (SICOMP~2022), which generalizes the dynamic convex hull data structure by Chan (JACM~2010), and allows us to quickly find a ``replacement'' disk (if any) when a disk in one of our independent sets is deleted.
\end{abstract}

%\clearpage
%\setcounter{page}{1}
\section{Introduction}
The maximum independent set (\textsc{MIS}) problem is one of the most fundamental problems in theoretical computer science, and it is one of Karp's 21 classical \textsf{NP}-complete problems~\cite{Karp72}.
In the \textsc{MIS} problem, we are given a graph $G=(V, E)$, and the objective is to choose a subset of the vertices $S\subseteq V$ of maximum cardinality such that no two vertices in $S$ are adjacent. The intractability of \textsc{MIS} carries even under strong algorithmic paradigms. For instance, it is known to be hard to approximate: no polynomial-time algorithm can achieve an approximation factor $n^{1-\varepsilon}$ (for $|V|=n$ and a constant $\varepsilon >0$) unless \textsf{P=ZPP}~\cite{Zuckerman07}. In fact, even if the maximum degree of the input graph is bounded by $3$, no polynomial-time approximation scheme (\textsf{PTAS}) is possible~\cite{berman1999approximation}.

\medskip \noindent \textbf{Geometric Independent Set.} 
In geometric settings, the input to the MIS problem is a collection $\mathcal{L}=\{\ell_1,\ldots,\ell_n\}$ of geometric objects, e.g., intervals, disks, squares, rectangles, etc., and we wish to compute a maximum independent set in their intersection graph $G$. That is, each vertex in $G$ corresponds to an object in $\mathcal L$, and two vertices form an edge if and only if the two corresponding objects intersect. The objective is to choose a maximum cardinality subset $\mathcal{L'}\subseteq \mathcal{L}$ of independent (i.e., pairwise disjoint) objects. 

A large body of work has been devoted to the \textsc{MIS} problem in geometric settings, due to its wide range of applications in scheduling~\cite{bar2006scheduling}, VLSI design~\cite{HochbaumM85}, map labeling~\cite{agarwal1998label}, data mining~\cite{khanna1998approximating, BermanDMR01}, and many others. Stronger theoretical results are known for the \textsc{MIS} problem in the geometric setting, in comparison to general graphs. For instance, even for unit disks in the plane,  the problem remains \textsf{NP}-hard~\cite{ClarkCJ90} and \textsf{W[1]}-hard~\cite{Marx05}, but it
%and unit $d$-cubes in constant dimension $d$, 
admits a \textsf{PTAS}~\cite{HochbaumM85}. Later, \textsf{PTAS}s were also developed for arbitrary disks, squares, and more generally hypercubes and fat objects in constant dimensions~\cite{HuntMRRRS98,Chan03,AlberF04,erlebach2005polynomial}. 

In their seminal work, Chan and Har-Peled~\cite{ChanH12} showed that for an arrangement of pseudo-disks,\footnote{A set of objects is an arrangement of pseudo-disks if the boundaries of every pair of them intersect at most twice.} a local-search-based approach yields a \textsf{PTAS}. However, for non-fat objects, the scenario is quite different. For instance, it had been a long-standing open problem to find a constant-factor approximation algorithm for the \textsc{MIS} problem on axis-aligned rectangles. In a recent breakthrough, Mitchell~\cite{mitchell2022approximating} answered this question in the affirmative. Through a refined analysis of the recursive partitioning scheme, a dynamic programming algorithm finds a constant-factor approximation. Subsequently, G\'alvez et al.~\cite{galvez20223} improved the approximation ratio to $3$.  

\medskip \noindent \textbf{Dynamic Geometric Independent Set.}
In dynamic settings, objects are inserted into or deleted from the collection $\mathcal L$ over time. The typical objective is to achieve (almost) the same approximation ratio as in the offline (static) case while keeping the update time (i.e., the time to update the solution after insertion/deletion) as small as possible. We call it the \emph{Dynamic Geometric Maximum Independent Set} problem (for short, \textsc{DGMIS}). 
% see Section~\ref{prob-des} for a formal problem statement.

Henzinger et al.~\cite{Henzinger0W20} studied \textsc{DGMIS} for various geometric objects, such as intervals, hypercubes, and hyperrectangles. Many of their results extend to the weighted version of \textsc{DGMIS}, as well. Based on a lower bound of Marx~\cite{marx2007optimality} for the offline problem, they showed that any dynamic $(1+\varepsilon)$-approximation for squares in the plane requires $\Omega(n^{1/\varepsilon})$ update time for any $\varepsilon>0$, ruling out the possibility of sub-polynomial time dynamic approximation schemes. On the positive side, they obtained dynamic algorithms with update time polylogarithmic in both $n$ and $N$, 
where the corners of the objects are in a $[0, N]^d$ integer grid, for any constant dimension $d$ (therefore their aspect ratio is also bounded by $N$). Gavruskin et al.~\cite{gavruskin2015dynamic} studied \textsc{DGMIS} for intervals in $\mathbb{R}$
under the assumption that no interval is contained in another interval and obtained an optimal solution with $O(\log n)$ amortized update time. Bhore et al.~\cite{bhore2020dynamic} presented the first fully dynamic algorithms with polylogarithmic update time for \textsc{DGMIS}, where the input objects are intervals and axis-aligned squares.
For intervals, they presented a fully dynamic $(1+\varepsilon)$-approximation algorithm with logarithmic update time. Later, Compton et al.~\cite{compton2020new} achieved a faster update time for intervals, by using a new partitioning scheme. Recently, Bhore et al.~\cite{BhoreKO22} studied the MIS problem for intervals in the streaming settings, and obtained lower bounds. 

For axis-aligned squares in $\mathbb{R}^2$,, Bhore et al.~\cite{bhore2020dynamic} presented a randomized algorithm with an expected approximation ratio of roughly $2^{12}$ (generalizing to roughly $2^{2d+5}$ for $d$-dimensional hypercubes) with amortized update time $O(\log^5 n)$ (generalizing to $O(\log^{2d+1} n)$ for hypercubes). Moreover, Bhore et al.~\cite{bhore2022algorithmic} studied the \textsc{DGMIS} problem in the context of dynamic map labeling and presented dynamic algorithms for several subfamilies of rectangles that also perform well in practice. Cardinal et al.~\cite{CardinalIK21} designed dynamic algorithms for fat objects in fixed dimension $d$ with sublinear worst-case update time. Specifically, they achieved $\tilde{O}(n^{3/4})$ update time\footnote{The $\tilde{O}(\cdot)$ notation ignores logarithmic factors.} for disks in the plane, and $\tilde{O}(n^{1-\frac{1}{d+2}})$ for Euclidean balls in $\mathbb{R}^d$.

However, despite the remarkable progress on the \textsc{DGMIS} problem in recent years, the following question remained unanswered. 

\vspace{.2cm}

%\fbox{\parbox{6in}

\begin{tcolorbox}
{\begin{question}\label{prob1}
Does there exist an algorithm that, for a given dynamic set of disks in the plane, maintains a constant-factor approximate maximum independent set in polylogarithmic update time?
\end{question}}
\end{tcolorbox}
%}

%\vspace{.2cm}
%
%In what follows, we define the problem formally and summarize our contributions. 
%
%
%\paragraph{Problem Description.}\label{prob-des}
%\cstodo{I'm not sue we need a formal problem statement. In fact, it's confusing: It focuses on the intersection graph $G$, while we use the geometric realization of the disks for our algorithm...}\stodo{We can remove this part. Indeed, we are not dealing with the intersection graph as such.}
%We focus on the intersection graphs of a collection $\mathcal{L}=\{\ell_1,\ldots,\ell_n\}$ of geometric objects.  In the intersection graph $G_{\mathcal{L}} = (V_{\mathcal{L}},E_{\mathcal{L}})$ of $\mathcal L$, each object $\ell_i\in\mathcal{L}$ is represented by a vertex $v_i\in V_{\mathcal{L}}$ and any pair of objects $\ell_i \cap \ell_j \ne \emptyset$ in $\mathcal{L}$ intersect if and only if we have an edge $(v_i,v_j)\in E_{\mathcal{L}}$. An independent set in $G_{\mathcal{L}}$ corresponds to a set of pairwise disjoint objects in $\mathcal{L}$.
%
%We work in a fully dynamic model, where objects are inserted in and deleted from $\mathcal{L}$ over time. We aim to maintain a data structure that can efficiently maintain an independent set $\mathcal{S}\subset \mathcal{L}$ whose size is a constant-factor approximation of the MIS of $\mathcal{L}$ at any point in time, with polylogarithmic update times for insertions and deletions.

\subsection*{Our Contributions}\label{sec:cont} 
In this paper, we answer Question~\ref{prob1} in the affirmative (Theorems~\ref{thm:UnitDisks}--\ref{thm:ArbitraryDisks}); see Table~\ref{table}. %\jwtodo{In the before last row of the table, what does $O_f$ hide? If I'm correct we consider $f$ and $d$ constants, but the hidden constant factors in the $O$ depends on both $f$ and $d$; in particular there is a $2^d$ factor. So maybe add $d$ as well, like in the line below?}
As a first step, we address the case of unit disks in the plane.

\begin{table}[ht]
\begin{tabular}{llll}
\toprule

\multicolumn{1}{c}{Objects}  & \multicolumn{1}{c}{Approximation Ratio} & \multicolumn{1}{c}{Update time} & \multicolumn{1}{c}{Reference} \\ 

\midrule

\rowcolor[gray]{.9} Intervals  & $1+\varepsilon$ & $O(\varepsilon^{-1} \log n)$  & \cite{compton2020new} \\ %\hline

Squares  & $O(1)$ & $O(\log^5 n)$ amortized & \cite{bhore2020dynamic} \\ %\hline

\rowcolor[gray]{.9}  Arbitrary radii disks  & $O(1)$ & $(\log n)^{O(1)}$ expec. amortized & Theorem~\ref{thm:ArbitraryDisks} \\ %\hline

 \multirow{2}{*}{Unit disks}  & $O(1)$ & $O(\log n)$ worst-case& Theorem~\ref{thm:UnitDisks} \\ %\cline{2-4} 

 & $1+\varepsilon$ & $n^{(1/\varepsilon)^{\Omega(1)}}$ & Theorem~\ref{lb:disk} \\ %\hline

\rowcolor[gray]{.9} $f$-fat objects in $\mathbb{R}^d$ & $O_{f,d}(1)$ & $O_{f.d}(\log n)$ worst-case& Theorem~\ref{thm:Fat} \\ %\hline

$d$-dimensional hypercubes & $(1+\varepsilon)\cdot 2^d$ & $O_{d,\varepsilon}(\log^{2d+1} n \cdot\log^{2d+1} U)$ & \cite{Henzinger0W20} \\

\bottomrule
\end{tabular}
\caption{\small Summary of results on dynamic independent sets for geometric objects.\label{table}}
\end{table}

\begin{restatable}{theorem}{theoremUnitDisk}
%\begin{theorem}
\label{thm:UnitDisks}
For a fully dynamic set of unit disks in the plane, a 12-approximate MIS can be maintained with worst-case update time $O(\log n)$, and optimal output-sensitive reporting.  
\end{restatable}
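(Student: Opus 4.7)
The plan is a grid-and-coloring approach tailored to unit disks, coupled with simple per-cell dictionary structures and the MIX framework of Cardinal et al.\ for output. The central property of unit disks driving the approach is that two such disks intersect iff their centers lie within Euclidean distance $2$.

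First, I would overlay the plane with a uniform axis-aligned grid $\Gamma$ whose cells have diameter at most $2$ (e.g., side length $\sqrt{2}$), so that any two unit disks whose centers lie in the same cell necessarily intersect. Any independent set therefore contains at most one disk per non-empty cell, which upper-bounds $|\mathrm{OPT}|$ by the number of non-empty cells of $\Gamma$. I would then color the cells of $\Gamma$ with $k = 12$ colors via a modular tile pattern (e.g., $3\times 4$), chosen so that within each color class $\chi$, any two unit disks in distinct same-color cells are automatically disjoint. For each color $\chi$, maintain an independent set $I_\chi$ consisting of one designated ``representative'' disk per non-empty cell of color $\chi$.

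For dynamic maintenance, each cell stores a doubly linked list of unit disks whose centers lie in it, and the cells are indexed by a balanced dictionary on integer cell coordinates. An insertion of a disk $D$ locates its cell in $O(\log n)$ worst-case time, appends $D$ to the list, and, if the cell was previously empty, promotes $D$ into the corresponding $I_\chi$. A deletion removes $D$ from its cell; if $D$ was the representative, another disk in the same cell (if any) is promoted in its place---this ``replacement'' step is where, for the generalization to arbitrary fat objects, one would instead invoke the dynamic nearest/farthest-neighbor structure of Kaplan et al.\ and Liu. Counters track $|I_\chi|$, so the largest class is known in $O(1)$ time. Since $\sum_\chi |I_\chi|$ equals the number of non-empty cells, which is at least $|\mathrm{OPT}|$, the largest $I_\chi$ has size at least $|\mathrm{OPT}|/12$, giving the claimed $12$-approximation; output-sensitive reporting traverses its linked list in time proportional to its cardinality, and the MIX algorithm provides a smooth handover between outputs when the identity of the largest $I_\chi$ shifts across an update.

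The main obstacle is to settle the exact value $k = 12$ in the coloring step: with $s = \sqrt{2}$ and a $3 \times 3$ tile one can already achieve $k = 9$, so the constant $12$ in the theorem is likely chosen to stay compatible with the nearest-neighbor-based replacement mechanism (needed for the fat-object extension announced in \theoremref{thm:Fat}) or with a more delicate treatment of tangent cases on cell boundaries. Once this combinatorial choice is fixed, the $O(\log n)$ worst-case update time and optimal output-sensitive reporting follow by routine data-structural verification.
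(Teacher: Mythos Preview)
Your approach is correct and yields the stated guarantees, but it is genuinely different from the paper's. The paper does \emph{not} use a fine grid plus a coloring; instead it overlays four coarse shifted grids of side length~$4$, argues that every unit disk lies entirely in a cell of at least one of them (so one grid captures $\ge |\mathrm{OPT}|/4$ of an optimum), and then observes that at most three pairwise-disjoint unit disks can have centers in a single half-open $2\times 2$ square, whence $4\cdot 3=12$. Your single grid of side $\sqrt{2}$ with a modular cell coloring is a cleaner alternative: as you note, a $3\times 3$ tile already forces distinct same-color cells to be at center-distance $\ge 2\sqrt{2}>2$, so nine colors suffice and you in fact beat the paper's constant. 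The paper's shifted-grid viewpoint, on the other hand, is what later scales into the nonatree construction for arbitrary radii, which is presumably why they adopt it here.

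Two small misconceptions in your write-up are worth flagging. First, the constant $12$ in the theorem is \emph{not} chosen for compatibility with the Kaplan--Liu nearest/farthest-neighbor structure or for tangent boundary cases; it simply falls out of the paper's $4$-grids-times-$3$-per-cell count. Second, the Kaplan--Liu structure is \emph{not} used for the fat-object extension (\theoremref{thm:Fat}); that result uses exactly the same elementary per-cell bookkeeping you describe, with replacement done by reading off any surviving disk in the cell. The farthest-neighbor machinery enters only for disks of arbitrary radii (\theoremref{thm:ArbitraryDisks}), where one must search an entire ascending path of the nonatree for a disjoint replacement.
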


We prove Theorem~\ref{thm:UnitDisks} in Section~\ref{sec:unit}. Similarly to classical approximation algorithms for the static version~\cite{HochbaumM85}, we lay out four shifted grids such that any unit disk lies in a grid cell for at least one of the grids. For each grid, we maintain an independent set that contains at most one disk from each grid cell, thus we obtain four independent sets $S_1,\ldots , S_4$ at all times. Moreover, the largest of $S_1,\ldots , S_4$ is a constant-factor approximation of the MIS (Lemma~\ref{lem:unitdisk-approx}). Using the MIX algorithm for unit disks, introduced by Cardinal et al.~\cite{CardinalIK21}, we can maintain an independent set $S\subset \bigcup_{i=1}^4 S_i$ of size $\Omega(\max \{|S_1|, |S_2|, |S_3|, |S_4|\})$ at all times, which is a constant-factor approximation of the MIS.

Moreover, our dynamic data structure for unit disks easily generalizes to fat objects of comparable sizes in $\mathbb{R}^d$ for any constant dimension $d\in \mathbb{N}$, as explained in Section~\ref{sec:fat}. 

\begin{restatable}{theorem}{theoremFat}
%\begin{theorem}
\label{thm:Fat}
For every $d,f\in \mathbb{N}$ and real parameters $0<r_1<r_2$, there exists a constant $C$ with the following property: For a fully dynamic collection of $f$-fat sets in $\mathbb{R}^d$, each of size between $r_1$ and $r_2$, a $C$-approximate MIS can be maintained with worst-case update time $O(\log{n})$, and optimal output-sensitive reporting.
%\end{theorem}
\end{restatable}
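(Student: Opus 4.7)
The plan is to follow the four-grid framework used for unit disks (Theorem~\ref{thm:UnitDisks}) and replace each ingredient by its $d$-dimensional, $f$-fat analogue. First, I would set the grid side length to $s = 3 r_2$ (slightly larger than the diameter of the smallest ball enclosing any input object) and set up $2^d$ shifted axis-aligned grids $G_1,\ldots,G_{2^d}$ in $\mathbb{R}^d$, where $G_j$ is shifted by $s/2$ along each coordinate axis indicated by $j\in\{0,1\}^d$. A standard shifting argument ensures that every object of size $\leq r_2$ is entirely contained in some cell of at least one $G_j$.

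Next, for each grid $G_j$, I would maintain an independent set $S_j$ by selecting at most one currently-present object per cell of $G_j$, restricting attention to objects that fit entirely inside that cell. Concretely, for every cell $c$ of every grid $G_j$, keep a dynamic list $L_{j,c}$ of the objects currently in $\mathcal{L}$ that are contained in $c$, together with a pointer to one designated representative (the member of $S_j$ from cell $c$, if any). On insertion of an object $o$, locate in each grid $G_j$ the unique cell $c$ (if any) that fully contains $o$, append $o$ to $L_{j,c}$, and if $c$ had no representative make $o$ the new representative of $S_j$ in $c$. On deletion, remove $o$ from all its $L_{j,c}$; if $o$ was the representative of $S_j$ in some cell $c$, promote any remaining element of $L_{j,c}$ to be the new representative. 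Since there are only $2^d = O_d(1)$ grids and each grid-cell lookup is $O(1)$ with a hash map, each of these updates runs in $O(\log n)$ worst-case time.

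The approximation ratio for $\max_j |S_j|$ follows from two ingredients. By the shifting property, each $o^*\in \mathrm{OPT}$ lies entirely in some cell of at least one $G_j$; summing contributions across grids gives $\sum_{j} |\mathrm{OPT}_j| \geq |\mathrm{OPT}|$, where $\mathrm{OPT}_j\subseteq \mathrm{OPT}$ consists of those OPT-members entirely contained in a cell of $G_j$. By a packing bound for $f$-fat objects of size $\geq r_1$ (each encloses a ball of radius $\geq r_1/(2f)$, and disjoint such balls fit into a cube of side $s = 3r_2$ only in constantly many ways), each cell of $G_j$ contains at most $K = K(f,d,r_1,r_2)$ pairwise-disjoint objects. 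Hence any cell contributing to $\mathrm{OPT}_j$ also contains at least one $L_{j,c}$-element, so $|S_j|\geq |\mathrm{OPT}_j|/K$ and $\max_j |S_j|\geq |\mathrm{OPT}|/(2^d K)$.

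Finally, to consolidate the $2^d$ independent sets $S_1,\ldots,S_{2^d}$ into a single maintained independent set $S$, I would invoke the MIX routine used in the unit-disk proof. Since $f$-fat objects of sizes in $[r_1,r_2]$ in $\mathbb{R}^d$ have bounded local complexity — each object intersects only $O_{f,d}(1)$ pairwise-disjoint others of comparable size — the MIX machinery of Cardinal et al.~\cite{CardinalIK21} applies with the same guarantees it gives in the unit-disk case, yielding $|S|=\Omega(\max_j |S_j|)$ at polylogarithmic amortized cost and hence a $C$-approximation with $C=O(2^d K)$. The main obstacle I anticipate is precisely checking that the MIX algorithm's correctness and time bounds transfer cleanly to this setting — that is, confirming that the ``replacement'' subroutine (finding some object in $\mathcal{L}$ contained in a given cell) can be supported in the required worst-case $O(\log n)$ time and that the smooth transitions MIX performs do not break the packing-based bookkeeping; once this is verified, optimal output-sensitive reporting is immediate because $S$ is maintained explicitly as a list.
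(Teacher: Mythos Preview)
Your approach is essentially the paper's: $2^d$ shifted grids of side length proportional to $r_2$, one object per cell, pigeonhole plus a packing bound gives the approximation, and balanced search trees give $O(\log n)$ updates. Two technical slips are worth flagging.

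First, you invoke a hash map for cell lookup and still claim $O(\log n)$ \emph{worst-case} time; hash maps give only expected bounds. The paper (and the unit-disk proof you are generalizing) uses self-balancing search trees indexed by cell corner, which is what you need for the worst-case guarantee.

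Second, your packing bound assumes each $f$-fat object of size $\ge r_1$ contains a ball of radius $r_1/(2f)$. That is not implied by the fatness definition this paper uses (Chan's stabbing-number definition in Section~\ref{sec:pre}). The correct packing argument under that definition is to tile the side-$s$ cell by subcubes of side $r_1$; by $f$-fatness each subcube is intersected by at most $f$ pairwise-disjoint objects of size $\ge r_1$, so at most $f\cdot \lceil s/r_1\rceil^d$ disjoint objects fit in a cell.

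Finally, the MIX machinery is not needed for the theorem as stated: maintaining the $2^d$ candidate sets $S_j$ and, on a query, reporting the largest one already gives the $C$-approximation with $O(\log n)$ worst-case updates and output-sensitive reporting. MIX is only required if you additionally want a single explicitly maintained set $S$, which the paper treats as a separate remark after the theorem.
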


Our main result is a dynamic data structure for MIS over disks of arbitrary radii in the plane. 

\begin{restatable}{theorem}{theoremArbitraryDisks}
%\begin{theorem}
\label{thm:ArbitraryDisks}
For a fully dynamic set of disks of arbitrary radii in the plane, a constant-factor approximate maximum independent set can be maintained in polylogarithmic expected amortized update time.
%\end{theorem}
\end{restatable}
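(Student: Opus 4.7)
The plan is to extend the bounded-ratio machinery of Theorem~\ref{thm:Fat} to arbitrary radii via a dyadic bucketing on disk radii, combined through a greedy-by-largest-radius aggregation, with the MIX algorithm smoothing the reported output.

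First, I would partition the current disk set into dyadic radius levels: for each integer $i$, let $\mathcal{L}_i$ denote the disks whose radii lie in $[2^i, 2^{i+1})$. Within each level the disks are fat and of comparable size, so Theorem~\ref{thm:Fat} maintains a constant-factor approximate MIS $S_i \subseteq \mathcal{L}_i$ in $O(\log n)$ worst-case time per insertion or deletion at that level. Since any update to $\mathcal{L}$ affects exactly one level, the family $\{S_i\}$ is maintained in $O(\log n)$ time per overall update.

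Next, I would combine the $\{S_i\}$ into a single independent set $S$ by processing levels in decreasing order of $i$ and retaining from $S_i$ only those disks that avoid all disks already chosen from higher levels. The quality analysis mirrors the classical greedy-by-largest-radius analysis of static MIS on disks: each $D \in \mathrm{OPT}$ is either in $S$ or is blocked by a disk in $S$ of equal or larger radius, and the packing fact that only a constant number of pairwise disjoint disks of radius $\geq r$ can simultaneously intersect a single disk of radius $r$ bounds the charges per blocker by $O(1)$. Combining this with the per-level approximation guarantees would yield $|S| = \Omega(|\mathrm{OPT}|)$.

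The dynamic maintenance of $S$ leverages the Kaplan et al.\ / Liu dynamic data structure for disks, which answers disk intersection and nearest/farthest neighbor queries in polylogarithmic time. This lets us identify, for each candidate $D \in S_i$, which higher-level chosen disks (if any) block it, and symmetrically, which lower-level disks become unblocked when a higher-level disk leaves $S$. Successive snapshots of $S$ are then smoothed by the MIX algorithm, so the reported independent set is always a constant-factor approximation of the true MIS even while internal cascading updates are being processed.

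The main obstacle, and the source of the ``expected amortized'' qualifier, is controlling the cascade: a single insertion or deletion at one level can, in the worst case, flip the blocked/unblocked status of many disks at lower levels, yielding non-polylogarithmic cost in the worst case. The plan is to combat this with randomization (for instance, randomizing the dyadic offsets of the radius buckets and the tie-breaking inside the greedy aggregation) together with a potential-based amortized analysis that charges the cost of cascaded changes to the MIX transition budget. Establishing that this combination yields total work $O(m \cdot \mathrm{polylog}(n))$ in expectation over any sequence of $m$ updates is the main technical challenge of the proof.
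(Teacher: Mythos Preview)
Your approximation argument has a genuine error: the greedy-by-largest aggregation does \emph{not} yield a constant-factor approximation, because the packing fact you invoke points in the wrong direction. You charge each $D\in\mathrm{OPT}$ to a blocker $b\in S$ with $\mathrm{radius}(b)\geq\mathrm{radius}(D)$, and then appeal to the fact that only $O(1)$ pairwise disjoint disks of radius $\geq r$ can touch a disk of radius $r$. But here the blocker is the \emph{large} disk and the charged disks are \emph{small}, so the relevant bound is how many pairwise disjoint small disks can touch one large disk---and that is unbounded. Concretely: take one disk $D$ of radius $R$ at the top level and $n\approx R^2$ pairwise disjoint unit disks inside it at a low level. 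Then $S_{\mathrm{top}}=\{D\}$, $S_{\mathrm{low}}$ has $\Theta(n)$ disks, and your top-down filter keeps only $D$, giving $|S|=1$ while $|\mathrm{OPT}|=n$. (A separate, smaller issue: a disk $D\in\mathrm{OPT}$ need not be in $S_i$ at all, since $S_i$ is only an approximation of $\mathrm{MIS}(\mathcal{L}_i)$, so the dichotomy ``in $S$ or blocked by a larger disk in $S$'' already fails.) The classical greedy that works is smallest-first, not largest-first.

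Even after switching to smallest-first, the cascade problem is the heart of the matter, and your plan of ``randomize the bucket offsets and tie-breaking, then amortize via a potential and the MIX budget'' is not an argument---it is a hope. The paper does \emph{not} control cascades by randomization or amortization at all. It builds four shifted \emph{nonatrees} (a $3\times 3$ analogue of quadtrees, chosen so that shifting by $\tfrac12$ is compatible with subdividing by $\tfrac13$), splits each into odd and even levels, and maintains a greedy-with-$3$-clearance independent set in each. The clearance ensures that any newly inserted disk can intersect at most one larger disk currently in the solution (Lemma~\ref{lem:clearance}), and a deterministic \emph{barrier disk} mechanism then caps every update to $O(1)$ changes in $S_k$: instead of propagating a cascade up the tree, one stops after a bounded ascending path and leaves a barrier as a placeholder, with invariants guaranteeing $|B_k|\leq 2|S_k|$ and hence a constant-factor loss. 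The ``expected amortized'' in the theorem comes \emph{solely} from the Kaplan--Mulzer--Roditty--Seiferth--Sharir / Liu farthest-neighbor data structure used to locate replacement disks; every other step is deterministic with worst-case polylogarithmic cost.
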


We extend the core ideas developed for unit disks with several new ideas, in Section~\ref{sec:disks}. 
Specifically, we still maintain a constant number of ``grids'' such that every disk lies in one of the grid cells. For each ``grid'', we maintain an independent set $S_i$ that contains at most one disk from each cell. Then we use the MIX algorithm for disks in the plane~\cite{CardinalIK21} to maintain a single independent set $S\subset \bigcup_{i} S_i$, which is a constant-factor approximation of MIS. 

\begin{figure}[b]
    \centering
    \includegraphics{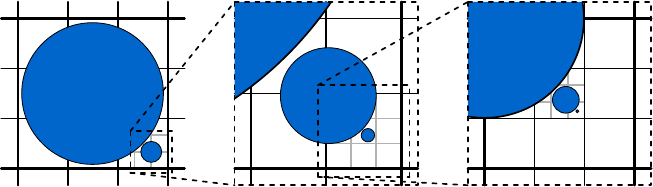}
    \caption{A nonatree with height linear in the number of stored disks, whose radii decay exponentially. By removing disks on certain intermediate levels, we can ensure that a compressed nonatree (with compressed nodes) has linear height. %Notice that the (axis-aligned) bounding boxes of all disks (pairwise) intersect, and hence there is no straightforward reduction from DGMIS on disks to DGMIS on squares.
    }
    \label{fig:exponential-disks}
\end{figure}

However, we need to address several challenges that we briefly review here. 
\begin{enumerate}
    \item First, each disk should be associated with a grid cell of comparable size. This requires several scales in each shifted grid. The cells of a standard quadtree would be the standard tool for this purpose (where each cell is a square, recursively subdivided into four congruent sub-squares). Unfortunately, shifted quadtrees do not have the property that every disk lies in a cell of comparable size. Instead we subdivide each square into $3\times 3$ congruent sub-squares, and obtain a \emph{nonatree}. The crux of the proof is that 2 and  3 are relatively prime, and a shift by $\frac12$ and a subdivision by $\frac13$ are compatible (see Lemma~\ref{lem:gridpartitioning2}).
    \item For the subset of disks compatible with a nonatree, we can find an $O(1)$-approximate MIS using bottom-up tree traversal of the nonatree (using the well-known greedy strategy~\cite{MaratheBHRR95, EfratKNS00}). We can also dynamically update the greedy solution by traversing an ascending path to the root in the nonatree. 
    However, the height of the nonatree (even a compressed nonatree) may be $\Theta(n)$ for $n$ disks (see Figure~\ref{fig:exponential-disks}). In general, we cannot afford to traverse such a path in its entirety, since our update time budget is polylogarithmic. We address this challenge with the following four ideas.
    \begin{enumerate}
        \item We split each nonatree into two trees, combining alternating levels in the same tree and increasing the indegree from $3\cdot 3=9$ to $9^2=81$. This ensures that for any two disks in cells that are in ancestor-descendant relation, the radii differ by a factor of at least 3.
        \item We maintain a ``clearance'' around each disk in our independent set, in the sense that if we add a disk $d$ of radius $r$ to our independent set in a cell $c$, then we require that the  disk $3d$ (of the same center and radius $3r$) is disjoint from all larger disks that we add in any ancestor cell $c'$ of $c$. This ``clearance'' ensures that when a new disk is inserted, it intersects \emph{at most one} larger disk that is already in our independent set (Lemma~\ref{lem:clearance}).
        \item When we traverse an ascending path of the (odd or even levels of the) nonatree, we might encounter an alternating sequence of additions into and removals from the independent set: We call this a \emph{cascade sequence}. We stop each cascade sequence after a constant number of changes in our independent set and show that we still maintain a constant-factor approximation of a MIS. 
        \item Finally, when we traverse an ascending path in the  (odd or even levels of the) nonatree, we need a data structure to find the next required change: When we insert a disk $d$, we can prove that it is easy to find the next level where $d$ may intersect a larger disk in the current independent set (Lemma~\ref{lem:closest-obstacle}). However, when we delete a disk from~$S_i$, we need to find the next level where we can add another disk of the same or larger size instead. For this purpose, we use a dynamic farthest neighbor data structure by Kaplan et al.~\cite{KaplanMRSS20} (which generalizes Chan's famous dynamic convex hull data structure~\cite{DBLP:journals/jacm/Chan10,Chan20a}), 
        that supports polylogarithmic query time and polylogarithmic expected amortized update time. 
    \end{enumerate}
\end{enumerate}

One bottleneck in this framework is the farthest neighbor data structure~\cite{KaplanMRSS20,Liu22}. This provides only \emph{expected amortized} polylogarithmic update time, and it works only for families of ``nice'' objects in the plane (such as disks or homothets of a convex polygon, etc.). This is the only reason why our algorithm does not guarantee deterministic worst-case update time, and it does not extend to balls in $\mathbb{R}^d$ for $d\geq 3$, or to arbitrary fat objects in the plane. All other steps of our machinery support deterministic polylogarithmic worst-case update time, as well as balls in $\mathbb{R}^d$ for any constant dimension $d\in \mathbb{N}$, and fat objects in the plane. 

Another limitation for generalizing our framework is the MIX algorithm, which smoothly transitions from one independent set to another. Cardinal et al.~\cite{CardinalIK21} established MIX algorithms for fat objects in $\mathbb{R}^d$ for any constant $d\in \mathbb{N}$ and their proof heavily relies on separator theorems. However, they show, for example, that a sublinear MIX algorithm is impossible for rectangles in the plane.

Finally, in Section~\ref{sec:lb}, we note that, even for a dynamic set of unit disks in  the plane, it is impossible to maintain a $(1+\varepsilon)$-approximate MIS with amortized update time $n^{O((1/\varepsilon)^{1-\delta})}$ for any~$\varepsilon$, $\delta>0$, unless the Exponential Time Hypothesis (\textsf{ETH}) fails. This follows from a reduction to a result by Marx~\cite{marx2007optimality}.

\section{Preliminaries}
\label{sec:pre}

\paragraph{Fat Objects.} Intuitively, \emph{fat} objects approximate balls in $\mathbb R^d$.
Many different definitions have been used for fatness; we use the definition due to Chan~\cite{Chan03} as a MIX algorithm (described below) has been designed for fat objects using this notion of fatness. 

The \emph{size} of an object in $\mathbb{R}^d$ is the side length of its smallest enclosing axis-aligned hypercube. 
A collection of (connected) sets in $\mathbb{R}^d$ is \emph{$f$-fat} for a constant $f>0$, if in any size-$r$ hypercube $R$, one can choose $f$ points such that if any object in the collection of size at least $r$ intersects $R$, then it contains one of the chosen points. In particular, note that every size-$r$ hypercube $R$ intersects at most $f$ disjoint objects of size at least $r$ from the collection. A collection of (connected) sets in $\mathbb{R}^d$ is \emph{fat} if it is $f$-fat for some constant $f>0$. 

\paragraph{MIX Algorithm.}
A general strategy for computing an MIS is to maintain a small number of \emph{candidate} independent sets $S_1,\ldots, S_k$ with a guarantee that the largest set is a good approximation of an MIS, and each insertion and deletion incurs only constantly many changes in $S_i$ for all $i=1,\ldots , k$. To answer a query about the size of the MIS, we can simply report $\max\{|S_1|,\ldots , |S_k|\}$ in $O(k)$ time. Similarly, we can report an entire (approximate) MIS by returning a largest candidate set. However, if we need to maintain a single (approximate) MIS at all times, we need to smoothly switch from one candidate to another. 
This challenge has recently been addressed by the MIX algorithm introduced by Cardinal et al.~\cite{CardinalIK21}: 
\begin{quote}
\textbf{MIX algorithm}: The algorithm receives two independent sets $S_1$ and $S_2$ whose sizes sum to $n$ as input, and smoothly transitions from $S_1$ to $S_2$ by adding or removing one element at a time such that at all times the intermediate sets are independent sets of size at least $\min\{|S_1|, |S_2|\} - o(n)$.
\end{quote}
Cardinal et al.~\cite{CardinalIK21} constructed an $O(n \log n)$-time MIX algorithm for fat objects in $\mathbb{R}^d$, for constant dimension $d\in \mathbb{N}$. 

Assume that $\mathcal{D}$ is a fully dynamic set of disks in the plane, and we are given candidate independent sets $S_1,\ldots , S_k$ with the guarantee that $\max\{|S_1|,\ldots , |S_k|\}\geq c\,\cdot \mathrm{OPT}$ at all times, where $\mathrm{OPT}$ is the size of the MIS and $0<c\leq 1$ is a constant; further assume that the size of $S_i$, $i\in \{1,\ldots , k\}$, changes by at most a constant $u\geq 1$ for each insertion or deletion in $\mathcal{D}$. 
We wish to maintain a single approximate MIS $S$ at all times, where we are allowed to make up to $10u$ changes in $S$ for each insertion or deletion in $\mathcal{D}$.

Initially, we let $S$ be the largest candidate, say $S=S_i$. 
While $|S_i|> \frac12 \max\{|S_1|,\ldots , |S_k|\}$, we can keep $S=S_i$,
and it remains a $\frac{c}{2}$-approximation. As soon as $2|S_i|\leq |S_j|$, 
where  $|S_j|= \max\{|S_1|,\ldots , |S_k|\}$, we start switching from $S=S_i$ to $S=S_j$. Let $\alpha=|S_i|$  (hence $2\alpha\leq |S_j|\leq 2\alpha+1$) at the start of this process. 
We first apply the MIX algorithm for the current candidates $S_i$ and $S_j$, which replaces $S_i$ with $S_j$ in $O(\alpha\log \alpha)$ update time and $|S_i|+|S_j|\leq 3\alpha+1$ steps distributed over the next $\frac{\alpha}{10u}$ dynamic updates in $\mathcal{D}$, and it maintains an independent set $S_{\text{MIX}}$
of size $|S_{\text{MIX}}|\geq (1-o(1))\, \alpha$, \cite{CardinalIK21}.
 If $|S_i|\leq 3u$, we can swap $S_i$ to $S_j$ in a single step, so we may assume $|S_i|>3u$ and $|S_{\text{MIX}}|\geq (1-o(1))\, \alpha\geq \frac12\, \alpha$ for a sufficiently large constant $u$.
Note, however, that while running the MIX algorithm, the dynamic changes in $\mathcal{D}$ may include up to $\alpha/10$ deletions from $S_i\cup S_j$ and up to  up to $\alpha/10$ insertions into $S_j$. We perform any deletions from $S_i\cup S_j$ directly in $S$; create a LIFO queue for all insertions into $S_j$, and add these elements to $S$ after the completion of the MIX algorithm. That is, we switch from $S=S_i$ to $S=S_j$ in two phases: the MIX algorithm followed by adding any new elements of $S_j$ to $S$ using the LIFO queue. Recall that for each dynamic change in $\mathcal{D}$, set $S_j$ may increase by at most $u$ elements, and we are allowed to make $10u$ changes in $S$. Consequently,  both phases terminate after at most $\frac{1}{10u}\left(3\alpha+1\right)\cdot \frac{1}{1-1/10} \leq \frac{\alpha}{3}$ dynamic updates in $\mathcal{D}$.

Overall, we have $\mathrm{OPT}\leq \frac{1}{c}\,2\alpha+\frac{\alpha}{3} \leq \frac{7}{3c}\, \alpha$ at all times, and we maintain an independent set $S$ of size $|S|\geq S_{\text{MIX}}-\frac{\alpha}{3} \geq (\frac12 - \frac13)\alpha = \frac{\alpha}{6}\geq \frac{1}{6}\cdot \frac{3c}{7}\, \mathrm{OPT} = \frac{c}{14} \, \mathrm{OPT}$ at all times, and so $S$ remains a $\frac{c}{14}$-approximate MIS at all times. 
When both phases terminate, we have $S=S_j$ with $|S_j|\geq 2\alpha-\frac13\, \alpha = \frac53\,\alpha$ and $\max \{|S_1|,\ldots ,|S_k|\}\leq 2\alpha+\frac{1}{3}\alpha = \frac73\, \alpha$. That is, we have $|S_j|\geq \frac57\, \max\{|S_1|,\ldots , |S_k|\}$, which means that there is no need to switch $S_j$ to another independent set at that time.
We can summarize our result as follows. 

\begin{lemma}\label{lem:mix}
    For a collection of candidate independent sets $S_1, \ldots, S_k$, the largest of which is a $c$-approximate MIS at all times, we can dynamically maintain an $O(c)$-approximate MIS with $O(1)$ changes per update.
\end{lemma}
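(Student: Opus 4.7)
The plan is to maintain a single independent set $S$ by piggybacking on the MIX algorithm: as long as the candidate we are currently copying remains within a constant factor of the best candidate, we keep $S$ equal to it; once another candidate becomes substantially larger, we initiate a slow transition that is amortized over many future updates so that each update only causes $O(1)$ edits to~$S$.

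Concretely, I would initialize $S := S_i$ for the initially largest candidate~$S_i$ and keep $S = S_i$ as long as $|S_i| > \tfrac12 \max_\ell |S_\ell|$. As soon as some $S_j$ satisfies $|S_j| \geq 2|S_i|$, I set $\alpha := |S_i|$ and launch the MIX algorithm on the pair $(S_i, S_j)$, which by \cite{CardinalIK21} produces a sequence of $|S_i|+|S_j| \leq 3\alpha+1$ elementary add/remove operations whose intermediate sets all have size at least $(1-o(1))\alpha$. I would spread these operations across the next $\Theta(\alpha)$ dynamic updates to $\mathcal D$, performing a constant number of MIX steps per update so the per-update budget stays $O(1)$. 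To handle updates to $\mathcal D$ arriving while MIX is running, any deletion that hits $S_i\cup S_j$ is mirrored directly in $S$, and any insertion into $S_j$ is pushed onto a LIFO queue that is drained into $S$ in a second phase, again at $O(1)$ per update. Since each dynamic update changes every $|S_\ell|$ by at most a constant $u$, choosing the amortization constant large enough guarantees that both phases terminate within $\alpha/3$ further updates.

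It remains to verify the approximation guarantee throughout the transition. At the moment the switch starts, $\max_\ell |S_\ell| \leq 2\alpha+1$, so $\mathrm{OPT} \leq \tfrac{1}{c}(2\alpha+1)$; since at most $\alpha/3$ updates occur before the transition completes and each changes $\mathrm{OPT}$ by $O(1)$, we still have $\mathrm{OPT} = O(\alpha/c)$ throughout. On the other hand, $|S|$ is at least $|S_{\mathrm{MIX}}|$ minus the deletions absorbed so far, which is $\geq (1-o(1))\alpha - \alpha/3 \geq \alpha/6$ for $\alpha$ above a sufficiently large constant threshold (small cases can be handled by a single atomic swap). Combining these two bounds gives $|S| = \Omega(c\cdot \mathrm{OPT})$, i.e., $S$ is an $O(c)$-approximate MIS at all times.

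The main delicate point will be the amortization bookkeeping: one has to confirm that the MIX phase plus the queue-draining phase both finish before any new ``switch'' becomes necessary, which requires choosing the schedule constants (ratio thresholds, steps-per-update) consistently with the input bound~$u$ on how fast each $|S_\ell|$ can change per update. Once those constants are pinned down, the approximation analysis reduces to the two linear inequalities above, and the per-update work is manifestly $O(1)$ additions/removals plus constant-time MIX bookkeeping, proving the lemma.
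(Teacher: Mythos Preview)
Your proposal is correct and follows essentially the same approach as the paper: the same $\tfrac12$-threshold for triggering a switch, the same $\alpha=|S_i|$ bookkeeping, the same two-phase transition (MIX steps spread over $\Theta(\alpha)$ updates, with concurrent deletions applied immediately and insertions queued), and the same $|S|\geq \alpha/6$ versus $\mathrm{OPT}=O(\alpha/c)$ endgame. The paper merely pins down the constants you leave implicit (taking $10u$ edits per update and verifying explicitly that $|S_j|\geq \tfrac57\max_\ell|S_\ell|$ at completion, so no new switch is triggered), which is exactly the ``delicate point'' you flagged.
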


\paragraph{Dynamic Farthest Neighbor Data Structures.}
Given a set of functions $\mathcal{F}=\{f_1,\ldots, f_n\}$, $f_i:\mathbb{R}^2\rightarrow \mathbb{R}$ for $i=1,\ldots ,n$,  the \emph{lower envelope} of $\mathcal{F}$ is the graph of the function $L:\mathbb{R}^2\rightarrow \mathbb{R}$, $L(p)=\min \{f_i(p) \mid 1\leq i\leq n \}$.
Similarly, the \emph{upper envelope} is the graph of $U:\mathbb{R}^2\rightarrow \mathbb{R}$, $U(p)=\max \{f_i(p) \mid 1\leq i\leq n \}$.
A \emph{vertical stabbing query} with respect to the lower (resp., upper) envelope, for query point $p\in \mathbb{R}^2$, asks for the function $f_i$ such that $L(p)=f_i(p)$ (resp., $U(p)=f_i(p)$).

Given a set $\mathcal{D}$ of $n$ disks in the plane, we can use this machinery to find, for a query disk $d_q$, the disk in $\mathcal{D}$ that is closest (farthest) from $d_q$. Specifically, for each disk $d\in \mathcal{D}$ centered at $c_d$ with radius $r_d$, define the function $f_d:\mathbb{R}^2\to \mathbb{R}$, $f_d(p)=|pc_d|-r_d$. Note that $f_d(p)$ is the \emph{signed} Euclidean distance between $p\in \mathbb{R}^2$ and the disk $d$; that is, $f_d(p)=0$ if and only if $p$ is on the boundary of $d$, $f_d(p)<0$ if $p$ is in the interior of $d$, and $f_d(p)>0$ equals the Euclidean distance between $p$ and $d$ if $q$ is in the exterior of $d$. 
For a query point $p\in \mathbb{R}^2$, $L(p)=f_d(p)$ for a disk $d\in \mathcal{D}$ closest to $p$ (note that this holds even if $p$ lies in the interior of some disks $d\in \mathcal{D}$, where the Euclidean distance to $d$ is zero but $f_d(p)<0$). Similarly, we have $U(p)=f_d(p)$ for a disk $d\in \mathcal{D}$ farthest from $p$. Importantly, for a query disk $d_q$, we can find a closest (farthest) disk from $d_q$ by querying its center.
 
In the fully dynamic setting, functions are inserted and deleted to/from $\mathcal{F}$, and we wish to maintain a data structure that supports vertical stabbing queries w.r.t.\ the lower or upper envelope of $\mathcal{F}$. For linear functions $f_i$ (i.e., hyperplanes in $\mathbb{R}^3$), Chan~\cite{DBLP:journals/jacm/Chan10} devised a fully dynamic randomized data structure with polylogarithmic query time and polylogarithmic amortized expected update time; this is equivalent to a \emph{dynamic convex hull} data structure in the dual setting (with the standard point-hyperplane duality). After several incremental improvements, the current best version is a deterministic data structure for $n$ hyperplanes in $\mathbb{R}^3$ with $O(n\log n)$ preprocessing time, $O(\log^4 n)$ amortized update time, and $O(\log^2 n)$  worst-case query time~\cite{Chan20a}. 

Kaplan et al.~\cite{KaplanMRSS20} generalized Chan's data structure for dynamic sets of functions $\mathcal{F}$, where the lower (resp., upper) envelope of any $k$ functions has $O(k)$ combinatorial complexity. This includes, in particular, the signed distance functions from disks~\cite{DBLP:books/daglib/0031977}. In this case, the orthogonal projection of the lower envelope of $\mathcal{F}$ (i.e., the so-called \emph{minimization diagram}) is the Voronoi diagram of the disks. Their results is the following.

\begin{theorem}(\cite[Theorem~8.3]{KaplanMRSS20})
The lower envelope of a set of $n$ totally defined continuous bivariate functions of constant description complexity in three dimensions, such that the
lower envelope of any subset of the functions has linear complexity, can be maintained dynamically, so as to support insertions, deletions, and queries, so that each insertion takes $O(\lambda_s (\log n) \log^5 n)$ amortized expected time, each deletion takes $O(\lambda_s (\log n) \log^9n)$ amortized expected time, and each query takes $O(\log^2n)$ worst-case deterministic time, where $n$ is the number of functions currently in the data structure. The data structure requires $O(n \log^3 n)$ storage in expectation.
\end{theorem}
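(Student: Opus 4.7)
The plan is to follow the general architecture of Chan's dynamic convex hull data structure~\cite{DBLP:journals/jacm/Chan10,Chan20a}, which establishes the analogous result for hyperplanes in three dimensions (equivalently, linear functions on the plane), and to substitute each of its geometric primitives with an analogue that works for the wider class of bivariate functions whose pairwise and partial lower envelopes have linear complexity. Concretely, I would first set up a Bentley--Saxe--style decomposition of the current function set into $O(\log n)$ blocks of geometrically increasing size, maintain on each block a static data structure supporting vertical stabbing queries against its lower envelope, and combine answers across blocks by pointwise minimization. For insertions this immediately yields the $O(\log n)$-factor overhead; deletions are handled by Chan's randomized ``tournament'' scheme that periodically rebuilds blocks according to weights derived from a Clarkson--Shor sample.

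The critical geometric primitive is a \emph{shallow cutting} for the lower envelope: for every parameter $k$, a partition of the region at most $k$ levels below the lower envelope into $O(n/k)$ prismatic cells, each meeting at most $O(k)$ of the input surfaces (its conflict list). For hyperplanes such cuttings are classical; for the functions considered here, their existence and efficient construction follow from the assumed linear complexity of lower envelopes of subsets, together with the Clarkson--Shor framework for surfaces of constant description complexity. Given such a cutting, a stabbing query at a point $p$ locates the cell containing the vertical line through $p$, then scans the $O(k)$ conflict list to identify the function achieving the envelope, yielding polylogarithmic query time via standard point-location within the cutting.

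For dynamic maintenance, I would build a hierarchy of shallow cuttings at exponentially increasing depths, as in Chan's scheme, and update it using two mechanisms: insertions go into an incremental structure via the logarithmic method, while deletions invoke partial rebuilds whose costs are charged against a random sampling of the remaining functions. The Davenport--Schinzel factor $\lambda_s(\log n)$ arises because, although lower envelopes have linear two-dimensional complexity by assumption, the boundary arcs on each cell of the minimization diagram, and the one-dimensional cross-sections used inside the point-location structures, form DS-sequences of order bounded by the description complexity. Balancing the sizes of the hierarchy, the shallow cutting parameter, and the rebuild frequency should produce the stated $O(\lambda_s(\log n)\log^5 n)$ insertion and $O(\lambda_s(\log n)\log^9 n)$ deletion amortized expected bounds, with $O(\log^2 n)$ worst-case queries coming from standard planar point location on the cutting.

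The hardest part will be the deletion analysis: the higher $\log^9 n$ bound reflects that when an arbitrary function is removed, the shallow cuttings that relied on it may change significantly, and naive rebuilding would be too expensive. Chan's solution is a carefully tuned randomized rebuilding schedule in which, with appropriate probabilities, each block is re-sampled and reconstructed so that the expected work per deletion telescopes. Transporting this schedule to the general setting requires showing that the randomized incremental construction of the lower envelope, and of its shallow cuttings, still satisfies the Clarkson--Shor bounds with the extra $\lambda_s$ factor, and that the dependence on the surface description complexity does not inflate the exponents further. Once this is established, the rest of the machinery, including the envelope-based realization of dynamic closest/farthest disks used elsewhere in our paper, follows by the standard lifting that turns a disk $d$ with center $c_d$ and radius $r_d$ into the signed-distance function $f_d(p)=\lvert pc_d\rvert - r_d$.
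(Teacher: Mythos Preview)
The paper does not prove this theorem at all: it is quoted verbatim as \cite[Theorem~8.3]{KaplanMRSS20} and used as a black box to obtain Lemma~\ref{lem:disjointness}. There is therefore nothing to compare your proposal against in this paper's own text.

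That said, your sketch is a reasonable outline of how Kaplan et al.\ actually establish the result (extending Chan's dynamic hull machinery via shallow cuttings for surfaces with linear-complexity lower envelopes, with the $\lambda_s$ factor entering through Davenport--Schinzel bounds on the one-dimensional substructures). If your goal were to reprove the cited theorem, the main technical work you would still owe is the explicit construction and cost analysis of shallow cuttings in this generality, and the detailed bookkeeping for the randomized deletion schedule; these are the substantive contributions of \cite{KaplanMRSS20} and are not mere transcriptions of Chan's hyperplane argument. For the purposes of the present paper, however, no proof is required here---you should simply cite the result.
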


Subsequently, Liu~\cite[Corollary~16]{Liu22} improved the deletion time to  $O(\lambda_s (\log n) \log^7 n)$ amortized expected time.
Here $\lambda_s(t)$ is the maximum length of a 
Davenport-Schinzel sequence \cite{DBLP:books/daglib/0080837} on $t$ symbols of order $s$. For signed Euclidean distances of disks, we have $s=6$~\cite{KaplanMRSS20} and $\lambda_6(t)\ll O(t\log t)\ll O(t^2)$. For simplicity, we assume $O(\log^{9}n)$ expected amortized update time and $O(\log^2 n)$ worst-case query time. Overall, we obtain the following for disks of arbitrary radii. 

\begin{lemma}\label{lem:disjointness}
For a dynamic set $\mathcal{D}$ of $n$ disks in the plane, there is a randomized data structure that supports disk insertion in $O(\log^7 n)$ amortized expected time, disk deletion in $O(\log^{9} n)$ amortized expected time; and the following queries in $O(\log^2 n)$ worst-case time. \textbf{Disjointness query}: For a query disk $d_q$, find a disk in $\mathcal{D}$ disjoint from $d_q$, or report that all disks in $\mathcal{D}$ intersect $d_q$. %\textbf{Intersection query}: For a query disk $d_q$, find a disk in $\mathcal{D}$ that intersects $d_q$, or report that all disks in $\mathcal{D}$ are disjoint from $d_q$.
\end{lemma}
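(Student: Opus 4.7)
The plan is a direct reduction to the dynamic envelope data structure of Kaplan et al.~\cite{KaplanMRSS20}, strengthened by Liu's~\cite{Liu22} improved deletion bound. Using the signed-distance functions $f_d(p) = |p c_d| - r_d$ from the preliminaries, I would store the family $\mathcal{F} = \{f_d : d\in\mathcal{D}\}$ in the cited data structure, which maintains the upper envelope $U(p) = \max_{d\in\mathcal{D}} f_d(p)$. The hypothesis that the envelope of any subset has linear combinatorial complexity holds here because the maximization diagram of $\mathcal{F}$ is the farthest-point Voronoi diagram of the disks, which is linear. Although the theorem quoted above is phrased for the lower envelope, it applies to the upper envelope by negation of the stored functions, which preserves both constant description complexity and the linear-envelope hypothesis. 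Insertion and deletion of a disk $d$ in $\mathcal{D}$ translate directly to insertion and deletion of $f_d$ in $\mathcal{F}$; plugging the Davenport--Schinzel order $s=6$ noted in the preliminaries into the Kaplan et al.\ insertion bound $O(\lambda_s(\log n)\log^5 n)$ and Liu's deletion bound $O(\lambda_s(\log n)\log^7 n)$ yields the claimed $O(\log^7 n)$ and $O(\log^9 n)$ amortized expected times.

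To answer a disjointness query for a disk $d_q$ with center $c_q$ and radius $r_q$, I would perform a single vertical stabbing at $c_q$ against the upper envelope, obtaining the disk $d^*\in\mathcal{D}$ that attains $U(c_q)$. The key observation is that a disk $d\in\mathcal{D}$ is disjoint from $d_q$ if and only if $|c_d c_q| > r_d + r_q$, equivalently $f_d(c_q) > r_q$. Consequently, if $f_{d^*}(c_q) > r_q$, then $d^*$ is a valid witness of disjointness; otherwise $f_d(c_q) \le f_{d^*}(c_q) \le r_q$ for every $d\in\mathcal{D}$, and every disk of $\mathcal{D}$ intersects $d_q$, so we report this. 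The query is one envelope stabbing plus a scalar comparison, giving the $O(\log^2 n)$ worst-case time.

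There is no substantive obstacle here: the proof is a plug-in of an existing black-box data structure. The only points needing attention are (i) the upper/lower-envelope conversion by negation, (ii) the correct threshold in the disjointness test, namely $r_q$ rather than $0$, and (iii) correctness when $c_q$ lies inside some disk of $\mathcal{D}$ --- in that case the signed distance $f_d(c_q)$ is negative, hence certainly at most $r_q$, so any such disk is correctly judged not to be disjoint from $d_q$, and it does not interfere with the maximization that identifies $d^*$.
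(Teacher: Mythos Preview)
Your proposal is correct and follows essentially the same approach as the paper: use the Kaplan et al.\ / Liu dynamic envelope data structure on the signed-distance functions, stab the upper envelope at the query center $c_q$, and test whether the returned farthest disk is disjoint from $d_q$. Your write-up is in fact more careful than the paper's, which simply says ``if $d_q\cap d=\emptyset$ return $d$'' without spelling out the threshold $r_q$, the negation trick for the upper envelope, or the case $c_q\in d$.
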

\begin{proof}
We use the dynamic data structure in~\cite[Theorem~8.3]{KaplanMRSS20} with the update time improvements in~\cite{Liu22} for the signed Euclidean distance from the disks in $\mathcal{D}$. Given a disk $d_q$ centered at $c_q$, we can answer disjointness 
%and intersection 
queries as follows. %For disjointness, 
The vertical stabbing query for the upper envelope at point $c_q$ returns a disk $d\in \mathcal{D}$ farthest from $c_q$. If $d_q\cap d=\emptyset$, then return $d$, otherwise report that all disks in $\mathcal{D}$ intersect $d_q$.   
%For intersection, the vertical stabbing query for the lower envelope at point $c_q$ returns a disk $d\in \mathcal{D}$ closest to $c_q$. If $d_q\cap d\neq \emptyset$, then return $d$, otherwise report that all disks in $\mathcal{D}$ are disjoint from $d_q$. 
\end{proof}

We refer to the data structure in Lemma~\ref{lem:disjointness} as 
%\emph{dynamic nearest neighbor} or 
the \emph{dynamic farthest neighbor} (DFN) data structure.
%, for short, DNN or DFN data structure, respectively. 
We remark that Chan~\cite{Chan20a} improved the update time when the functions $\mathcal{F}=\{f_1,\ldots , f_n\}$ are distances from $n$ \emph{point sites} in the plane. De~Berg and Staals~\cite{BergS23} generalized these results to dynamic $k$-nearest neighbor data structures for $n$ point sites in the plane.

\section{Unit Disks in the Plane}
\label{sec:unit}
We first consider the case where the fully dynamic set~$\mathcal{D}$ consists of disks of the same size, namely unit disks (with radius $r=1$). Intuitively, our data structure maintains multiple grids, each with their own potential solution. For each grid, disks whose interior is disjoint from the grid lines contribute to a potential solution. We show that at any point in time, the grid that finds the largest solution holds a constant-factor approximation of MIS. 

\paragraph{Shifted Grids.} We define four axis-aligned square grids $G_1, \ldots, G_4$, in which each grid cell has side length 4. For $G_1$ the grid lines are $\{x=4i\}$ and $\{y=4i\}$ for all $i\in\mathbb{Z}$. For $G_2$ and $G_3$, respectively, the vertical and horizontal grid lines are shifted with respect to $G_1$: for $G_2$ the vertical lines are $\{x=4i+2\}$, while for $G_3$ the horizontal lines are $\{y=4i+2\}$, again for all $i\in\mathbb{Z}$. Finally, $G_4$ is both horizontally and vertically shifted, having lines $\{x=4i+2\}$ and $\{y=4i+2\}$ for all $i\in\mathbb{Z}$ (see Figure~\ref{fig:grids}a).

\begin{figure}
    \centering
    \includegraphics{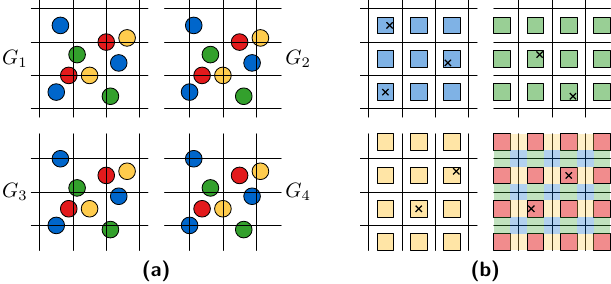}
    \caption{\textbf{\textsf{(a)}} The four shifted grids $G_1$, $G_2$, $G_3$, and $G_4$, which respectively do not intersect the blue, green, yellow, and red disks. \textbf{\textsf{(b)}} The radius-1 squares inside grid cells of the four grids, along with the center points of the disks that lie completely inside grid cells, as crosses. In the bottom right, besides red squares for $G_4$, the squares of all other grids are added to show that the squares together partition the plane.}
    \label{fig:grids}
\end{figure}

\begin{lemma}\label{lem:gridpartitioning}
    Every unit disk in $\mathbb{R}^2$ is contained in a grid cell of at least one of the shifted grid $G_1,\ldots, G_4$. Consequently, for a set~$S$ of unit disks, the cells of one of the grids jointly contain at least $|S|/4$ disks from~$S$.
    %in~$\mathbb{R}^2$, the grid lines of at least one grid, out of the shifted grids $G_1,\ldots, G_4$, do not intersect at least $|S|/4$ disks.
\end{lemma}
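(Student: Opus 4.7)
The plan is to reduce the first statement to a simple \emph{covering} property of the four grids, and then obtain the second statement by pigeonhole.

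First I would characterize exactly when a unit disk lies in a single grid cell of a grid whose cells have side length~$4$. Since each disk has radius~$1$, a unit disk with center $c$ is contained in a cell of grid $G_k$ if and only if $c$ lies at distance at least $1$ from every line of $G_k$; equivalently, $c$ lies in the axis-aligned square of side length~$2$ concentric with that cell. Call these inner squares the \emph{cores} of $G_k$. So the claim that every unit disk is contained in some cell of some $G_k$ is equivalent to saying that the union of all cores of $G_1,\ldots,G_4$ covers~$\mathbb{R}^2$.

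Next I would verify this covering property. Because all four grids are $4$-periodic in $x$ and $y$, it suffices to check that the cores cover a single $4\times 4$ fundamental domain, say $[0,4]^2$, which I identify via the map $(x,y)\mapsto (x\bmod 4, y\bmod 4)$. Inside this fundamental domain each grid contributes exactly one core:
\begin{itemize}
  \item $G_1$ contributes $[1,3]\times[1,3]$,
  \item $G_2$ (shifted horizontally by $2$) contributes $\bigl([0,1]\cup[3,4]\bigr)\times[1,3]$,
  \item $G_3$ (shifted vertically by $2$) contributes $[1,3]\times\bigl([0,1]\cup[3,4]\bigr)$,
  \item $G_4$ (shifted in both directions) contributes $\bigl([0,1]\cup[3,4]\bigr)\times\bigl([0,1]\cup[3,4]\bigr)$.
\end{itemize}
Writing $a=x\bmod 4$ and $b=y\bmod 4$, each of $a$ and $b$ lies in $[1,3]$ or in $[0,1]\cup[3,4]$, so the pair $(a,b)$ falls into exactly one of the four products above. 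Hence the cores tile $[0,4]^2$ (up to shared boundaries, as illustrated in Figure~\ref{fig:grids}b), and so they cover $\mathbb{R}^2$, which proves the first assertion.

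For the second assertion, assign to each disk $d\in S$ an index $k(d)\in\{1,2,3,4\}$ of some grid whose cell contains~$d$, which exists by the first part. Since the four preimages $k^{-1}(1),\ldots,k^{-1}(4)$ partition $S$, one of them has cardinality at least $|S|/4$, and the disks in that class are by construction each contained in a cell of the corresponding grid. The only nontrivial step is the covering verification above; the rest is bookkeeping, so I do not expect a real obstacle.
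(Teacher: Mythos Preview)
Your proof is correct and follows essentially the same approach as the paper. The paper argues coordinate-wise that a unit disk (diameter~$2$) cannot cross both a line $\{x=4i\}$ and a line $\{x=4j+2\}$, hence lies in one of two vertical strips, and likewise horizontally; you phrase the same coordinate-wise dichotomy via centers and ``cores'' and verify it on a fundamental domain, which is exactly the tiling shown in Figure~\ref{fig:grids}b. Both proofs finish with the same pigeonhole step.
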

\begin{proof}
    The distance between two vertical lines $\{x=4i\}$ and $\{x=4j+2\}$, for any $i, j\in \mathbb{Z}$, is at least two. A unit disk $d$ has diameter 2, so its interior cannot intersect two such lines. Consequently, the vertical strip $\{4i\leq x\leq 4i+4\}$ or $\{4i-2\leq x\leq 4i+2\}$ contains $d$ for some $i\in \mathbb{Z}$.
     Similarly, the horizontal strip $\{4j\leq x\leq 4j+4\}$ or $\{4j-2\leq x\leq 4j+2\}$ contains $d$ for some $i\in \mathbb{Z}$. The intersection of these strips is a cell in one of the grids, which contains $d$. This proves the first claim; the second claim follows from the pigeonhole principle. 
\end{proof}

Because of Lemma~\ref{lem:gridpartitioning} we know that one of the grids contains at least a constant fraction of an optimum solution $\mathrm{OPT}$, namely at least $\frac14\,|\mathrm{OPT}|$ disks.

To maintain an approximate MIS over time, we want to store information about the disks, such that we can efficiently determine the disks inside a particular grid cell, and given a disk, which grid cell(s) it is contained in. Each disk $d\in\mathcal{D}$ is represented by its center $p$ and we determine whether~$d$ is inside a cell by checking whether~$p$ is inside the $2\times 2$ square centered inside each grid cell (see Figure~\ref{fig:grids}b): Since we deal with unit disks, when a center is inside a grid cell and at least unit distance from the boundary, the corresponding disk is completely inside the grid cell. By making these $2\times 2$ square regions closed on the bottom and left, and open on the top and right, we can ensure that the union of these regions, over all cells of all four grids, partitions the plane; see Figure~\ref{fig:grids}b (bottom right). As a result, every disk is assigned to exactly one cell of exactly one grid. For each grid cell that contains at least one disk, we add an arbitrary disk to the independent set of that grid. This yields an independent set $S_i$ for each grid~$G_i$.

\begin{lemma}\label{lem:unitdisk-approx}
Let $S_1,\ldots,S_4$ be the independent sets in the set~$\mathcal{D}$ of unit disks computed for $G_1,\ldots,G_4$, respectively. The largest of $S_1,\ldots, S_4$ is a 12-approximation of a MIS for~$\mathcal{D}$.
\end{lemma}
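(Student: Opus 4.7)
The plan is to compare an optimal MIS $\mathrm{OPT}$ with the four candidate sets $S_1,\ldots,S_4$ grid by grid. The paragraph preceding the lemma establishes that the inner $2\times 2$ squares used to assign disks to cells collectively partition the plane, so every disk of $\mathrm{OPT}$ is assigned to exactly one cell of exactly one grid $G_i$. Writing $\mathrm{OPT}_i \subseteq \mathrm{OPT}$ for the disks assigned to cells of $G_i$, we have $|\mathrm{OPT}| = \sum_{i=1}^{4} |\mathrm{OPT}_i|$. Since the construction picks exactly one representative from every non-empty cell of $G_i$, $|S_i|$ equals the number of non-empty cells of $G_i$, and in particular the number of cells of $G_i$ that meet $\mathrm{OPT}$ is at most $|S_i|$. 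The whole task therefore reduces to a local, per-cell packing bound.

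The heart of the proof, and the step I expect to be the main obstacle, is the following claim: at most three points of a closed $2\times 2$ square can be pairwise at Euclidean distance strictly greater than $2$. The strict inequality is essential, because two unit disks whose centers are exactly at distance $2$ are tangent and hence intersect, so they cannot both belong to an independent set. I would establish the claim by a convex-hull dichotomy on four hypothetical such points $p_1,\ldots,p_4$. If their convex hull is a quadrilateral, then by the monotonicity of perimeter under convex inclusion its perimeter is at most $8$ (the perimeter of the enclosing $2\times 2$ square), so some side has length at most $2$, contradicting the assumption. Otherwise the convex hull is a triangle containing the fourth point; the three angles at the interior point sum to $2\pi$, so one of them is at least $2\pi/3$, and the law of cosines with $\cos(2\pi/3) = -\tfrac12$ and the two adjacent sides of length greater than $2$ forces the opposite side to have length greater than $2\sqrt{3}$, exceeding the diameter $2\sqrt{2}$ of the $2\times 2$ square---another contradiction.

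Combining the two ingredients gives $|\mathrm{OPT}_i| \leq 3\,|S_i|$ for each $i \in \{1,2,3,4\}$. Summing over $i$ and using the trivial bound $\sum_i |S_i| \leq 4\max_i|S_i|$,
\[
|\mathrm{OPT}| \;=\; \sum_{i=1}^{4} |\mathrm{OPT}_i| \;\leq\; 3\sum_{i=1}^{4} |S_i| \;\leq\; 12 \max_{i} |S_i|,
\]
which shows that the largest of $S_1,\ldots,S_4$ is a $12$-approximate MIS for $\mathcal{D}$, as claimed.
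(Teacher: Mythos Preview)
Your proof is correct and follows essentially the same approach as the paper: partition $\mathrm{OPT}$ over the four grids, bound each cell's contribution to $\mathrm{OPT}$ by~$3$, and combine with the factor~$4$ from the grids. The only differences are cosmetic: you give a full convex-hull dichotomy for the per-cell packing bound where the paper argues it in one line from the half-openness of the inner $2\times 2$ squares, and you sum $|\mathrm{OPT}|=\sum_i|\mathrm{OPT}_i|\le 3\sum_i|S_i|\le 12\max_i|S_i|$ where the paper instead invokes Lemma~\ref{lem:gridpartitioning} to pick one grid holding at least $|\mathrm{OPT}|/4$ disks.
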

\begin{proof}
    Let $\mathrm{OPT}\subseteq \mathcal{D}$ be a MIS. By Lemma~\ref{lem:gridpartitioning}, there is a grid $G_i$ whose cells jointly contain a subset $\mathrm{OPT}_i\subset \mathrm{OPT}$ of size $|\mathrm{OPT}_i|\geq \frac14\, |\mathrm{OPT}|$. Two unit disks are disjoint if the distance between their centers is more than 2. Consider one of the $2\times 2$ squares inside a cell of $G_i$. Recall that it is open on the top and right, and hence the $x$- or $y$-coordinates of two centers in this square differ by less than 2. Thus, at most three centers fit in a $2\times 2$ square and each grid cell can therefore contain at most three unit disks from $OPT_i$. 
    Consequently, at least $\frac13\, |OPT_i|\geq \frac{1}{12}\, |OPT|$ cells of the grid $G_i$ each contain at least one disk of $\mathcal{D}$. Thus, we return an independent set of size at least~$\frac{1}{12}\, |OPT|$.
\end{proof}

We previously considered only how to compute a constant-factor approximation of a MIS of~$\mathcal{D}$. Next we focus on how to store the centers, such that we can efficiently deal with dynamic changes to the set~$\mathcal{D}$. Our dynamic data structure consists of multiple self-balancing search trees~$T_\mathcal{D}$, and~$T_1$, $T_2$, $T_3$, and $T_4$. The former stores an identifier for each disk in~$\mathcal{D}$, while the latter four store only the identifiers of disks in $S_1,\ldots,S_4$, respectively. More precisely~$T_\mathcal{D}$ has a node for each grid cell that contains a disk, indexed by the bottom left corner $(x,y)$ of the grid cell. For each such grid cell, an additional self-balancing search tree stores all identifiers of disks in the cell. When a unit disk~$d$ is added to or deleted from~$\mathcal{D}$, we use these trees to update the sets~$S_1,\ldots,S_4$.

\begin{lemma}\label{lem:unitdisk-dynamic}
Using self-balancing search trees~$T_\mathcal{D}$ and $T_1,\ldots, T_4$, containing at most $n$ elements, we can maintain the independent sets $S_1,\ldots, S_4$ in grids $G_1,\ldots,G_4$, in~$O(\log{n})$ update time per insertion/deletion.
\end{lemma}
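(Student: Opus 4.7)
The plan is to spell out the insert and delete procedures on the augmented tree structure and to check that each invokes only a constant number of $O(\log n)$-time search-tree operations. The key preliminary observation is that given the center $p=(p_x,p_y)$ of a disk $d$, the cell of each shifted grid $G_i$ that contains $p$ can be identified in $O(1)$ arithmetic time from $p_x$ and $p_y$, since each $G_i$ is an axis-aligned grid of side $4$ with a known offset; the half-open inner $2\times 2$ square convention established just before the lemma guarantees that $p$ falls in a unique such cell per grid. I would use the bottom-left corner $(x,y)$ of this cell as the key under which the cell appears in $T_\mathcal{D}$, and for every occupied cell the attached per-cell search tree stores the identifiers of the disks currently assigned to it.

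For the insertion of a disk $d$, I would iterate over the four grids: for each $G_i$, search $T_\mathcal{D}$ for the key of the cell containing $d$'s center; if no node for this cell exists, create one, attach a singleton per-cell tree containing $d$, and insert $d$ into $T_i$ as the cell's representative in $S_i$; if the cell is already present, insert $d$ into the existing per-cell tree and leave $T_i$ unchanged, since the current representative remains valid. For the deletion of a disk $d$, I would again locate, for each $G_i$, the node of the containing cell in $T_\mathcal{D}$ and remove $d$ from the per-cell tree; if $d$ is the representative currently recorded in $T_i$, then either promote an arbitrary remaining disk of the per-cell tree to be the new representative and update $T_i$ accordingly, or, if the per-cell tree becomes empty, remove the cell node from $T_\mathcal{D}$ and delete the representative from $T_i$.

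The correctness check reduces to verifying the invariant that after each update every occupied cell of each $G_i$ contributes exactly one disk to $S_i$, which is precisely what the counting argument in Lemma~\ref{lem:unitdisk-approx} requires; this is immediate from the update rules. For the time bound, each elementary step above---computing the cell key, searching or updating $T_\mathcal{D}$, updating the per-cell tree, and updating $T_i$---is a single operation on a self-balancing search tree holding at most $n$ elements, and hence takes $O(\log n)$ time; since only a constant number of such operations are performed across the four grids, the total update time is $O(\log n)$. I do not foresee a conceptual obstacle; the only point that requires care is that the half-open $2\times 2$ square convention used to assign each center to a unique cell per grid is honoured consistently, so that the bookkeeping in $T_\mathcal{D}$ and the $T_i$ matches the partition of the plane induced by the four grids.
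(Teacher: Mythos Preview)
Your procedure and time accounting are fine, but you have misread the role of the half-open $2\times 2$-square convention, and this creates a genuine gap. You write that the convention ``guarantees that $p$ falls in a unique such cell \emph{per grid}'' and then iterate over all four grids, processing $d$ in the $4\times 4$ cell of each $G_i$ that contains its center~$p$. That is not what the convention says: as stated just before the lemma, the inner $2\times 2$ squares taken over all cells of \emph{all four} grids partition the plane, so $p$ lies in exactly one such square, and $d$ is therefore assigned to exactly one cell of exactly one grid. The paper's insertion and deletion routines work on this single grid only.

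If you instead place $d$ into the $4\times 4$ cell of every $G_i$ that contains $p$, the sets $S_i$ need not be independent. For instance, in $G_1$ the unit disks centred at $(3.9,2)$ and $(4.1,2)$ have centers in the adjacent cells $[0,4)\times[0,4)$ and $[4,8)\times[0,4)$ and would both be chosen as representatives, yet they intersect. The point you are missing is that a disk must be \emph{contained} in its cell---equivalently, its center must lie in the cell's inner $2\times 2$ square---not merely have its center somewhere in the $4\times 4$ cell. Once you use the $2\times 2$ squares correctly, your ``iterate over four grids'' loop collapses to computing the single grid $G_i$ that owns~$d$ (the paper does this by rounding the center coordinates down to even integers), and from that point your insert/delete procedures and the $O(\log n)$ accounting match the paper's proof essentially verbatim.
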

\begin{proof}
    For the insertion of a disk $d$ with center point~$(x,y)$, we find the bottom-left corner of the unique $2\times 2$ square~$s$ inside a cell $c\in G_i$, for $i \in \{1,2,3,4\}$, that contains $d$: Consider the grid of $2\times 2$ squares, and observe that all corners have even coordinates. We can therefore round the coordinates of the center point of~$d$ to $(\lfloor x\rfloor, \lfloor y\rfloor)$, and if either coordinate is odd, subtract one. We then find the cell~$c$ that~$d$ is contained in. We query~$T_\mathcal{D}$ with this cell and report whether the node for~$c$ exists. In case a point is found, we already have selected a disk for cell~$c$ in the potential solution~$S_i$. We therefore insert the identifier of~$d$ only in $T_\mathcal{D}$. However, if no point is found, $c$ must be empty and we insert a node for~$c$ into~$T_\mathcal{D}$ and initialize a self-balancing search tree at this node. Additionally, the independent set~$S_i$ can then grow by one by adding $d$ to this set. We hence insert the identifier of~$d$ into both the new search tree for~$c$ in $T_\mathcal{D}$ and into $T_i$.

    For the deletion of a disk $d$, we again find the bottom-left corner of the square~$s$ inside some cell~$c\in G_i$ that contains $d$, and query~$T_i$ with the identifier of~$d$. Regardless of whether we found~$d$, we now delete~$d$ from both the search tree for~$c$ in $T_\mathcal{D}$ and from $T_i$. If we found~$d$ in $T_i$, then we have to check whether we can replace it with another disk in~$c$. We do this by checking whether the search tree for~$c$ in $T_\mathcal{D}$ is empty (i.e., the pointer to root is null). If we find an identifier for a disk~$d'$ in cell~$c$ at the root, we insert it into $T_i$, so that the corresponding disk replaces~$d$ in~$S_i$. If we find no such identifier, we remove the node for~$c$ from~$T_\mathcal{D}$.

    All interactions with self-balancing search trees take~$O(\log{n})$ time. Each dynamic update requires a constant number of such operations, thus we spend~$O(\log{n})$ time per update.
\end{proof}

If we need to report the (approximate) size of the MIS, we simply report $\max\{ |S_1|,\ldots , |S_4|\}$, which is a 12-approximation. To output an (approximate) maximum independent set, we can simply choose a largest solution out of $S_1,\ldots,S_4$ and output all disks in the corresponding search tree~$T_i$ in time linear in the number of disks in this solution. Thus, Lemmata~\ref{lem:unitdisk-approx} and~\ref{lem:unitdisk-dynamic} together show that our dynamic data structure can handle dynamic changes in worst-case polylogarithmic update time, and report a solution in optimal output-sensitive time.

\theoremUnitDisk*

To explicitly maintain an independent set $S$ of size $\Omega(|OPT|)$ at all times, we can use the MIX function for unit disks~\cite{CardinalIK21} to (smoothly) switch between the sets $S_1,\ldots , S_4$. In particular, $S$ is a subset of $S_1\cup \ldots \cup S_4$, and $|S|\geq \Omega(\max\{ |S_1|,\ldots , |S_4|\})$ by Lemma~\ref{lem:mix}. Using the MIX function for unit disks~\cite{CardinalIK21}, we can hence explicitly maintain a constant-factor approximation of a MIS.

\section{Fat Objects of Comparable Size in Higher Dimensions}
\label{sec:fat}
Our algorithm to maintain an constant-factor approximation of a MIS of unit-disks readily extends to maintaining such an MIS approximation for fat objects of comparable size in any constant dimension~$d$. Remember that the size of a (fat) object is determined by the side length of its smallest enclosing (axis-aligned) hypercube. We define fat objects to be of comparable size, if the side length of their smallest enclosing (axis-aligned) hypercube is between real values $r_1$ and $r_2$.

\theoremFat*
\begin{proof}
    Similar to the unit-disk case, we define $2^d$ $d$-dimensional shifted (axis-aligned and square) grids~$G_1,\ldots,G_{2^d}$ with side length~$2\cdot r_2$: One base grid~$G_1$ and $2^d-1$ grids that (distinctly) shift the base grid in (at least one of) the $d$-dimensions by $r_2$.

    Since each object~$o$ has a size of at most~$r_2$, and grid lines defined by the union of all grids are at distance~$r_2$ from one another in every dimension, there is a grid cell in one of the grids that contains~$o$. By the pigeonhole principle, one grid~$G_i$ must therefore contain at least~$2^{-d}$ of all objects, and hence the same fraction of a MIS (analogously to Lemma~\ref{lem:gridpartitioning}).

    Furthermore, as the objects are of size at least~$r_1$, there is some constant~$c$, for which it holds that no more than~$c$ fat objects fit in a single grid cell, and observe that $c > \lfloor(\frac{r_2}{r_1})^d \rfloor$.  
    Following the unit-disk algorithm, we take a single object per grid cell in our independent set~$S_i$ of a grid~$G_i$, and hence each~$S_i$ is of size at most $\frac{1}{c}$ times the size of the MIS in~$G_i$.

    Since each~$S_i$ maintain a $c$-approximation for the MIS of the set of disks in~$G_i$, and at least one of the~$2^d$ grids holds a $2^d$-approximation of the global MIS, we get a $C$-approximation of MIS, with $C = c\cdot 2^d$ (analogously to Lemma~\ref{lem:unitdisk-approx}). 

    We again use self-balancing trees, in which we store the identifiers of the objects. We find the grid cell that contains an object by rounding the coordinates of the center point of the bounding hypercube of each object, analogously to the unit-disk case: Insertions, deletions, and reporting are handled exactly as in the unit-disk case, and hence we can prove, similarly to Lemma~\ref{lem:unitdisk-dynamic}, that insertions and deletions are handled in $O(\log{n})$ time and reporting is done in time linear in the size of the reported set~$S_i$.
\end{proof}

As in the unit-disk case, we can use the MIX function~\cite{CardinalIK21} (which works for fat objects) to switch between the independent sets~$S_1,\ldots,S_{2^d}$ of the individual grids, and hence explicitly maintain an approximate MIS at all times. We know by Lemma~\ref{lem:mix} that this results in an independent set of size~$\Omega(\max\{|S_1|,\ldots,|S_{2^d}|\})$.

\section{Disks of Arbitrary Radii in the Plane}
\label{sec:disks}

In this section, we study the DGMIS problem for a set of disks of arbitrary radii. The general idea of our new data structure is to break the set of disks $\mathcal{D}$ into subsets of disks of comparable radius. 
We will use several instances of the shifted grids $G^i_1,\ldots, G^i_4$, as we used in the unit disk case, where the grid cells have side length $3^i$, and are shifted by $\frac{3^{i}}{2}$, for $i\in \mathbb{Z}$. We say that the grids~$G^i_1,\ldots,G^i_4$ form the set~$\mathcal{G}_i$. In Section~\ref{sec:hierarchical}, we explain how hierarchical grids can be used for computing a constant-factor approximation for static instances.
Then, in Section~\ref{sec:dynamization}, we make several changes in the static data structures, to support efficient updates, while maintaining a constant factor approximation. In Section~\ref{sec:DDD}, we describe cell location data structures for our hierarchical grids and a hierarchical farthest neighbor data structure. Finally, in Section~\ref{sec:dynamic}, we stitch all these ingredients together to show how to maintain a constant-factor approximate maximum independent set in a fully dynamic setting, with expected amortized polylogarithmic update time. 

\subsection{Static Hierarchical Data Structures}
\label{sec:hierarchical}

\paragraph{Dividing Disks over Buckets.} In the grids of set~$\mathcal{G}_i$ we store disks with radius~$r$, where 
$\frac{3^{i-1}}{4} < r \leq \frac{3^{i}}{4}$. We refer to the data structures associated with one value $i$ as the \emph{bucket}~$i$. Compared to the unit disk case, where we considered only disks of radius $\frac{1}{4}$ times the side length of the grid cells, we now have to deal with disks of varying sizes even in one set~$\mathcal{G}_i$ of shifted grids. However, every disk is still completely inside at least one grid cell. To see this, observe that no two vertical or two horizontal grid lines in one grid of bucket $i$ can intersect a single disk with a radius lying in the range~$(\frac{3^{i-1}}{4}, \frac{3^{i}}{4}]$. Indeed, such disks have a diameter at most $\frac{3^{i}}{2}$, while grid lines are at least $3^i$ apart.

Furthermore, our choice for side length $3^i$ for bucket $i$ was not arbitrary: Consider also adjacent bucket $i-1$ and observe that each cell $c$ of grid~$G_1^i$ is further subdivided into nine cells of grid~$G_1^{i-1}$, in a $3\times 3$ formation. We say that $c$ is \emph{aligned} with the nine cells in bucket $i-1$. We define the same parent-child relations as in a quadtree: If a grid cell~$c$ in a lower bucket is inside a cell~$c_p$ of an adjacent higher bucket, we say that~$c$ is a child (cell) of~$c_p$, or that $c_p$ is the parent (cell) of~$c$. In general, we write $c_1\prec c_2$ if cell $c_1$ is a descendant of cell $c_2$; $c_1\preceq c_2$ if equality is allowed. We call the resulting structure a \emph{nonatree}, and we will refer to the nonatree that relates all grids $G^j_1$ as $N_1$. In Figure~\ref{fig:nonatree}a we illustrate the grids of two consecutive buckets in a nonatree.

Crucially, all grids $G^j_2$ also align, and the same holds for $G^j_3$ and $G^j_4$. This happens because horizontally and vertically, grid cells are subdivided into an odd number of cells (three in our case), and the shifted grids are displaced by half the side length of the grid cells. Thus, for $G^j_2$ and $G^j_4$, the horizontal shift in buckets~$i$ and~$i-1$ ensures that every third vertical grid line of bucket~$i-1$ aligns with a vertical grid line of bucket~$i$. The exact same happens for the horizontal grid lines of $G^j_3$ and $G^j_4$, due to the vertical shift. Thus, the horizontally shifted grids also form a nonatree~$N_2$, and similarly, we define $N_3$ and~$N_4$. 

For each bucket~$i$, we maintain the four self-balancing search trees.
Let $\mathcal{D}_i \subseteq \mathcal{D}$ be the subset of disks stored in~$\mathcal{G}_i$ and let $S_1,\ldots ,S_4$ be an independent set in $G^i_1,\ldots ,G^i_4$, then we maintain in $T^i_{\mathcal{D}}$ all disks in $\mathcal{D}_i$ and in $T^i_1,\ldots,T^i_4$ the disks in $S_1,\ldots ,S_4$, similar to the unit disk case. 

\begin{figure}
    \centering
    \includegraphics{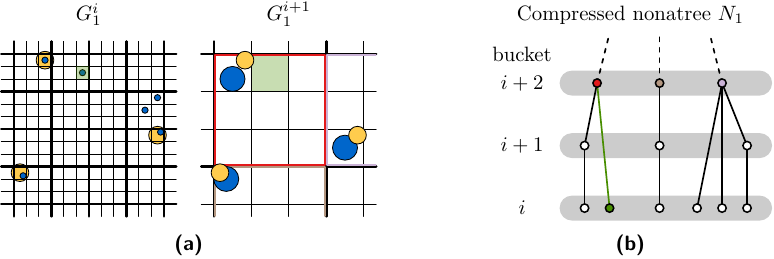}
    \caption{\textbf{\textsf{(a)}} Two compatible grids in buckets $i$ and $i+1$, with (blue) disks of~$D$ in relevant cells. In particular, the green cell in $G_1^i$ is relevant, but its (green) parent cell in~$G_1^{i+1}$ is not. Three (yellow) obstacle disks of~$G_1^i$ are drawn in both grids. Only one blue disk in~$G_1^{i+1}$ is disjoint from an obstacle, and can be chosen in the greedy bottom-up strategy. \textbf{\textsf{(b)}} Part of the compressed nonatree~$N_1$ corresponding to \textbf{\textsf{(a)}}: The colored nodes of bucket~$i+2$ correspond to colored squares in~\textbf{\textsf{(a)}} of the same color. Because the green cell in~$G_1^{i+1}$ is not relevant, and does not have relevant children in two subtrees, it is not represented in~$N_1$. Instead, the green node, corresponding to the green relevant cell in~$G_1^i$, directly connects to an ancestor in bucket~$i+2$ (by the green edge).}
    \label{fig:nonatree}
\end{figure}

\paragraph{Approximating a Maximum Independent Set.} We will now use the data structures to compute an approximate MIS for disks with arbitrary radii. Note that, we defined buckets for $i\in \mathbb{Z}$, but we will use only those buckets that store any disks, which we call \emph{relevant} buckets. Within these buckets, we call grid cells that contain disks the \emph{relevant} grid cells. Figure~\ref{fig:nonatree} illustrates the concepts introduced in this paragraph and the upcoming paragraphs.

Let $B$ be the sequence of relevant buckets, ordered on their parameter~$i$. To compute a solution, we will consider the buckets in~$B$ in ascending order, starting from the lowest bucket, which holds the smallest disk, and has grids with the smallest side length, up to the highest bucket with the largest disks, and largest side lengths. We follow a greedy bottom-up strategy for finding a constant-factor approximation of an MIS of disks. To prevent computational overhead in this approach, our nonatrees are \emph{compressed}, similar to compressed quadtrees~\cite[Chapter~2]{har2011geometric}: Each nonatree consists of a root cell, all relevant grid cells, and all cells that have relevant grid cells in at least two subtrees. As such, each (non-root) internal cell of our nonatrees either contains a disk, or merges at least two subtrees that contain disks, and hence the total number of cells in a compressed nonatree is linear in the number of disks it stores, which is upper bounded by~$O(n)$. 

Specifically, two high-level steps can be distinguished in our approach:

\begin{enumerate}
    \item In the lowest relevant bucket, we simply select an arbitrary disk from each relevant grid cell. In other relevant buckets, we consider for each grid cell~$c\in G^i_k$ the subdivision of $c$ in $G^{j}_k$ in the preceding relevant bucket~$j<i$. We try to combine the independent set from the relevant child(ren) of~$c$ with at most one additional disk in $c$. To communicate upwards which disks has been included in our independent set, we use \emph{obstacle disks} (these are not necessarily input disks; see the next step). Once all relevant cells have been handled, we output the largest independent set among the four sets computed for the shifted nonatrees $N_1,\ldots, N_4$. This produces a constant-factor approximation, as shown in Lemmata~\ref{lem:gridpartitioning2}--\ref{lem:greedy}.
    %and Lemma~\ref{lem:generaldisk-cellpacking-approx}.
    \item The obstacle disk in the previous step may cover more area than the disks in the independent set of the children of~$c$. Hence, we consider computing the obstacle disk only for independent sets originating from a single child cell. In this case, we choose as the obstacle the smallest disk covering the contributing child cell in question. The obstacle will then be of comparable size to that child cell, and hence also comparable to the contributed disk, intersecting at most a constant number of disks in the parent cell~$c$. Otherwise, if the independent set of the children originates from more than one child, we simply do not add a disk from~$c$, even if that may be possible.
    Lemmata~\ref{lem:skip-cells} and~\ref{lem:obstacle-packing} show that we still obtain a constant-factor approximate MIS under these constraints.
\end{enumerate}

We will now elaborate on the high-level steps, and provide a sequence of lemmata that can be combined to prove the approximation ratio of the computed independent set.

In the first step, we deviate from an optimal solution in three ways: We follow a greedy bottom-up approach, we take at most one disk per grid cell, and we do not combine the solutions of the shifted nonatrees. Focusing on the latter concern first, we extend Lemma~\ref{lem:gridpartitioning} to prove the same bound for our shifted nonatrees. Before we can prove this lemma, we first define the intersection between a disk and a nonatree, as follows. We say that a disk~$d$ intersects (the grid lines of) a nonatree~$N_k$, if and only if its radius~$r_d$ is in the range~$(\frac{3^{i-1}}{4}, \frac{3^{i}}{4}]$ and it intersects grid lines of~$G^i_k$.

\begin{lemma}\label{lem:gridpartitioning2}
    For a set~$S$ of disks in~$\mathbb{R}^2$, the grid lines of at least one nonatree, out of the shifted nonatrees $N_1,\ldots, N_4$, do not intersect at least $|S|/4$ disks.
\end{lemma}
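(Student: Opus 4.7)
The plan is to reduce the statement to the one-bucket (single-scale) analogue of Lemma~\ref{lem:gridpartitioning} and then apply pigeonhole. The only subtle point is that the definition just above the lemma restricts how a disk can intersect $N_k$: a disk $d$ of radius $r_d \in (\tfrac{3^{i-1}}{4},\tfrac{3^i}{4}]$ is said to intersect $N_k$ only if it crosses a grid line of the single level $G^i_k$. So we never have to reason about grid lines at levels other than the disk's own bucket.

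First I would fix an arbitrary disk $d\in S$ of radius $r_d\in (\tfrac{3^{i-1}}{4},\tfrac{3^i}{4}]$, so $d$ belongs to bucket $i$ and has diameter at most $3^i/2$. The four shifted grids $G^i_1,\ldots,G^i_4$ of bucket $i$ have side length $3^i$, and their vertical (resp.\ horizontal) lines come in two families: those at $x=3^i j$ and those at $x=3^i j + 3^i/2$ (analogously for $y$). Any two lines from different families lie at distance at least $3^i/2$ apart, which exceeds the diameter of $d$. Hence $d$'s interior cannot meet two lines from different families: there is a vertical strip of width $3^i$, bounded by vertical lines of some single family, that contains $d$, and similarly a horizontal strip. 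The intersection of these two strips is a cell of exactly one of the four grids $G^i_k$, so $d$ lies entirely inside a cell of that $G^i_k$ and therefore does not intersect any grid line of $G^i_k$. By the definition recalled above, $d$ does not intersect the grid lines of $N_k$ at all.

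For each disk $d\in S$, assign a label $k(d)\in\{1,2,3,4\}$ such that $d$ is contained in a cell of $G^{i(d)}_{k(d)}$, where $i(d)$ is the bucket of $d$ (if several labels work, pick one arbitrarily). By the pigeonhole principle, at least one label $k^\star\in\{1,2,3,4\}$ is assigned to at least $|S|/4$ disks, and every such disk is disjoint from the grid lines of $N_{k^\star}$. This proves the lemma.

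The main (mild) obstacle is just keeping the bookkeeping straight across buckets: one must notice that the restriction in the definition of ``a disk intersects $N_k$'' decouples the buckets completely, so the argument is a per-bucket copy of Lemma~\ref{lem:gridpartitioning} followed by a single global pigeonhole, rather than anything genuinely new about the tree structure.
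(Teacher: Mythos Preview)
Your proof is correct. It is, however, organized differently from the paper's own argument, and the comparison is worth a word.

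You argue per disk: each disk $d$ in bucket $i$ has diameter at most $3^i/2$, the vertical (resp.\ horizontal) lines of the four grids $G^i_1,\ldots,G^i_4$ come in two families offset by $3^i/2$, so $d$ fits in a vertical strip of one family and a horizontal strip of one family, hence in a cell of exactly one $G^i_k$. Then a single pigeonhole over $k\in\{1,2,3,4\}$ finishes. This is literally the proof of Lemma~\ref{lem:gridpartitioning} replayed bucket by bucket, which is exactly what the definition of ``$d$ intersects $N_k$'' (restricting to the single level $G^{i(d)}_k$) was set up to allow.

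The paper instead argues by a case split on the set $D_1\subseteq S$ of disks that \emph{do} intersect $N_1$: if $|D_1|<3|S|/4$ then $N_1$ already works; otherwise partition $D_1$ into disks crossing only vertical lines of $N_1$, only horizontal lines, or both, observe that these three parts avoid $N_2$, $N_3$, $N_4$ respectively, and take the largest part. Both arguments rest on the same geometric fact (a disk of diameter $\le 3^i/2$ cannot cross lines from both offset families), but your route is more direct and makes the reduction to the single-scale Lemma~\ref{lem:gridpartitioning} explicit, while the paper's route avoids assigning a grid to every disk and instead counts. Neither approach yields a sharper constant; yours is simply cleaner here.
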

\begin{proof}
    Consider the subset $D_1\subseteq S$ of disks intersecting $N_1$, If $|D_1|<\frac{3|S|}{4}$ then at least $|S|/4$ disks are not intersected by~$N_1$, and the lemma trivially holds.

    Now assume that $|D_1|\geq\frac{3|S|}{4}$ and consider the partitioning of $D_1$ into $D_2\subseteq D_1$, $D_3\subseteq D_1$, and $D_4\subseteq D_1$ which respectively intersect only vertical lines, only horizontal lines, or both vertical and horizontal lines of $N_1$. By definition of the grids that make up the nonatrees $N_2,N_3,N_4$, the disks in $D_2$ do not intersect~$N_2$, and similarly $D_3$ and $D_4$ do not intersect $N_3$ and $N_4$, respectively. Let $D^*$ be the largest set out of $D_2$, $D_3$, and $D_4$. Since $|D_1| = |D_2| + |D_3| + |D_4|$ and $|D_1|\geq\frac{3|S|}{4}$, $D^*$ must have size at least $|S|/4$. Hence, the nonatree corresponding to $D^*$ does not intersect at least $|S|/4$ disks in~$S$.
\end{proof}

Similarly, we can generalize Lemma~\ref{lem:unitdisk-approx} to work for the newly defined grids in~$\mathcal{G}_i$, that is,
for disks with different radii in a certain range. We show that taking only a single disk per grid cell into our solution is a $35$-approximation of a MIS.

\begin{lemma}\label{lem:generaldisk-cellpacking-approx}
    If $S$ is a maximum independent set of the disks in a grid cell of a nonatree~$N_k$, then $|S|\leq 35$. %$S$ has size at most $35$.
\end{lemma}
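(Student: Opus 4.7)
The plan is to bound $|S|$ by combining an elementary area-packing inequality with a centre-distance refinement. Let the cell $c$ in $G^i_k$ have side length $s = 3^i$. By the bucketing rule in Section~\ref{sec:hierarchical}, every disk $d \in S$ has radius $r_d$ satisfying $s/12 < r_d \le s/4$, lies entirely inside $c$, and is interior-disjoint from every other disk of $S$.

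First I would apply the standard area argument. Since the disks of $S$ are pairwise interior-disjoint and all fit inside $c$, their areas sum to at most the area of $c$, giving $\sum_{d \in S} \pi r_d^2 \le s^2$. Plugging in the strict lower bound $r_d > s/12$ yields $|S| \cdot \pi (s/12)^2 < s^2$, i.e., $|S| < 144/\pi \approx 45.84$, so as a warm-up we already get $|S| \le 45$. To tighten this to $35$, I would add two geometric constraints coming from the positions of the centres: (i) because each disk fits inside $c$, its centre lies at distance at least $r_d > s/12$ from the boundary of $c$, so all centres are contained in an axis-aligned sub-square of side at most $s - 2\cdot s/12 = 5s/6$; and (ii) for any two disks $d_i, d_j \in S$, the centres are at Euclidean distance at least $r_i + r_j > s/6$. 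The problem thus reduces to bounding the number of points in a $(5s/6)\times(5s/6)$ square whose pairwise distances strictly exceed $s/6$. Overlaying a sufficiently fine auxiliary grid whose cells have diameter strictly less than $s/6$ (so each contains at most one centre), together with a boundary-aware count that exploits the strictness of both inequalities, pushes the bound down to~$35$.

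The main obstacle is squeezing the constant all the way to $35$ rather than stopping at the $45$ that falls out of the area inequality. The extremal configuration that would meet $36$ is the $6\times 6$ square grid of disks of radius exactly $s/12$ with centres on the lattice $(s/12 + j\cdot s/6,\, s/12 + k\cdot s/6)$, $0 \le j,k \le 5$; but this configuration is forbidden by the strict inequality $r_d > s/12$ in the bucketing rule, so any feasible configuration must strictly relax the grid spacing or the boundary clearance. Turning this qualitative obstruction into a quantitative bound of $35$ is the delicate step: one must simultaneously use the area estimate and the centre-distance estimate, since neither argument alone is tight at the extremal configuration.
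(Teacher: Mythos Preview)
Your setup is correct: the area bound gives $|S|\le 45$, the reduction to centres lying in a $(5s/6)\times(5s/6)$ square at pairwise distance strictly greater than $s/6$ is valid, and you correctly spot that the configuration threatening $|S|=36$ is the $6\times 6$ grid at radius exactly $s/12$, which the strict inequality forbids. But you never actually close the gap. The auxiliary-grid pigeonhole you describe---cells of diameter less than $s/6$, hence side at most $s/(6\sqrt 2)$---needs at least $\lceil 5\sqrt 2\rceil^2=64$ cells to cover the $(5s/6)$-square, so it only yields $|S|\le 64$, weaker than your area bound. No unspecified ``boundary-aware count'' will bridge $45$ (let alone $64$) down to $35$; you effectively concede this in your final paragraph when you call the step ``delicate'' without carrying it out.

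The paper does not attempt a self-contained argument here. It invokes a known extremal result from the circle-packing literature (Wengerodt~1987): the maximum common radius for $36$ congruent disks packed in a square of side $s$ is exactly $s/12$, realised by the $6\times 6$ grid. Granting this, if $36$ disks of radii strictly greater than $s/12$ fit disjointly in the cell, shrink each to the minimum radius present; one obtains $36$ congruent disks of radius strictly greater than $s/12$ in the square, contradicting Wengerodt. Your centre-packing reformulation is in fact equivalent to this congruent-disk problem, so the missing ingredient in your proof is precisely that packing theorem---establishing $|S|\le 35$ from scratch would amount to reproving it.
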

\begin{proof}
    Since we store disks of smaller radius compared to the unit disk case, the largest independent set inside a single grid cell increases from three to $35$: In a bucket~$i$ the disks have radius $r>\frac{3^{i-1}}{4}$ and the grid cells have side length $3^i$. The grid cells are therefore just too small to fit $3\cdot 4=12$ times the smallest disk radius horizontally or vertically. Hence, we cannot fit a grid of $6\times 6 = 36$ disjoint disks in one grid cell (which is the tightest packing for a square with side length $3^i$ and disks with radius $r=\frac{3^{i-1}}{4}$~\cite{wengerodt1987dichteste}; see also~\cite{Szabo2007}). 
\end{proof}

To round out the first step, we prove that our greedy strategy contributes at most a factor~$5$ to our approximation factor.

\begin{lemma}\label{lem:greedy}
    Let~$S$ be a maximum independent set of the disks in a nonatree~$N_k$ such that each grid cell in~$N_k$ contributes at most one disk. An algorithm that considers the grid cells in $N_k$ in bottom-up fashion, and computes an independent set~$S'$ by greedily adding at most one non-overlapping disk per grid cell to~$S'$, is a $5$-approximation of~$S$.
\end{lemma}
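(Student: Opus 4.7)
The plan is to prove the bound via a standard charging argument. Define a map $\phi\colon S\to S'$ as follows: for each $d\in S$, let $c_d$ denote the cell of $N_k$ to which $d$ is assigned, and set $\phi(d):=g_{c_d}$ if the greedy algorithm placed some disk $g_{c_d}$ at $c_d$ (not necessarily $d$ itself). Otherwise, the greedy algorithm considered $d$ as a candidate at $c_d$ and rejected it, which can only happen because $d$ interior-intersects some earlier-added disk $d^\star\in S'$ at a proper descendant of $c_d$; in this case set $\phi(d):=d^\star$. The map is well defined because greedy proceeds bottom-up, so $d^\star$ is already in $S'$ when cell $c_d$ is processed.

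Next I bound $|\phi^{-1}(g)|$ for a fixed $g\in S'$ at cell $c_g$ in bucket $b_g$, so $r_g\in(3^{b_g-1}/4,\,3^{b_g}/4]$. The preimages split into (i) \emph{same-cell} charges from disks $d\in S$ with $c_d=c_g$, of which there is at most one by the one-per-cell rule on $S$; and (ii) \emph{ancestor} charges from disks $d\in S$ with $c_d\succ c_g$ whose disk interior-intersects $g$. By the nonatree structure, the ancestor cells of $c_g$ lie in strictly higher buckets, one per bucket; hence each ancestor-charge disk has radius strictly greater than $3^{b_g}/4\geq r_g$, and these disks are pairwise interior-disjoint since they all lie in $S$.

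The core geometric step is to bound the ancestor charges by $4$. Partition them by bucket: at most one arises from bucket $b_g+1$ (radius in $(r_g,3r_g]$); every other ancestor charge comes from a bucket $\geq b_g+2$, so its radius strictly exceeds $3r_g$. For any two such ``large'' ancestor disks $D_i,D_j$ (radii $r_i,r_j>3r_g$) both intersecting $g$ and interior-disjoint from each other, the bounds $|OO_i|\leq r_g+r_i$, $|OO_j|\leq r_g+r_j$, and $|O_iO_j|\geq r_i+r_j$ plugged into the law of cosines at the center $O$ of $g$ give $\cos\alpha\leq -1/8$, so the angle at $O$ between $D_i$ and $D_j$ satisfies $\alpha\geq \arccos(-1/8)>97^\circ$. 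Therefore at most three ``large'' ancestor disks fit around $g$, giving at most $1+3=4$ ancestor charges and $|\phi^{-1}(g)|\leq 5$; hence $|S|\leq 5|S'|$, proving the lemma. The main obstacle is the law-of-cosines estimate, which crucially exploits the factor-$3$ radius gap between consecutive buckets of the nonatree; without this gap the packing could accommodate up to six pairwise disjoint disks of radius $\geq r_g$ intersecting $g$, and only a weaker approximation would follow.
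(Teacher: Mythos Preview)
Your proof is correct and follows the same charging strategy as the paper, but establishes the key packing bound differently. The paper's proof is a two-line invocation of the standard fact that any disk intersects at most five pairwise-disjoint disks of at-least-its-own radius; from this, the $5$-approximation follows directly (the paper's handling of the same-cell case is admittedly terse). You instead make the charging map $\phi$ explicit and exploit the nonatree structure: since the ancestors of $c_g$ contribute at most one $S$-disk per bucket, you peel off the single bucket-$(b_g{+}1)$ charge and then use the factor-$3$ radius gap together with the law-of-cosines estimate to cap the remaining ancestor charges at three. Your route is more self-contained and makes the same-cell case crisp, while the paper's is shorter but relies on the packing fact as a black box. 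One minor quibble: your closing remark that the factor-$3$ gap is ``crucial'' overstates things---the paper's argument does not use the gap at all and still obtains factor $5$---so the gap is crucial only for \emph{your} particular derivation, not for the result itself.
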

\begin{proof}
    Every disk~$d$ can intersect at most 5 pairwise disjoint disks, that have a radius at least as large as the radius of~$d$. Thus, a greedily selected disk $d\in S' \setminus S$ can overlap with at most five larger disks in $S$. These five disks are necessarily located in higher buckets (or one disk can be located in the same cell as~$d$), since all grid cells of one bucket in~$N_k$ are disjoint, and lower buckets contain smaller disks. As such, the greedy algorithm will not find these five disks before considering $d$, and cannot add them after greedily adding $d$ to $S'$. Thus, $S'$ is a $5$-approximation of $S$.
\end{proof}

For the second step, we use several data structures and algorithmic steps that help us achieve polylogarithmic update and query times in the dynamic setting. For now we analyze solely the approximation factor incurred by these techniques. We start by analyzing the approximation ratio for not taking any disk from a cell~$c$, if several of its children contribute disks to the computed independent set.

\begin{lemma}\label{lem:skip-cells}
    Let~$S$ be a maximum independent set of the disks in a nonatree~$N_k$ such that each grid cell in~$N_k$ contributes at most one disk. The independent set~$S'\subset S$, that contains all disks in $S$ except disks from cells that have two relevant child cells, is a 2-approximation of~$S$.
\end{lemma}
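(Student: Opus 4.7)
The plan is to reduce the claim to an elementary tree-counting fact. First I would define the forest $T_S$ induced by $S$ inside the compressed nonatree $N_k$: the nodes of $T_S$ are the cells contributing a disk to $S$, and the parent of $c\in S$ in $T_S$ is its nearest proper ancestor in $S$ with respect to $N_k$ (if any). The key structural observation is that a cell $c\in S$ has its disk removed in forming $S'$ exactly when $c$ has at least two children in $T_S$; equivalently, when the cells of $S\cap \mathrm{desc}_{N_k}(c)$ are spread over two or more distinct child subtrees of $c$ in the compressed nonatree. This matches the intended meaning of ``two relevant child cells'' in view of the algorithmic description preceding the lemma (``the independent set of the children originates from more than one child''), so $A := S\setminus S'$ is precisely the set of branching internal nodes of $T_S$.

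Next I would invoke the standard tree identity: in any rooted tree with $L$ leaves, $I_1$ unary internal nodes, and $I_2$ branching (at least two children) internal nodes, double-counting edges gives $L+I_1+I_2-1 \geq I_1 + 2I_2$, hence $I_2 \leq L-1$. Summing this over the trees in the forest $T_S$ yields
$$|A| \;=\; |I_2^{T_S}| \;\leq\; |L^{T_S}| - 1 \;\leq\; |L^{T_S}| + |I_1^{T_S}| - 1 \;=\; |S\setminus A| - 1 \;=\; |S'| - 1.$$
Rearranging, $|S| = |A| + |S'| < 2|S'|$, so $|S'| > |S|/2$ and $S'$ is a $2$-approximation of $S$.

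I do not expect any serious obstacle beyond carefully establishing the correspondence between the discarded cells and the $T_S$-branching nodes; once this is in place, the bound is immediate from ``branching internal nodes $\leq$ leaves $-1$''. A minor subtlety is the forest case (when the root of $N_k$ is not in $S$), which is handled by applying the per-tree inequality to each component — the empty-$S$ case is trivial. One pitfall to avoid is a weaker reading where one would count cells of $S$ that are branching in the full compressed nonatree regardless of whether those branches actually carry any $S$-descendants; under that reading the ratio $2$ can be exceeded, so the wording must be read in the $T_S$-sense forced by the algorithmic motivation.
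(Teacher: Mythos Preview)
Your proposal is correct and follows essentially the same approach as the paper: both contract the nonatree to a tree $T'$ (your $T_S$) whose nodes are exactly the cells contributing to $S$, identify the removed cells with the branching internal nodes, and then apply an elementary counting argument. The paper phrases the count by attaching an extra leaf to each unary internal node so that the leaves of the augmented tree correspond to $S'$ and then uses ``leaves $> |T'|/2$'', whereas you use the equivalent identity $I_2\le L-1$ directly; the substance is the same.
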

\begin{proof}
    Consider the tree structure~$T$ of nonatree~$N_k$. Every cell that is a leaf of~$T$ contributes its smallest disk to both~$S$ and~$S'$. Contract every edge of~$T$, that connects a cell that does not contribute a disk to~$S$, to its parent. The remaining structure~$T'$ is still a tree, and every node of the tree corresponds to a cell that contributes exactly one disk to~$S$, and hence $|S|=|T'|$. Internal nodes of~$T'$ either have two children, in which case they do not contribute a disk to~$S'$, or they have one child, in which case they do contribute a disk to~$S'$. 
    
    If we add an additional leaf to every internal node of~$T'$ that has only one child, then we get a tree~$T'_2$, where every internal node has at least two children, and every leaf corresponds to a disk in~$S'$: For internal nodes that contribute a disk, the newly added leaf corresponds to the contributed disk. Since every node has at least two children, the number of leaves of~$T'_2$ is strictly larger than~$|T'_2|/2$. It follows  that $|S'|>|T'_2|/2$.
    
    Finally, the size of $T'_2$ is at least as large as $T'$, meaning $|T'_2| \geq |T'|$. The approximation ratio of $S'$ compared to $S$ is then $\frac{|S'|}{|S|} > \frac{|T'_2|/2}{|T'|} \geq \frac{1}{2}$.
\end{proof}

Next we consider the obstacle disk that we compute when only one child cell contributes disks to the independent set. Before we elaborate on the approximation ratio of this algorithmic procedure, we first explain the steps in more detail.

For the leaf cells of a nonatree, it is unnecessary to compute an obstacle disk, since these cells contribute at most a single disk, which can act as its own obstacle disk. For a cell~$c$ that is an internal node of the nonatree, with at most one relevant child that contributes to the independent set, we have two options for the obstacle disk of~$c$. We use the obstacle disk of the child cell to determine whether there is a disk in~$c$ disjoint from the child obstacle, to either find a disjoint disk~$d$ or not. If we find such a disk~$d$, we compute a new obstacle disk for $c$, by taking the smallest enclosing disk of~$c$. If there is no such disk~$d$, then we use the obstacle disk of the child as the obstacle disk for~$c$. This ensures that the obstacle disk does not grow unnecessarily, which is relevant when proving the following approximation factor.

\begin{lemma}\label{lem:obstacle-packing}
    Let~$c$ be a cell in bucket~$i$ of nonatree~$N_k$ that contributes a disk to an independent set. The computed obstacle disk~$d_o$ can overlap with no more than $23$ pairwise disjoint disks in higher buckets.
\end{lemma}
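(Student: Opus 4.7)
The plan is to split into two cases depending on whether $c$ is a leaf or an internal cell of the nonatree. In the leaf case, I would observe that by the paragraph preceding the lemma the obstacle $d_o$ equals the single disk that $c$ contributes, so $d_o$ has radius at most $3^i/4$; every disk in a higher bucket has radius strictly greater than $3^i/4$. Hence the claim reduces to the standard fact already exploited in the proof of Lemma~\ref{lem:greedy}: any disk is met by at most five pairwise disjoint disks of at least its own radius. This dispatches the leaf case with the (much) weaker bound $5\le 23$.

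The main case is that $c$ is internal, so that $d_o$ is the smallest enclosing disk of $c$, centered at the center of $c$ and of radius $R=\frac{3^i\sqrt{2}}{2}$. Set $r:=3^i/4$; every disk in a higher bucket has radius $r_i>r$. My strategy will be a shrinking-and-packing reduction: starting from pairwise disjoint disks $D_1,\dots,D_k$ in higher buckets that all intersect $d_o$, I would replace each $D_i$ by a disk $D_i'\subseteq D_i$ of radius exactly $r$ that still intersects $d_o$. Since $D_i'\subseteq D_i$, the reduced family stays pairwise disjoint, and since the $D_i'$ all have common radius $r$, a clean area estimate becomes available.

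For the shrinking step, let $o_i$ be the center of $D_i$ and $o$ the center of $d_o$. If $|o_io|\le R+r$, take $o_i':=o_i$; otherwise slide $o_i'$ along the segment from $o_i$ toward $o$ by exactly $|o_io|-(R+r)$, and set $D_i':=B(o_i',r)$. By construction $|o_i'o|\le R+r$, so $D_i'$ meets $d_o$. Since the original $D_i$ intersects $d_o$ we have $|o_io|\le R+r_i$, so the slide distance is at most $r_i-r$; hence $|o_i'o_i|\le r_i-r$ and $D_i'\subseteq B(o_i,r_i)=D_i$. The hard part will be verifying this last inequality---that the slide distance never exceeds $r_i-r$---since it is the one place where the hypothesis ``$D_i$ intersects $d_o$'' is used in full strength.

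Finally, I would close with a standard packing bound. The $D_i'$ are pairwise disjoint disks of common radius $r$ whose centers all lie within distance $R+r$ of $o$, so each is contained in $B(o,R+2r)$. Comparing areas gives $k\,\pi r^{2}\le \pi(R+2r)^{2}$, i.e., $k\le (R/r+2)^{2}$. Plugging in $R/r=(3^i\sqrt{2}/2)/(3^i/4)=2\sqrt{2}$ yields $k\le (2\sqrt{2}+2)^{2}=12+8\sqrt{2}<23.32$, so $k\le 23$, as claimed.
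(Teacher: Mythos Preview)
Your proof is correct and follows essentially the same packing argument as the paper: shrink each intersecting disk to one of radius $3^i/4$ that stays inside the original and still meets $d_o$, then compare areas inside the concentric disk of radius $R+2r$, reaching the identical bound $(2+2\sqrt{2})^2\approx 23.31$. The only minor differences are that the paper performs the shrinking by a homothety centered at a point of $D_i\cap d_o$ (so containment in $D_i$ and intersection with $d_o$ are immediate) rather than your concentric-shrink-then-slide, and it does not separate out the leaf case since the enclosing-disk bound already dominates it.
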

\begin{proof}
    Let~$c$ be an obstacle cell at level $i\in \mathbb{Z}$. Then $c$ has side length $3^i$, and the disk associated with $c$ has radii in the range $(\frac14\, 3^{i-1},\frac14\, 3^i]$. 
    Obstacle disk~$d_o$ therefore has a radius of $r=\frac{\sqrt{2}}{2}\cdot 3^i$, and $\mathrm{area}(d_o)=\pi r^2=\frac{\pi}{2} \, 3^{2i}$ (see Figure~\ref{fig:obstacle-packing}a). 
 
    The radius of any disk $d\in \mathcal{D}_k$ in a higher bucket is at least $\frac14\, 3^i$. Let $\mathcal{A}$ be the set of pairwise disjoint disks in $\mathcal{D}_k$ in higher buckets that intersect $d_o$. Scale down every disk $d\in \mathcal{D}$ from a point in $d\cap d_o$ to a disk $\widehat{d}$ of radius $\widehat{r}=\frac14\, 3^{i}$; and let $\widehat{\mathcal{A}}$ be the set of resulting disks. Note that $|\widehat{\mathcal{A}}|=|\mathcal{A}|$, the disks in $\widehat{\mathcal{A}}$ are pairwise disjoint, and they all intersect $d_o$. Let $D$ be a disk concentric with $d_o$, of radius $R=3^{i}/\sqrt{2}+\frac12\, 3^{i}=\frac{1+\sqrt{2}}{2}\, 3^{i}$. 
    By the triangle inequality, $D$ contains all (scaled) disks in $\widehat{\mathcal{A}}$ (see Figure~\ref{fig:obstacle-packing}b). 
    Since the disks in $\widehat{\mathcal{A}}$ are pairwise disjoint, then $|\widehat{\mathcal{A}}|\cdot \pi\widehat{r}^2=\sum_{\widehat{d}\in\widehat{\mathcal{A}}} \mathrm{area}(\widehat{d}) \leq \mathrm{area}(D)=\pi R^2$. 
    This yields $|\mathcal{A}|=|\widehat{\mathcal{A}}| \leq \mathrm{area}(D)/\mathrm{area}(\widehat{d}) = R^2/\widehat{r}^2 = (2+2\sqrt{2})^2\approx 23.31$, as claimed.
\end{proof}

\begin{figure}
    \centering
    \includegraphics{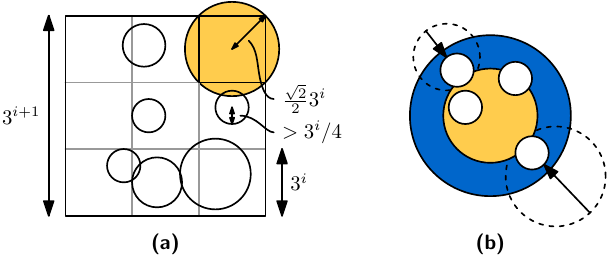}
    \caption{\textsf{\textbf{(a)}} A yellow obstacle disk for a cell in bucket~$i$ along with disks in bucket~$i+1$. The grid lines for bucket~$i$ are drawn in grey, except for the cell with the obstacle disk. \textsf{\textbf{(b)}} The dashed disks of radius larger than~$3^i/4$ are scaled down such that all white disks have radius~$3^i/4$ and intersect the yellow obstacle disk~$d_o$. All white disks are contained in the blue disk~$D$ with a radius $3^i/2$ larger than~$d_o$.
    }
    \label{fig:obstacle-packing}
\end{figure}

\begin{lemma}\label{lem:14700}
For a set of disks in the plane, one of our shifted nonatrees~$N_1,\ldots,N_4$ maintains an independent set of size $\Omega(|\mathrm{OPT}|)$, where $\mathrm{OPT}$ is a MIS.
%, and $C = 4\cdot 35 \cdot 5 \cdot 2 \cdot \frac{21}{2} = 14700$.
\end{lemma}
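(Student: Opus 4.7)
The plan is to chain Lemmata~\ref{lem:gridpartitioning2}--\ref{lem:obstacle-packing} into a sequence of constant-factor reductions, starting from an arbitrary maximum independent set $\mathrm{OPT}$ of $\mathcal{D}$ and ending at the set that our obstacle-based bottom-up pipeline actually maintains on one of the four shifted nonatrees. First, I would apply Lemma~\ref{lem:gridpartitioning2} to $\mathrm{OPT}$ to choose an index $k\in\{1,\ldots,4\}$ such that a subset $\mathrm{OPT}^\ast\subseteq\mathrm{OPT}$ of size at least $|\mathrm{OPT}|/4$ consists of disks whose interiors are crossed by no grid line of $N_k$. Each disk of $\mathrm{OPT}^\ast$ is then contained in a single cell of $N_k$, so it suffices to show that our pipeline on $N_k$ retains a constant fraction of $|\mathrm{OPT}^\ast|$.

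Next, because Lemma~\ref{lem:generaldisk-cellpacking-approx} bounds the number of disks of $\mathrm{OPT}^\ast$ in any cell of $N_k$ by $35$, picking one such disk per non-empty cell yields an ``at most one disk per cell'' independent set $M$ with $|M|\geq|\mathrm{OPT}^\ast|/35$. By Lemma~\ref{lem:greedy}, the idealized bottom-up greedy algorithm (one disk per cell, testing only against disks already selected) produces a set $S_g$ with $|S_g|\geq|M|/5$. Our actual pipeline departs from this idealized greedy in only two places: it skips any cell having more than one relevant child, which by Lemma~\ref{lem:skip-cells} costs at most another factor of $2$; and it substitutes \emph{obstacle disks} for the underlying contributed disks when testing conflicts.

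The main obstacle in the plan will be absorbing this second departure into a constant factor, since the first four reductions are just direct invocations of the preceding lemmas. I would set up a charging argument. Let $S_o$ be the independent set produced by the obstacle-based greedy. Every $d\in S_g\setminus S_o$ is rejected because some obstacle disk coming from a strictly lower-bucket descendant cell in $d$'s ancestor path intersects $d$, even though the underlying contributed disk does not; I charge $d$ to the contributed disk $d_o$ of that blocking obstacle. All disks charged to one $d_o$ lie in buckets strictly higher than $d_o$ and are pairwise disjoint (as a subset of $S_g$), so Lemma~\ref{lem:obstacle-packing} caps the multiplicity of each $d_o$ at $23$; since every obstacle corresponds to a disk of $S_o$, this gives $|S_g|\leq 24|S_o|$. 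Multiplying the five constant losses yields $|S_o|=\Omega(|\mathrm{OPT}|)$. The verification I would spend the most care on is that the blocking obstacle can be chosen consistently (e.g., as the lowest descendant obstacle in $d$'s chain) and that the resulting multiset of charges at each $d_o$ is exactly the kind of pairwise-disjoint higher-bucket family that Lemma~\ref{lem:obstacle-packing} controls.
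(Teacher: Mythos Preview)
Your proposal is essentially the paper's own argument: both chain Lemmata~\ref{lem:gridpartitioning2}, \ref{lem:generaldisk-cellpacking-approx}, \ref{lem:greedy}, \ref{lem:skip-cells}, and \ref{lem:obstacle-packing} in that order to accumulate a product of constant losses, arriving at $\Omega(|\mathrm{OPT}|)$ for one of the nonatrees. The only cosmetic difference is that the paper absorbs the obstacle step with an informal ``$25/2$'' factor, whereas you set up an explicit charging from the idealized greedy to $S_o$ to obtain a factor of $24$; just be aware that not every $d\in S_g\setminus S_o$ is literally ``rejected by an obstacle''---some sit in cells where $S_o$ selected a different disk, and some sit in skipped cells---so the clean comparison is between $S_o$ and the maximum one-per-cell independent set that already excludes cells with two or more relevant children (which you effectively get after applying Lemma~\ref{lem:skip-cells}), after which your charging via Lemma~\ref{lem:obstacle-packing} goes through exactly as you describe.
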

\begin{proof}
    By Lemma~\ref{lem:gridpartitioning2} we know that at least $|\mathrm{OPT}|/4$ disks of $\mathrm{OPT}$ are stored in one of the four nonatrees, say~$N_{\mathrm{OPT}}$. Lemma~\ref{lem:generaldisk-cellpacking-approx} tells us that at most 35 disks in $\mathrm{OPT}$ can be together in a single cell of such a nonatree. Since the maintained independent set takes at most a single disk from each cell, it is at least a $4\cdot 35 = 140$ approximation of~$\mathrm{OPT}$.
    
    By considering the cells in bottom-up fashion when constructing the independent set, Lemma~\ref{lem:greedy} shows that a $5$-approximation of the $140$-approximation will be found, leading to an approximation factor of $5\cdot 140 = 700$. Lemma~\ref{lem:skip-cells} allows us to remove those disks in cells that have two relevant child cells, to find a $2$-approximation of the independent set before removing the disks, leading to a $2\cdot 700 = 1400$ approximation. 
    
    Finally, we use an obstacle disk, instead of the actual disks in the independent set of a child cell to check for overlap with disks in the parent cell. Lemma~\ref{lem:obstacle-packing} tells us that we disregard at most 23 disks in higher buckets for overlapping with the obstacle. Since it is unclear whether these 23 disks are really overlapping with disks in the independent set of the child cell, and since the obstacle disk is computed only when a child contributes at least 2 disks to the independent set, this leads to an approximation factor of $\frac{25}{2}$. The maintained solution in~$N_{OPT}$ is hence a $\frac{25}{2}$-approximation of a $1400$-approximation.
\end{proof}

\subsection{Modifications to Support Dynamic Maintenance}\label{sec:dynamization}

In Section~\ref{sec:hierarchical}, we defined four hierarchical grids (nonatrees) $N_1,\ldots , N_4$, described a greedy algorithm that computes independent sets $S_1,\ldots , S_4$ that are consistent with the grids, and showed that a largest of the four independent sets is a constant-factor approximation of the MIS. 

In this section, we make several changes in the static data structures, to support efficient updates, while maintaining a constant-factor approximation. Then in Section~\ref{sec:dynamic}, we show that the modified data structures can be maintained dynamically in expected amortized polylogarithmic update time. We start with a summary of the  modifications:

\begin{itemize}
    \item \textbf{Sparsification.} We split each nonatree $N_i$, $i\in \{1,\ldots, 4\}$, into two trees $N_i^{\rm odd}$ and $N_i^{\rm even}$, one containing the odd levels and the other containing the even levels. As a result, the radii of disks at different (non-empty) levels differ by at least a factor of 3.   
    \item \textbf{Clearance.} For a disk $d$ of radius $r$, let $3d$ denote the concentric disk of radius $3r$. Recall that our greedy strategy adds disks to an independent set $S$ in a bottom-up traversal of a nonatree. When we add a disk $d\in \mathcal{D}$ to $S$, we require that we do not add any larger disk to $S$ that intersects $3d$. In particular, we will use obstacle disks of the form $3d'$, where $d'$ is the smallest enclosing disk of a cell.
    A simple volume argument shows that this modification still yields a constant-factor approximation. As a result, if a new disk is inserted, it intersects at most one larger disk in $S$, which simplifies the update operation in Section~\ref{sec:dynamic}.
    \item \textbf{Obstacle Disks and Obstacle Cells.} In Section~\ref{sec:hierarchical}, we defined obstacle disks for the disks in $S_k$. To support dynamic updates, we use slightly larger obstacle disks, to implement the clearance in our data structures. These obstacle disks are associated with cells of the nonatree $N_k$, which are called obstacle cells (true obstacles). Cells of the nonatree with two or more children are also considered as obstacle cells (merge obstacles), thus the obstacle cells decompose each nonatree into ascending paths. 
    \item \textbf{Barrier Disks.} The na\"{\i}ve approach for a dynamic update of the independent set $S_k$ in a nonatree $N_k$ would work as follows: When a new disk $d$ is inserted or deleted, we find a nonatree $N_k$ and a cell $c\in N_k$ associated with $d$; and then in an ascending path of $N_k$ from $c$ to the root, we re-compute the disks in $S_k$ associated with the cells.  Unfortunately, the height of the nonatree may be linear, and we cannot afford to traverse an ascending path from $c$ to the root. Instead, we run the greedy process only locally, on an ascending paths of $N_k$ between two cells $c_1\prec  c_2$ that contain disks $s_1,s_2\in S_k$, respectively. The greedy process guarantees that new disks added to $S_k$ are disjoint from any smaller disk in $S_k$, including $s_1$. However, the new disks might intersect the larger disk $s_2\in S_k$. In this case, we remove $s_2$ from $S$, keep it as a ''placeholder'' in a set $B_k$ of \emph{barrier disks}, and ensure that $S_k\cup B_k$ remains a dominating set of the disks of $\mathcal{D}$ contained in $N_k$. 
\end{itemize}

\paragraph{Sparsification.} Recall that for a set $\mathcal{D}$ of $n$ disks, $\mathcal{D}_i$ denoted the subset of disks of radius $r$, where $\frac{3^{i-1}}{4}<r\leq \frac{3^i}{4}$, for all $i\in \mathbb{Z}$. Let $N_1,\ldots , N_4$, be the four nonatrees defined in Section~\ref{sec:hierarchical}. For every $k\in
\{1,\ldots , 4\}$, we create two copies of $N_k$, denoted  $N_k^{\rm even}$ and $N_k^{\rm odd}$. For $i$ even (resp., odd), we associate the disks in $\mathcal{D}_i$ to the nonatrees $N_k^{\rm even}$ (resp., $N_k^{\rm odd}$). 
For simplicity, we denote the eight nonatrees $N_k^{\rm odd}$ and $N_k^{\rm even}$ as $N_1,\ldots , N_8$. We state a simple corollary to Lemma~\ref{lem:14700}.

\begin{lemma}\label{lem:14700+}
For a set of disks in the plane, one of our shifted nonatrees~$N_1,\ldots,N_8$ maintains an independent set of size $|\mathrm{OPT}|/C$, where $\mathrm{OPT}$ is a MIS and $C$ is an absolute constant.
\end{lemma}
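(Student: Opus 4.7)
The plan is to reduce Lemma~\ref{lem:14700+} to Lemma~\ref{lem:14700} at the cost of an extra factor of $2$ from a pigeonhole over the parity of bucket indices. First, I would apply Lemma~\ref{lem:gridpartitioning2} to find some $k\in\{1,\ldots,4\}$ for which the original (un-sparsified) nonatree $N_k$ contains a subset $\mathrm{OPT}_k\subseteq\mathrm{OPT}$ with $|\mathrm{OPT}_k|\geq|\mathrm{OPT}|/4$. Splitting $\mathrm{OPT}_k$ according to whether the index $i$ of the bucket $\mathcal{D}_i$ containing each disk is even or odd and applying pigeonhole, one of the two halves, say $\mathrm{OPT}_k^{\rm even}$ after renaming, has size at least $|\mathrm{OPT}|/8$ and is stored entirely in the sparsified nonatree $N_k^{\rm even}$.

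Next, I would re-run the approximation chain from the proof of Lemma~\ref{lem:14700} --- namely Lemmas~\ref{lem:generaldisk-cellpacking-approx}, \ref{lem:greedy}, \ref{lem:skip-cells}, and~\ref{lem:obstacle-packing} --- but now with the greedy algorithm operating on $N_k^{\rm even}$ and with the reference independent set $\mathrm{OPT}_k^{\rm even}$ in place of $\mathrm{OPT}_k$. Each of these lemmas is a local geometric or combinatorial statement (a packing bound within a single cell, a count of larger disjoint disks that can meet a chosen disk, a tree-combinatorial argument on branching nodes, or a volume argument for obstacle overlap) and none relied on every intermediate bucket being populated, so they remain valid verbatim. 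The single numerical constant that deserves a second look is that of Lemma~\ref{lem:obstacle-packing}: when the next populated bucket lies two levels above instead of one, the smallest possible larger disk is a factor of~$3$ bigger, so the ratio $R/\widehat{r}$ in the volume argument shrinks, making the bound strictly better.

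The main obstacle will be to confirm this last point carefully, but it is a short recomputation: the argument in the proof of Lemma~\ref{lem:obstacle-packing} only uses that a disk in a strictly higher populated bucket has radius at least a specified lower bound relative to the obstacle cell, and widening the ratio between consecutive populated levels from~$3$ to~$9$ only tightens the estimate. Combining these pieces, the greedy algorithm on $N_k^{\rm even}$ maintains an independent set of size $\Omega(|\mathrm{OPT}_k^{\rm even}|)=\Omega(|\mathrm{OPT}|)$, so the claim holds with $C$ equal to at most twice the constant obtained in Lemma~\ref{lem:14700}.
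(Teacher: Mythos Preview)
Your proposal is correct and follows essentially the same approach as the paper: both combine a pigeonhole over parity with the machinery of Lemma~\ref{lem:14700}. The only cosmetic difference is the order of the two pigeonholes---the paper first splits $\mathrm{OPT}$ by parity and then invokes Lemma~\ref{lem:14700} as a black box (which internally handles the shift pigeonhole and the approximation chain), whereas you first fix the shift via Lemma~\ref{lem:gridpartitioning2}, then split by parity, and then re-verify the approximation chain on the sparsified tree; your extra remark that the constant in Lemma~\ref{lem:obstacle-packing} can only improve under sparsification is correct but not needed for the argument.
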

\begin{proof}
Let $S\subset \mathcal{D}$ be a MIS of a set $\mathcal{D}$ of disks. We can partition $\mathcal{D}$ into  $\mathcal{D}^{\rm even}=\bigcup_{i\mbox{ \rm even}} \mathcal{D}_i$ and $\mathcal{D}^{\rm odd}=\bigcup_{i\mbox{ \rm odd}}  \mathcal{D}_i$. Let $S^{\rm even}= S\cap \mathcal{D}^{\rm even}$ and $S^{\rm odd}=S\cap \mathcal{D}^{\rm odd}$. 
Clearly, $|S^{\rm even}|\geq \mathrm{OPT}^{\rm even}$, $|S^{\rm odd}|\geq \mathrm{OPT}^{\rm odd}$, and $\max\{|S^{\rm even}|,|S^{\rm odd}|\}\geq \frac12\, |S| =\frac12\,  \mathrm{OPT}$. Now Lemma~\ref{lem:14700} completes the proof.   
\end{proof}

The advantage of partitioning the nonatrees into odd and even levels is the following. 

\begin{lemma}\label{lem:sparse}
Let $d_1,d_2\in \mathcal{D}$ be disks of radii $r_1, r_2>0$, respectively, associated with cells $c_1$ and $c_2$ in a nonatree $N_k$, $k\in \{1,\ldots ,8\}$. If $c_1\prec c_2$, then $3\, r_1< r_2$. 
\end{lemma}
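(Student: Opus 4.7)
The plan is very short because sparsification does almost all the work. The idea is just to read off the bucket indices of the two cells and compare the allowed radius ranges with the factor-of-$3$ gap.

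First, I would recall the setup. Each nonatree $N_k$ (for $k\in\{1,\ldots,8\}$) was obtained from one of $N_1,\ldots,N_4$ by keeping only cells on levels of a single parity, so any two nodes in $N_k$ that are in an ancestor--descendant relation come from bucket indices of the same parity. A cell in bucket $i$ has side length $3^i$, and the disks stored in bucket $i$ have radii in the interval $\bigl(\tfrac{3^{i-1}}{4},\tfrac{3^i}{4}\bigr]$. Since a disk $d\in\mathcal{D}$ is assigned to a cell in the bucket matching its radius, the cell $c$ containing $d$ in $N_k$ lies in the same bucket as $d$.

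Next I would apply this to $c_1\prec c_2$. Let $i$ be the bucket index of $c_1$ and $j$ the bucket index of $c_2$. Because $c_1$ is a strict descendant of $c_2$ in $N_k$, we have $i<j$, and because $N_k$ only contains levels of one parity, $i$ and $j$ share parity, so $j\geq i+2$. Using the radius ranges,
\[
r_1 \;\leq\; \frac{3^i}{4} \qquad\text{and}\qquad r_2 \;>\; \frac{3^{j-1}}{4} \;\geq\; \frac{3^{i+1}}{4}.
\]

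Combining these two estimates yields $3r_1\leq \tfrac{3^{i+1}}{4}<r_2$, i.e., $3r_1<r_2$, which is exactly the claim. There is no real obstacle here — the whole point of splitting $N_k$ into odd and even levels in the preceding paragraph was precisely to force the bucket-index gap $j-i\geq 2$ that drives this inequality, so the proof is essentially a one-line arithmetic check once the definitions are unpacked.
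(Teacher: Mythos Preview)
Your proposal is correct and follows essentially the same approach as the paper: identify the bucket indices of $c_1$ and $c_2$, use the parity constraint from sparsification to force a gap of at least $2$ between them, and then read off the radius inequality from the bucket ranges. The paper's proof is the same one-line arithmetic check you describe.
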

\begin{proof}
 By construction, $N_k$ contains disks at either odd or even levels.
 Then $3^{i-1}/4< r_1\leq 3^i/4$ and $3^{i'-1}/4< r_2\leq 3^{i'}/4$ for some integers $i<i'$ of the same parity. Since $i$ and $i'$ have the same parity, then $i+2\leq i'$,
 which gives $r_1\leq 3^i/4<3^{i+1}/4<r_2$, hence $r_2/r_1>3$.
\end{proof}

\paragraph{Clearance.} The guiding principle of the greedy strategy is that if we add a disk $d$ to the independent set, we exclude all larger disks that intersect $d$. For our dynamic algorithm, we wish to maintain a stronger property: 
\begin{definition}
  Let~$S$ be an independent set of the disks in a nonatree~$N_k$ such that each grid cells in~$N_k$ contributes at most one disk.
  For $\lambda\geq 1$, we say that $S$ has \textbf{$\lambda$-clearance} if the following holds: 
  If $d_1,d_2\in S$ are associated with cells $c_1$ and $c_2$, resp., and $c_1\prec c_2$, then  %$\lambda d_1$ is disjoint from $d_2$.
  $d_2$ is disjoint from $\lambda\,d_1'$, where $d_1'$ is the smallest enclosing disk of $c_1$. 
\end{definition}
Note that $d_1\subset d_1'$ and $\lambda\, d_1\subset \lambda\, d_1'$, and in particular $\lambda$-clearance implies that $d_2$ is disjoint from $\lambda\, d_1$. This weaker property suffices for some of our proofs (e.g., Lemma~\ref{lem:clearance}).
An easy volume argument shows that a modified greedy algorithm that maintains 3-clearance still returns a constant-factor approximate MIS (Lemma~\ref{lem:obstacle-packing2}). The key advantage of an independent set with 3-clearance is the following property, which will be helpful for our dynamic algorithm:  

\begin{lemma}\label{lem:clearance}
  Let~$S$ be an independent set of the disks in a nonatree~$N_k$ such that each grid cells in~$N_k$ contributes at most one disk; and assume that $S$ has 3-clearance. 
  Then every disk that lies in a cell in $N_k$ intersects at most one larger disk in $S$.
\end{lemma}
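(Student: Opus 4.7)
The plan is to argue by contradiction. Suppose $d$ lies in a cell $c$ of $N_k$ and intersects two distinct larger disks $d_1, d_2 \in S$, associated with cells $c_1, c_2$ of $N_k$, with radii $r_d < r_1 \leq r_2$. The overall strategy is to show that all three cells $c, c_1, c_2$ must lie on a single root-to-leaf path of the nonatree, and then invoke 3-clearance along the ancestor-descendant pair $(c_1, c_2)$ to derive a contradiction.

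First I would establish the ordering $c \preceq c_1 \prec c_2$. For each $i \in \{1,2\}$, the inclusions $d \subseteq c$ and $d_i \subseteq c_i$ together with $d \cap d_i \neq \emptyset$ force $c$ and $c_i$ to meet, and in a nonatree two meeting cells are nested. To rule out $c_i \prec c$, I would apply \lemmaref{lem:sparse}: if $c_i$ were a strict descendant of $c$, then $3 r_i < r_d$ would hold, contradicting $r_i > r_d$. Hence $c \preceq c_i$ for both $i$. Both $c_1$ and $c_2$ thus lie on the ancestor chain above $c$, so they are themselves nested; the case $c_1 = c_2$ is excluded because $S$ contains at most one disk per cell, and $r_1 \leq r_2$ then forces $c_1 \prec c_2$.

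Next I would apply the 3-clearance of $S$ to $d_1$ in $c_1$ and $d_2$ in $c_2$: this yields that $d_2$ is disjoint from $3 d_1'$, where $d_1'$ is the smallest enclosing disk of $c_1$. Combining this with the containment chain $d \subseteq c \subseteq c_1 \subseteq d_1' \subseteq 3 d_1'$ (which uses $c \preceq c_1$), I conclude that $d$ is disjoint from $d_2$, contradicting $d \cap d_2 \neq \emptyset$.

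The main obstacle is the first step, namely securing the linear ordering of $c, c_1, c_2$. This is exactly where the sparsification of the nonatree is indispensable: without the factor-3 gap between radii in ancestor-descendant cells guaranteed by \lemmaref{lem:sparse}, a strictly larger disk could in principle sit in a strict descendant of $c$, which would break the containment $d \subseteq d_1'$ needed in the final step. Once $c \preceq c_1 \prec c_2$ is secured, the containment argument together with 3-clearance closes the proof immediately.
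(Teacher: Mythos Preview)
Your proof is correct and follows the same contradiction structure as the paper's; the only notable difference is in the final containment: the paper observes $d_0 \subset 3d_1$ directly from the fact that $d_0$ intersects the larger disk $d_1$ and then applies the weaker consequence of 3-clearance that $d_2$ misses $3d_1$, whereas you route through the cell inclusions $d \subseteq c \subseteq c_1 \subseteq d_1'$ and invoke the full definition. Your justification of the ordering $c \preceq c_1 \prec c_2$ is more careful than the paper's bare ``clearly'', though your remark that sparsification is \emph{indispensable} here overstates things: the bucket structure alone already forces disks in strictly lower levels to have strictly smaller radii, so \lemmaref{lem:sparse} is not actually needed to rule out $c_i \prec c$.
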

\begin{proof}
Let $d_0$ be an arbitrary disk in a cell $c_0$ of $N_k$, and assume that $d_0$ intersects two or more larger disks in $S$. Let $d_1$ and $d_2$ be the smallest and the 2nd smallest disks in $S$ that (i) intersect $d_0$ (ii) and are larger than $d_0$. Clearly, if $d_1$ and $d_2$ are associated with cells $c_1$ and $c_2$, then we have $c_0\prec c_1\prec c_2$. Since $d_0$ intersects the larger disk $d_1$, then $d_0\subset 3d_1$. Since $S$ has 3-clearance, then $d_2$ is disjoint from $3d_1$ (see Figure~\ref{fig:closest-obstacle}a). Consequently, $d_2$ cannot intersect $d_0$: a contradiction completing the proof.    
\end{proof}

\begin{figure}
    \centering
    \includegraphics{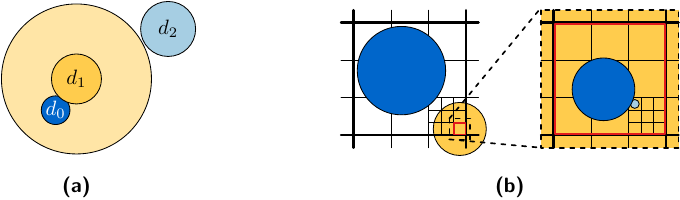}
    \caption{Constructions for Lemmata~\ref{lem:clearance} and~\ref{lem:closest-obstacle}, respectively: \textbf{\textsf{(a)}} The light yellow disk representing $3d_1$ is disjoint from~$d_2$ because of 3-clearance. \textbf{\textsf{(b)}} The light blue disk can intersect only a disk of~$S_k$ in the red cell~$c_o$; larger disks in~$S_k$ are disjoint from the yellow obstacle disk defined by~$c_o$.}
    \label{fig:closest-obstacle}
\end{figure}

%\paragraph{Structuring our solution with obstacles.} 
\paragraph{Obstacle Cells: Decomposing a Nonatree into Ascending Paths.}
%We decompose each nonatree $N_k$ into ascending paths between two special types of cells: 
A cell $c\in N_k$ is an \emph{obstacle cell} if it is associated with a disk in $S_k$ (a \emph{true obstacle}), or it has at least two children that each contain a disk in $S_k$ (a \emph{merge obstacle}). 
For every obstacle cell $c$, we define an \emph{obstacle disk} as $o(c) = 3d'$, where $d'$ is the smallest enclosing disk of the cell $c$.
%We now explain the combinatorial structure of the nonatrees that we leverage in our algorithm. Because of invariant~\ref{inv:obs}, all nonatree cells with relevant cells in at least two subtrees will be merge obstacle cells. 
The obstacle cells decompose the nonatree into ascending paths in which each cell has relevant descendants in only a single subtree (see Figure~\ref{fig:obs-struct}a). Inside an ascending path, disks either intersect the obstacle disk of the (closest) obstacle cell below them, or are part of~$S_k$ and therefore define a true obstacle cell (see Figure~\ref{fig:obs-struct}b and~\ref{fig:obs-struct}c). 

\begin{figure}
    \centering
    \includegraphics[page=1]{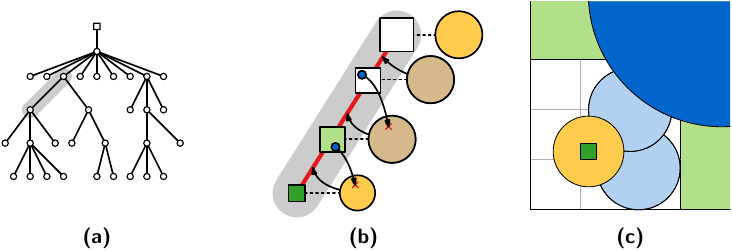}
    \caption{\textsf{\textbf{(a)}} Decomposition of a nonatree into ascending paths between merge obstacle cells. Only relevant leaves are drawn and hence all leaves are obstacle cells (disks) as well. The (square) root is not necessarily an obstacle cell. One ascending path between merge nodes is highlighted in grey. 
    The structure of the highlighted ascending path is shown \textbf{\textsf{(b)}} abstractly and \textbf{\textsf{(c)}} geometrically: The merge obstacle cells at the top and bottom (with yellow obstacle disks) each have no disk of~$S_k$ associated with them. Every other obstacle cell on the path also defines a brown obstacle disk. Each such cell contains a (dark blue) disk of~$S_k$, which is disjoint from the (closest) obstacle disk below it (indicated by red crosses). All (light blue) disks on the (red) ascending path above an obstacle cell are intersected by the obstacle below. Green colors identify cells between \textbf{\textsf{(b)}} and \textbf{\textsf{(c)}}.
    }
    \label{fig:obs-struct}
\end{figure}

\paragraph{Barrier Disks.} 
For a set of disks $\mathcal{D}$, we will maintain an independent set $S\subset \mathcal{D}$, and a set $B\subset \mathcal{D}$ of \emph{barrier disks}. 
When a disk $d$ associated with a cell $c\in N_k$ is inserted or deleted from $\mathcal{D}$, we re-run the greedy process on the nonatree locally, between the cells $c_1\preceq c\prec c_2$ that contain disks $s_1,s_2\in S$. If any of the new disks added to $S$ intersects $s_2$, then we remove $s_2$ from $S$, and add it to $B$ as a \emph{barrier disk}. Such a barrier disk defines a \emph{barrier clearance disk} $o(c_b)=3d_b$, where $d_b$ is the smallest enclosing disk of the \emph{barrier cell}~$c_b$ containing~$s_2$. This obstacle disk also implements the clearance (defined above), to guarantee that the new disks added to $S$ in this process do not intersect any disk in $S$ larger than $s_2$.  
Importantly, we maintain the properties that
(i) the obstacle disks, for all obstacle cells and barrier cells, form a dominating set for $\mathcal{D}$, that is, all disks in $\mathcal{D}$ intersect an obstacle disk of some obstacle cell or the barrier clearance disk of a barrier cell; and (ii) on any ascending path there is always an obstacle cell between two barrier cells. 

The latter property ensures that $|B|\leq 2\,|S|$ and is maintained as follows. We maintain an assignment~$\beta$ between barrier disks and the closest obstacle cells below them.
%partition of the barrier disks $B=\bigcup_{d\in S}B(d)$. 
Each barrier disk $\beta(c_1)$ lies in one of the cells of the nonatree along an ascending path between two obstacle cells $c_1\prec c_2$. Each path contains at most one barrier disk.
To maintain this property after each insertion or deletion, we re-run the greedy algorithm locally on the ascending path affected by the dynamic change, and possibly continue the greedy algorithm one ascending path just above. 
%In case we create multiple consecutive candidate disks on $P(d)$, that is we get $|B_k(d)|>1$, we keep on running the greedy procedure until all disks of $\mathcal{D}$ on $P(d)$ intersect obstacle disks of some obstacle cell, or at most one candidate cell. This ensures that all barrier disks except for (at most) one are removed, and the properties (i) and (ii) both hold again, 
Details are given in Section~\ref{sec:dynamic}. 

\paragraph{Invariants.} We are now ready to formulate invariants that guarantee that one of eight possible independent sets is a constant-factor approximation of MIS. In Section~\ref{sec:dynamic}, we show how to maintain these independent sets and the invariants in polylogarithmic time. 

For a set of disks $\mathcal{D}$, we maintain eight nonatrees $N_1,\ldots N_8$, and for each $k\in \{1,\ldots , 8\}$ we maintain two sets of disks $S_k$ and $B_k$, that satisfy the following invariants. 
\begin{enumerate}
    \item\label{inv:1} Every disk $d\in \mathcal{D}$ is associated with a cell of at least one nonatree $N_k$, $k\in \{1,\ldots , 8\}$. 
    \item\label{inv:2} In each nonatree $N_k$, only odd or only even levels are associated with disks in $\mathcal{D}$. Let $\mathcal{D}_k$ be the set of disks associated with the cells in $N_k$. 
    \item\label{inv:3} For every $k\in \{1,\ldots , 8\}$, 
         \begin{enumerate}
        \item\label{inv:3a} $S_k$ and $B_k$ are disjoint subsets of $\mathcal{D}_k$;
        %\item\label{inv:3a} $B_k\subset \mathcal{D}_k$ is disjoint from $S_k$;
        \item\label{inv:3c} $S_k\subset \mathcal{D}_k$ is an independent set with 3-clearance; and
        \item\label{inv:3b} each cell of $N_k$ contributes at most one disk in $B_k\cup S_k$.
        \end{enumerate}
    \item\label{inv:obs} For every $k\in \{1,\ldots , 8\}$, 
        \begin{enumerate} 
        \item\label{inv:obsa} a cell $c\in N_k$ is an obstacle cell if it is associated with a disk in $S_k$ (a true obstacle), or it has at least two children that each contain a disk in $S_k$ (a merge obstacle). 
        \item\label{inv:obsb} For every obstacle cell $c$, we maintain a obstacle disk as $o(c) = 3d'$, where $d'$ is the smallest enclosing disk of the cell $c$.
        \end{enumerate}
    \item\label{inv:4} 
         For every $b\in B_k$,
        \begin{enumerate} 
        \item\label{inv:4c} there is a unique obstacle cell $c_o\in N_k$ such that $c_o\prec c_b$, where $c_b$ is the cell in $N_k$ associated with $b$, and the cells $c$, $c_o\prec c\prec c_b$, are neither obstacle cells nor associated with any disk in $B_k$; 
        we use the notation $\beta(c_d)=b$ to assign barrier disks to cells and $\beta(c)=\textsc{nil}$ if a cells is not assigned to any barrier disk in $B_k$; 
       \item\label{inv:bobs} For every cell $c$ associated with a barrier in $B_k$, we maintain a barrier clearance disk $o(c) = 3d'$, where $d'$ is the smallest enclosing disk of the cell $c$; and 
        \item\label{inv:4d} the barrier clearance disk $o(c_b)$ is disjoint from all disks $d'\in S_k$ associated with the cell~$c'$ with $c_b\prec c'$.
        \end{enumerate}
    \item\label{inv:5} If $d\in \mathcal{D}_k$ is associated with a cell $c_d\in N_k$ but $d\notin S_k$, then 
    \begin{enumerate}
     %   \item\label{inv:5c} there exists a disk $d'\in S_k\cup B_k$ associated with the cell $c_d$, or \cstodo{Invariant~\ref{inv:5c} seems to be superceded by invariants~\ref{inv:5a} and~\ref{inv:5b}.} 
        \item\label{inv:5a} $d$ intersects the obstacle disk $o(c')$ for some obstacle cell $c'$ with $c'\preceq c_d$, or
        \item\label{inv:5b} $d$ intersects a barrier clearance disk $o(c_b)$
        %~$3d_b$, where $d_b$ is the smallest enclosing disk of the candidate obstacle cell~$c_b\preceq c(d)$ associated with 
        for some barrier~$b\in B_k$ with $c_b\preceq c_d$.
        \end{enumerate}
\end{enumerate}

We show (Lemma~\ref{lem:invariants} below) that invariants~\ref{inv:1}--\ref{inv:5} guarantee that the largest of the eight independent sets, $S_1,\ldots , S_8$, is a constant-factor approximate MIS of $\mathcal{D}$. As we use larger obstacle disks than in Section~\ref{sec:hierarchical}, to ensure 3-clearance, we need to adapt Lemma~\ref{lem:obstacle-packing} to the new setting. We prove the following with an easy volume argument.  

\begin{lemma}\label{lem:obstacle-packing2}
    Let~$c$ be a barrier or obstacle cell in a nonatree~$N_k$, $k\in \{1,\ldots, k\}$. Then the barrier clearance or obstacle disk $o(c)$ intersects at most $O(1)$ pairwise disjoint disks in higher buckets of $\mathcal{D}_k$.
\end{lemma}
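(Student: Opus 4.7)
The plan is to mimic the volume/packing argument used in \lemmaref{lem:obstacle-packing}, adjusted to the larger obstacle disks of radius $3$ times the circumradius of the cell, and exploiting the sparsification guarantee of \lemmaref{lem:sparse} so that disks in higher buckets are proportionally larger.

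First I would fix notation. Let $c$ be a barrier or obstacle cell at level $i$ in $N_k$, so $c$ has side length $3^i$ and its smallest enclosing disk $d'$ has radius $\tfrac{\sqrt{2}}{2}\,3^i$. Hence $o(c)=3d'$ has radius $R_o=\tfrac{3\sqrt{2}}{2}\,3^i$. By Invariant~\ref{inv:2} and \lemmaref{lem:sparse}, every disk of $\mathcal{D}_k$ associated with a cell at a strictly higher level of $N_k$ has radius greater than $3\cdot\tfrac{3^{i-1}}{4}\cdot 3 = \tfrac{3^{i+1}}{4}$, since levels in $N_k$ are spaced by at least two in the nonatree and only one parity is used; so the minimum admissible radius in higher buckets is $r_{\min}=\tfrac{3}{4}\,3^i$.

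Next I would apply the standard shrink-and-bound argument. Let $\mathcal{A}$ be a family of pairwise disjoint disks in higher buckets of $\mathcal{D}_k$, each intersecting $o(c)$. For each $d\in\mathcal{A}$, pick a point $p_d\in d\cap o(c)$ and shrink $d$ homothetically about $p_d$ to a disk $\widehat{d}$ of radius exactly $r_{\min}$; since $d$ already has radius at least $r_{\min}$, the $\widehat{d}$ remain pairwise disjoint and still meet $o(c)$. By the triangle inequality, every $\widehat{d}$ is contained in the disk $D$ concentric with $o(c)$ of radius $R=R_o+2r_{\min}=\tfrac{3(\sqrt{2}+1)}{2}\,3^i$. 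Comparing areas,
\[
|\mathcal{A}|\cdot \pi r_{\min}^2 \le \pi R^2,
\qquad\text{so}\qquad
|\mathcal{A}| \le \left(\frac{R}{r_{\min}}\right)^2 = 4(\sqrt{2}+1)^2 < 24.
\]

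The main (and really only) obstacle is ensuring that the factor-$3$ blow-up in the obstacle disk is exactly compensated by the factor-$3$ separation between consecutive non-empty levels in $N_k$, so that the packing ratio $R/r_{\min}$ remains an absolute constant independent of $i$ and $k$; this is why we sparsified each nonatree into odd/even levels in Section~\ref{sec:dynamization}. Since the bound does not depend on $i$ or on whether $c$ is a true obstacle, a merge obstacle, or a barrier cell, the statement follows with the absolute constant $O(1)\le 23$.
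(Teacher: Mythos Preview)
Your argument is essentially identical to the paper's own proof: both compute the obstacle radius $R_o=3^{i+1}/\sqrt{2}$, use sparsification (\lemmaref{lem:sparse}) to get a lower bound of $\tfrac14\,3^{i+1}$ on the radii of disks in higher buckets, shrink each intersecting disk to that minimum radius about a point in the intersection, and compare areas inside the enlarged disk of radius $R_o+2r_{\min}$ to obtain the bound $(2(1+\sqrt{2}))^2\approx 23.31$. The only cosmetic difference is that you spell out the final numerical constant while the paper leaves it as $O(1)$.
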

\begin{proof}
 Let~$c$ be a barrier or obstacle cell at level $i\in \mathbb{Z}$. Then $c$ has side length $3^i$, and the disk associated with $c$ has radii in the range $(\frac14\, 3^{i-1},\frac14\, 3^i]$. 
 By invariant~\ref{inv:obsb}, the (barrier) obstacle disk of $c$ is $o(c)=3d'$, where $d'$ is the smallest enclosing disk of $c$. That is, the radius of $o(c)$ is $r=3\cdot 3^i/\sqrt{2}=3^{i+1}/\sqrt{2}$, and $\mathrm{area}(o(c))=\pi r^2=\frac{9\pi}{2} \, 3^{2i}$. 
 
 By Lemma~\ref{lem:sparse}, the radius of any disk $d\in \mathcal{D}_k$ in a higher bucket is at least $3\cdot \frac14\, 3^i=\frac14\, 3^{i+1}$. Let $\mathcal{A}$ be the set of pairwise disjoint disks in $\mathcal{D}_k$ in higher buckets that intersect $o(c)$. Scale down every disk $d\in \mathcal{D}$ from a point in $d\cap o(c)$ to a disk $\widehat{d}$ of radius $\widehat{r}=\frac14\, 3^{i+1}$; and let $\widehat{\mathcal{A}}$ be the set of resulting disks. Note that $|\widehat{\mathcal{A}}|=|\mathcal{A}|$, the disks in $\widehat{\mathcal{A}}$ are pairwise disjoint, and they all intersect $o(c)$. Let $D$ be a disk concentric with $o(c)$, of radius $R=3^{i+1}/\sqrt{2}+\frac12\, 3^{i+1}=\frac{1+\sqrt{2}}{2}\, 3^{i+1}$. 
 By the triangle inequality, $D$ contains all (scaled) disks in $\widehat{\mathcal{A}}$ (see Figure~\ref{fig:obstacle-packing}b for a congruent example).  
 Since the disks in $\widehat{\mathcal{A}}$ are pairwise disjoint, then $|\widehat{\mathcal{A}}|\cdot \pi\widehat{r}^2=\sum_{\widehat{d}\in\widehat{\mathcal{A}}} \mathrm{area}(\widehat{d}) \leq \mathrm{area}(D)=\pi R^2$. 
 This yields $|\mathcal{A}|=|\widehat{\mathcal{A}}| \leq \mathrm{area}(D)/\mathrm{area}(\widehat{d}) = R^2/\widehat{r}^2 = O(1)$, as claimed.
\end{proof}

We are now ready to prove that invariants~\ref{inv:1}--\ref{inv:5} ensure that one of the independent sets $S_1,\ldots , S_8$ is a constant-factor approximate MIS of $\mathcal{D}$.

\begin{lemma}\label{lem:invariants}
    Let $\mathcal{D}$ be a set of disks with the data structures described above, satisfying invariants~\ref{inv:1}--\ref{inv:5}. Then $\max_{1\leq k\leq 8} |S_k|\geq \Omega(|S^*|)$, where $S^*$ is a MIS of $\mathcal{D}$.  
\end{lemma}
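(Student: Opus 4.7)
The plan is to prove the stronger statement that for every individual $k\in\{1,\dots,8\}$, $|S_k|\geq\Omega(|S^*\cap\mathcal{D}_k|)$. The lemma then follows immediately: invariant~\ref{inv:1} gives $|S^*|\leq\sum_{k=1}^{8}|S^*\cap\mathcal{D}_k|$, so by pigeonhole some $k^*$ satisfies $|S^*\cap\mathcal{D}_{k^*}|\geq|S^*|/8$, and combining yields $\max_k|S_k|\geq|S_{k^*}|\geq\Omega(|S^*|/8)=\Omega(|S^*|)$.

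To prove the per-$k$ bound, I would fix $k$, set $T_k=S^*\cap\mathcal{D}_k$, and define the following charging scheme. For $d\in T_k$: if $d\in S_k$, charge $d$ to $c_d$ (a true obstacle cell by invariant~\ref{inv:obsa}); otherwise, by invariant~\ref{inv:5}, there exists either an obstacle cell $c'\preceq c_d$ with $d\cap o(c')\neq\emptyset$ or a barrier cell $c_b\preceq c_d$ with $d\cap o(c_b)\neq\emptyset$, and I would charge $d$ to that witness cell $c$. The witnesses are thus obstacle cells and barrier cells of $N_k$.

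Next I would bound the load per witness. Disks in $T_k$ charged to the same witness $c$ are pairwise disjoint (they lie in $S^*$) and each lies in some $c_d$ with $c\preceq c_d$. If $c_d=c$, then these disks are contained in the single grid cell $c$, so Lemma~\ref{lem:generaldisk-cellpacking-approx} gives at most $35$ of them. If $c\prec c_d$, then by Lemma~\ref{lem:sparse} the disk $d$ lies in a strictly higher bucket than the disk associated with $c$, and Lemma~\ref{lem:obstacle-packing2} bounds the number of such pairwise disjoint disks intersecting $o(c)$ by $O(1)$. Hence every witness receives at most $O(1)$ charges.

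Finally I would count the witnesses. True obstacles contribute exactly $|S_k|$ cells. Merge obstacles can be bounded by a Steiner-tree argument: mark the true-obstacle cells in $N_k$ and contract all degree-one internal chains; in the resulting tree each merge obstacle is a branching node, and a tree with $|S_k|$ marked leaves has at most $|S_k|-1$ branching internal nodes. For barriers, invariant~\ref{inv:4c} assigns to each $b\in B_k$ a unique obstacle cell $c_o\prec c_b$ with no obstacle or barrier strictly between them; two barriers mapped to the same $c_o$ would both be ancestors of $c_o$ and hence comparable, so the lower one would contradict the invariant for the upper one, making $b\mapsto c_o$ injective and thus $|B_k|=O(|S_k|)$. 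Summing gives $|T_k|\leq O(|S_k|)\cdot O(1)=O(|S_k|)$. The main obstacle I anticipate is purely bookkeeping around the structural invariants---the Steiner-tree count for merge obstacles and the injectivity argument for barriers---rather than any metric estimate, since all geometric packing inputs are already delivered by Lemmas~\ref{lem:generaldisk-cellpacking-approx}, \ref{lem:sparse}, and~\ref{lem:obstacle-packing2}.
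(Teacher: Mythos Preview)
Your argument is correct and in fact slightly more direct than the paper's. Both proofs use the same core idea: charge disks of a maximum independent set, via invariant~\ref{inv:5}, to obstacle and barrier cells, then bound the number of such cells by $O(|S_k|)$ and the load per cell by $O(1)$ via Lemma~\ref{lem:obstacle-packing2}. The difference is in how the ``many optimal disks in one cell'' and ``witness equals own cell'' cases are handled. The paper first passes to an auxiliary independent set $S^{**}_k$ constrained to contribute at most one disk per cell and none from cells with two or more relevant children, appealing to Lemmata~\ref{lem:greedy}, \ref{lem:skip-cells}, and~\ref{lem:14700+} to show $|S^{**}_k|\ge\Omega(|S^*_k|)$; this thinning guarantees that the charged-to obstacle cell is always a \emph{strict} descendant, so Lemma~\ref{lem:obstacle-packing2} applies directly. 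You instead charge $T_k=S^*\cap\mathcal{D}_k$ without thinning and absorb both problematic cases into the $c_d=c$ branch, where Lemma~\ref{lem:generaldisk-cellpacking-approx} gives the $\leq 35$ bound. Your route avoids three auxiliary lemmata at the cost of one (simpler) packing lemma, and your per-$k$ statement is marginally stronger than what the paper proves. The bookkeeping you flag (Steiner-tree count for merge obstacles, injectivity of $\beta$ for barriers) is exactly what the paper sweeps into the one-line references ``at most $2|S_k|$ by Lemma~\ref{lem:skip-cells}'' and ``$|B_k|\leq 2|S_k|$ by invariant~\ref{inv:4c}'', so you are not missing anything there.
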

\begin{proof} %\jwtodo{Proceed here with worst-case polylog updates.}
    Let $S^*\subset \mathcal{D}$ be a MIS, and let $S^*_k=S^*\cap \mathcal{D}_k$ for $k=1,\ldots, 8$. By invariant~\ref{inv:1}, we have $|S^*_k|\geq \frac18\, |S^*|$ for some $k\in \{1,\ldots ,8\}$. Fix this value of $k$ for the remainder of the proof. 

    Let $S^{**}_k\subset \mathcal{D}_k$ be a maximum  independent set subject to the constraints that (i) each grid cells in~$N_k$ contributes at most one disk to $S_k^{**}$, and (ii) any grid cell in $N_k$ that has two or more relevant children do not contribute. By Lemmata~\ref{lem:greedy} and~\ref{lem:14700+}, we have $|S^{**}_k|\geq \Omega(|S^*_k|)\geq \Omega(|S^*|)$.
    We claim that 
\begin{equation}\label{eq:claim}
    |S^{**}_k|\leq O(|S_k\cup B_k|),
\end{equation}
   This will complete the proof: Invariant~\ref{inv:4c} guarantees $|B_k|\leq 2\,|S_k|$. Since $S_k$ and $B_k$ are disjoint by invariant~\ref{inv:3a}, this yields $|S_k|\geq \frac13 (|S_k| + |B_k|) =\frac13\, |S_k\cup B_k|\geq \Omega(|S^{**}_k|)\geq \Omega(|S^*|)$, as required.

\smallskip\noindent\textbf{Charging Scheme.}
We prove \eqref{eq:claim}, using a charging scheme. 
Specifically, each disk $d^*\in S^{**}_k$ is worth one unit. We \emph{charge} every disk $d^*\in S^{**}_k$ to either a disk in $S_k\cup B_k$ or an obstacle cell, using Invariant~\ref{inv:5}. Note that the number of obstacle cells is at most $2\, |S_k|$ by Lemma~\ref{lem:skip-cells}. Then we show that the total number of charges received is $O(|S_k\cup B_k|)$, which implies $|S^*_k|\leq O(|S_k\cup B_k|)$, as required. 

In a bottom-up traversal, we consider the cells of the nonatree $N_k$. Consider each cell $c$ that contributes a disk to $S^{**}_k$, and consider the disk $d^*\in S^{**}_k$ associated with $c$. 
%If $d^*\in S_k\cup B_k$, then we charge $d^*$ to itself. 
If there exists a disk $d'\in S_k\cup B_k$ associated with $c$, then we charge $d^*$ to such a disk $d'$.
Otherwise, Invariant~\ref{inv:5} provides two possible reasons why $d^*$ is not in $S_k$. We describe our charging scheme in each case separately:
\begin{enumerate}
%\item[{\rm (a)}] If there exists a disk $d'\in S_k\cup B_k$ associated with $c$, then we charge $d^*$ to such a disk $d'$.
\item[{\rm (a)}] If $d^*$ intersects the obstacle disk $o(c')$ for some cell $c'$ with $c'\preceq c$, then we first show that $c'\prec c$. Suppose, to the contrary, that $c=c'$. Since $c$ is not associated with any disk in $S_k \cup B_k$, but it is an obstacle cell, then $c$ has two or more relevant children, consequently no disk in $S_k^{**}$ is associated with $c$: A contradiction. 
We may now assume $c'\prec c$. By invariant~\ref{inv:obs}, there is a unique maximal obstacle disk $o(c')$ for some cell $c'$, $c'\prec c$, and we charge $d^*$ to the cell $c'$.  
\item[{\rm (b)}] Else $d^*$ intersects the barrier clearance disk $o(c_b)$, where cell~$c_b\preceq c$ is associated with some barrier disk $b\in B_k$. We charge $d^*$ to $b\in B_k$.
\end{enumerate}
We claim that each disk $d'\in S_k\cup B_k$ and each obstacle cell $c'$ receives $O(1)$ charges. The choice of the independent set $S_k^{**}$, each cell $c'$ contributes at most one disks to $S^{**}_k$.  
Consequently, at most one disk $d^*\in S^{**}_k$ is charged to $d'$. 

Consider now an obstacle cell $c'$. By Lemma~\ref{lem:obstacle-packing2}, an obstacle disk $o(c')$ intersects $O(1)$ pairwise disjoint disks of larger radii. Consequently, $O(1)$ disks $d^*\in S^{**}_k$ are charged to $c'$ using invariant~\ref{inv:5a}. Similarly, a barrier clearance disk $o(c_b)$, $b\in B_k$, intersects $O(1)$ pairwise disjoint disks of larger radii, and so at most $O(1)$ disks $d^*\in S^{**}_k$ are charged to any barrier disk $b\in B_k$ using invariant~\ref{inv:5b}. Overall, $|S^{**}_k|\leq |S_k\cup B_k|+O(|S_k|)+O(|B_k|)\leq O(|S_k\cup B_k|)$, as claimed.  
\end{proof}

Finally, we show a useful property of the obstacle disks. % defined in Invariant~\ref{inv:obs}.

\begin{lemma}\label{lem:closest-obstacle}
    When a disk~$d$ in cell~$c\in N_k$ is added to $S_k$, it can intersect only the disk~$d_o\in S_k$ associated with the next obstacle cell~$c_o$ in the ascending path $P(d)$ from~$c$ towards the root, if $d_o$ even exists.
\end{lemma}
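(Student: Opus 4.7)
The plan is to combine Lemma~\ref{lem:clearance} with the $3$-clearance invariant (Invariant~\ref{inv:3c}). By Lemma~\ref{lem:clearance}, the disk $d$ lying in a cell of $N_k$ intersects at most one larger disk of the current independent set~$S_k$; if no such disk exists, the claim is vacuous and $d_o$ need not exist. Otherwise, I would let $d_1\in S_k$ denote the unique larger disk intersecting~$d$. Since $d_1$ is strictly larger than $d$, Lemma~\ref{lem:sparse} forces $d_1$ to live in a strict ancestor cell $c_1\succ c$ along the ascending path $P(d)$, and since $d_1\in S_k$, the cell $c_1$ is a true obstacle cell carrying a disk of~$S_k$.

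The task then reduces to showing that $c_1$ is the \emph{lowest} ancestor of $c$ on $P(d)$ carrying an $S_k$-disk, i.e., $c_1 = c_o$. Suppose for contradiction that another true obstacle $c'\in P(d)$ with associated disk $d'\in S_k$ satisfied $c \prec c' \prec c_1$. Applying $3$-clearance to the pair $d', d_1\in S_k$ with $c'\prec c_1$ yields that $d_1$ is disjoint from the obstacle disk $o(c') = 3\,d_{c'}'$, where $d_{c'}'$ is the smallest enclosing disk of~$c'$. Since $c \prec c'$, I would observe the containment chain $d \subseteq c \subseteq c' \subseteq d_{c'}' \subseteq o(c')$, placing $d$ entirely inside $o(c')$. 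Consequently $d_1\cap d = \emptyset$, contradicting the choice of~$d_1$. Hence no such intermediate true obstacle exists, and $c_1 = c_o$, giving $d_1 = d_o$.

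The main subtlety I expect concerns merge obstacle cells on $P(d)$ that lie below the true obstacle carrying $d_o$. Such cells do not carry an $S_k$-disk and therefore supply no $3$-clearance constraint, so the ``next obstacle cell $c_o$'' in the statement must be read as the lowest true obstacle above $c$ on $P(d)$; the proviso ``if $d_o$ even exists'' then covers the case when no true obstacle exists above~$c$, in which case there is no larger $S_k$-disk for $d$ to intersect in the first place. This reading is compatible with the argument above, because the contradiction step invokes $3$-clearance purely between two $S_k$-disks at true obstacle cells $c'$ and $c_1$, and intervening merge obstacles are transparent to the reasoning.
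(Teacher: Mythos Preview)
Your approach differs from the paper's. You invoke Lemma~\ref{lem:clearance} to bound the number of intersecting larger $S_k$-disks by one, then use $3$-clearance between two $S_k$-disks to argue the intersecting disk sits at the lowest \emph{true} obstacle above~$c$. The paper instead argues geometrically via obstacle-disk containment: once $d$ is added, $c$ is an obstacle cell with $o(c)\subseteq o(c_o)$; since every $S_k$-disk in levels above $c_o$ is disjoint from $o(c_o)$, none of them can meet $o(c)\supseteq d$, so only a disk sitting in $c_o$ itself could possibly intersect~$d$.

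Your reinterpretation of $c_o$ as the next \emph{true} obstacle is not what the statement means, and this is where the gap lies. The paper takes $c_o$ to be the next obstacle cell of either type; when $c_o$ is a merge obstacle, $d_o$ does not exist and the conclusion is that $d$ intersects \emph{no} disk of $S_k$ on $P(d)$. This stronger reading is exactly what the algorithm needs: in case~\ref{upd:case1} of \texttt{UpdateIS}, where the obstacle cell above the ascending path is a merge cell, no disk is removed from $S_k$, and that is only safe if the disks added by \texttt{GreedyIS} intersect nothing above. Your contradiction step requires an $S_k$-disk $d'$ in a cell strictly between $c$ and $c_1$; a merge obstacle between them carries no such disk, so your argument is, as you say, transparent to it---but that means you have not ruled out $d$ meeting $d_1$ across an intervening merge obstacle. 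The paper's containment argument $o(c)\subseteq o(c_o)$ handles true and merge obstacles uniformly; to close your version you would need the slightly stronger (algorithmically maintained) property that every $S_k$-disk is disjoint from \emph{all} obstacle disks below it, not only those of true obstacle cells.
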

\begin{proof}
    Assume such a disk~$d$ is added to~$S_k$. Invariants~\ref{inv:obsa} and~\ref{inv:obsb} tell us, respectively, that $c$ must be an obstacle cell now, and the obstacle disk of~$c$ will be $3d'$, where $d'$ is the smallest disk enclosing~$c$. Now consider the next obstacle cell~$c_o$ along~$P(d)$, and observe that $c$ is completely contained in~$c_o$, and the same holds for their respective obstacle disks (see Figure~\ref{fig:closest-obstacle}b). Since all disks of~$S_k$ in even higher levels do not intersect the obstacle disk of~$c_o$, they cannot intersect the obstacle disk of~$c$ either. To conclude, consider the type of obstacle associated with~$c_o$: In case the obstacle is a merge obstacle, $d_o$ does not exist, while in case of a true obstacle, $d_o$ exists and $d$ may intersect~$d_o$.
\end{proof}

\subsection{Hierarchical Dynamic Cell Location and Farthest Neighbor Data Structures}
\label{sec:DDD}

For each nonatree $N_k$, $k=1,\ldots , 8$, we construct three point location---or rather cell location---data structures, $F_c$, $F_o$, and $F_o'$ and additionally a dynamic farthest neighbor data structures, $T_\cup$, described in this section. These data structures help navigate the nonatree: The cell location data structures~$F_c$ and~$F_o$ allow us to efficiently locate, respectively, a cell or an obstacle cell in the nonatree or the lowest ancestor if the query (obstacle) cell is non-existent; $F_o'$ returns for a given cell~$c$ the obstacle cell~$c_o$, $c_o\preceq c$ closest to~$c$. Furthermore, after deleting a disk associated with a cell $c_q$ from a current independent set $S_k$, the data structure $T_\cup$ helps find a cell $c$, $c_q\preceq c$, in which a new (disjoint) disk can be added to $S_k$. After adding a new disk $d_q$ associated with a cell $c_q$, we need to find the closest cell $c$, $c_q\prec c$, in which $d_q$ intersects some disk $d\in S_k$ (if any), which then has to be deleted from $S_k$. Lemma~\ref{lem:closest-obstacle} tells us this cell must be the closest ancestor obstacle cell, which can easily be found using~$F_o$.
%, and hence we do not need an equivalent to~$T_\cup$ for intersection.
% \cstodo{We don't need to explain why we never define $T_{\cup}$.}

Let $k\in \{1,\ldots, 8\}$ be fixed. Assume that $N_k$ is a nonatree, $\mathcal{D}_k$ is a set of disks associated with cells in $N_k$; and the sets $S_k$ and $B_k$ satisfy invariants~\ref{inv:1}--\ref{inv:5}. 

\begin{itemize}
\item The data structure $F_c$ is a point location data structure for~$N_k$ analogous to a $\mathcal{Q}$-order for compressed quad trees~\cite[Chapter~2]{har2011geometric}, which can be implemented in any ordered set data structure. We use $F_c$ only to locate cells, and hence we refer to it as a cell location data structure. The $\mathcal{Q}$-order corresponds to a depth-first search of~$N_k$, where sibling cells are ordered (geometrically) according to a Z-order (see Figure~\ref{fig:q-order}). When a cell~$c$ is not found, a pointer (\emph{finger}) to the closest ancestor is returned (an ancestor of~$c$, in case $c$ were to exist in~$N_k$). The data structure~$F_o$ functions exactly like~$F_c$ but consists of only obstacle cells.
\item The data structure~$F_o'$ mimics $F_o$, but whereas the DFS order of~$F_o$ corresponds to a pre-order tree walk of $N_k$, such that a parent cell comes before its children in the ordering, $F_o'$ corresponds to a post-order tree walk, and hence for a given cell~$c$, the closest obstacle cell~$c_o$ of~$N_k$, such that $c_o\preceq c$, is easily found.
\item The data structure $T_\cup$ is for all disks in $\mathcal{D}_k$. It supports insertions and deletions to/from $\mathcal{D}_k$; as well as the following query: Given a query cell $c_q$ of $N_k$ and an obstacle disk $o_q$, find the lowest cell $c$ such that $c_q\preceq c$ and there exists a disk $d\in \mathcal{D}_k$ associated with $c$ and disjoint from $o_q$, or report no such cell exists. The data structure $T_\cup$ is a hierarchical version of the DFN data structure (cf.~Lemma~\ref{lem:disjointness}).

% \item The data structure $T_\Sigma$ is for the disks in $S_k$. It supports insertions and deletions to/from $S_k$; as well as the following query: Given a query disk $d_q$ and a cell $c_q$ of $N_k$, find the lowest cell $c$ such that $c_q\preceq c$ and there exists  a disk $d\in S_k$ associated with $c$ where $d$ intersects $d_q$, or report that no such cell exists.  The data structure $T_\Sigma$ is a hierarchical version of the DNN data structure (cf.~Lemma~\ref{lem:disjointness}).
\end{itemize}

\paragraph{Data Structures $F_c$, $F_o$, and $F_o'$.}
Since our (compressed) nonatree~$N_k$ is analogous to a compressed quadtree, the cell location data structure~$F_c$ works exactly like a $\mathcal{Q}$-order for compressed quadtrees: Insertion, deletion, and cell-location queries are therefore supported in $O(\log{n})$ time \cite[Chapter~2]{har2011geometric}. For completeness we show how to extend the quadtree $\mathcal{Q}$-order to nonatrees. 

The location of each cell of~$N_k$ is encoded in a binary number, and all cells are ordered according to their encoding. Let $L$ denote the list of levels of the nonatree $N_k$ in decreasing order. For the single cell on the top level of~$N_k$ we use an encoding of only zero bits, and on the second level we use the four most significant bits to encode the nine cells of~$N_k$, as shown in Figure~\ref{fig:q-order}a. Each subsequent level~$\ell\in L$ uses the next four bits to encode the 3x3 subdivision inside the cell encoded by the previous bits, as in Figure~\ref{fig:q-order}c. While~$F_c$ locates all uncompressed cells of~$N_k$, $F_o$ locates only the obstacle cells (which are uncompressed by definition).

\begin{figure}
    \centering
    \includegraphics{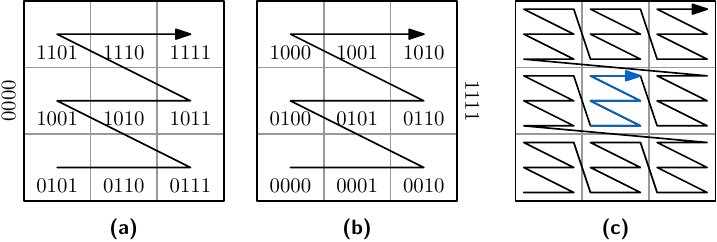}
    \caption{\textsf{$\mathcal{Q}$-orders in nonatrees. \textbf{(a)}} The encoding used in~$F_c$ and $F_o$, ordering the parent before its descendants. \textsf{\textbf{(b)}} The encoding used in~$F_o'$, ordering the parent after its descendants. \textsf{\textbf{(c)}} The recursive orders in~$F_c$: The blue arrow is encoded by $1010xxxx$; if $xxxx=0000$ then the middle cell~$c$ on the second level is located, and otherwise the four trailing bits determine the child of~$c$.}
    \label{fig:q-order}
\end{figure}

Finally, the data structure~$F_o'$ works exactly like~$F_o$, but uses a slightly different encoding that allows a parent cell to be ordered after its ancestors, instead of before, as shown in Figure~\ref{fig:q-order}b. This property is crucial for our usage of~$F_o'$: We will query~$F_o'$ with a cell~$c$ to find the closest obstacle cell~$c_o$, $c_o\preceq c$, whose obstacle helps us determine which disks in higher levels of~$N_k$ (at least as high as~$c$) can be added to~$S_k$ without overlapping (the 3-clearance of) disks in~$S_k$ in the levels below~$c$. The returned obstacle cell is always uniquely defined because we maintain invariant~\ref{inv:obsa}. Either~$c$ has multiple subtrees in which an obstacle is defined, in which case $c$ must be an obstacle cell itself, or the closest obstacle cell~$c_o$ below (and including)~$c$ is located in the single subtree rooted at~$c$ containing all relevant cells contained in~$c$. In both cases, a cell location query with~$c$ will either find~$c_o= c$ as the obstacle cell we are looking for, or returns the predecessor of~$c$, which is the closest obstacle cell~$c_o$ in the single subtree rooted at~$c$. 

\paragraph{Data Structure $T_\cup$.}
Let $L$ denote the list of levels of the nonatree $N_k$ in increasing order. The \emph{weight} $w(\ell)$ of a level $\ell\in L$ is the number of disks in $\mathcal{D}_k$ associated with cells in level $\ell$. In particular, the sum of weights is  $\sum_{\ell\in L} w(\ell)=|\mathcal{D}_k|$.

\begin{figure}
    \centering
    \includegraphics{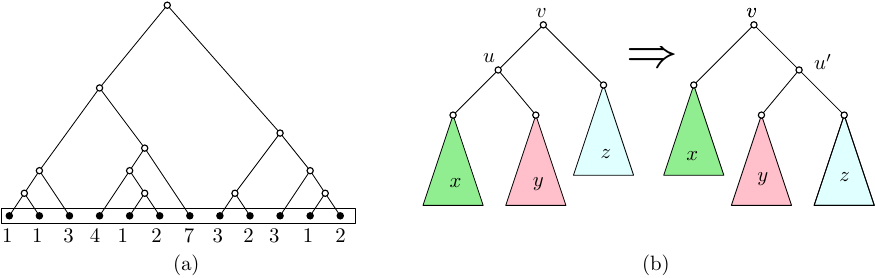}
    \caption{\textsf{\textbf{(a)}} An example for a weight-balanced binary tree over the levels $L$ of the nonatree $N_k$; and the weights the leaves leaves. 
    \textbf{\textsf{(b)}} A right rotation at node $v$ of $T_\cup$.}
    \label{fig:Tcup}
\end{figure}

Let $T_\cup(v)$ be a \emph{weight-balanced binary search tree} on $L$~\cite[Sec.~3.2]{Brass2008}; see Figure~\ref{fig:Tcup}a. That is, $T_\cup(v)$ is a rooted tree, where the leaves correspond to the elements of $L$, and each internal node corresponds to a sequence of consecutive leaves in $L$. The \emph{weight} of a subtree $T_\cup(v)$ rooted at a node $v$, denoted $w(T_\cup(v))$, is the sum of the weights of the leaves in $T_\cup(v)$. The weight-balance is specified by a parameter $\alpha\approx 0.29$, as follows: For each subtree, the left and right sub-subtrees each have at least $\alpha$ fraction of the total weight of the subtree, or is a singleton (i.e., a leaf) of arbitrary weight. It is known that a weight-balanced tree with total weight $n$ has height $O(\log n)$, and supports \emph{insertion} and \emph{deletion} of leaves using $O(\log n)$ rotations (Figure~\ref{fig:Tcup}b). Furthermore, the time from one rotation of a node $v$ to the next rotation of $v$, a positive fraction of all leaves below $v$ are changed~\cite[Sec.~3.2]{Brass2008}.

Recall that each node $v$ of $T_\cup$ corresponds to a sequence of consecutive levels of the nonatree $N_k$. Let $\mathcal{D}_k(v)$ denote the set of all disks of $\mathcal{D}_k$ on these levels; and let $G(v)$ denote the grid corresponding to the highest of these levels. For each cell $c\in G(v)$, let $\mathcal{D}(v,c)$ denote the set of disks in $D_k(v)$ that lie in the cell $c$. Now node $v$ of the tree $T_\cup$ stores, for each nonempty cell $c\in G(v)$, the DFN data structure for $\mathcal{D}_k(v,c)$. 

\begin{lemma}\label{lem:Tcup}
    The data structure $T_\cup$ supports insertions and deletions of disks in $\mathcal{D}_k$ in $O(\log^{10} n)$ expected amortized time, as well as the following query in $O(\log^3 n)$ worst-case time: Given a query cell $c_q$ of $N_k$ and an obstacle disk $o_q$, find the lowest cell $c$ such that $c_q\preceq c$ and there exists a disk $d\in \mathcal{D}_k$ associated with $c$ and disjoint from $d_q$, or report that no such cell exists.  
\end{lemma}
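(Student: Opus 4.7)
The plan is to combine the weight-balanced tree structure of $T_\cup$ with the dynamic farthest-neighbor primitive of Lemma~\ref{lem:disjointness}, instantiated at every node-cell pair of $T_\cup$. For the update analysis, a single insertion or deletion of a disk $d$ at level $\ell$ in cell $c_d$ affects only the $O(\log n)$ ancestors of the leaf for $\ell$ in $T_\cup$: at each such ancestor $v$ we locate the unique $c\in G(v)$ with $c_d\preceq c$ and update the DFN stored at $(v,c)$ in $O(\log^7 n)$ or $O(\log^9 n)$ expected amortized time, giving $O(\log^{10} n)$ per disk update. The $O(\log n)$ possible rebalancing rotations are absorbed into this bound by the standard BB[$\alpha$]-tree analysis~\cite[Sec.~3.2]{Brass2008}: a rotation at $v$ fires only after a constant fraction of the leaves in its subtree have changed, so the cost of rebuilding the affected DFN structures at the affected node-cell pairs can be amortized against those updates.

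For queries, the first phase is an ascent. Starting at the leaf of $T_\cup$ for $c_q$'s level, we walk toward the root; each time we move from a child $u$ to its parent $v$, we examine the sibling $u'$ of $u$ whose level range lies strictly above $c_q$'s level. Because every disk associated with an ancestor $c$ of $c_q$ whose level lies in $u'$'s range is contained in the unique cell $c_{u'}\in G(u')$ with $c_q\preceq c_{u'}$, a single DFN query on $\mathcal{D}(u',c_{u'})$ decisively tests whether any qualifying ancestor can be found in $u'$'s range at all: a negative answer lets us discard $u'$ entirely, while a positive answer triggers a descent into $T_\cup$ below $u'$ to pinpoint the lowest qualifying ancestor. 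The ascent issues $O(\log n)$ DFN queries, each at $O(\log^2 n)$ worst-case cost.

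The main technical obstacle is to bound the descent by $O(\log n)$ DFN queries despite the fact that a positive DFN at $(v,c_v)$ may be a \emph{false positive} for our target: the witnessing disk may sit in a descendant of $c_v$ that is off the ancestor chain of $c_q$. We resolve this recursively in $T_\cup$, descending into the lower-level child first while querying its own ancestor-of-$c_q$ cell, and falling back to the higher-level child only if the lower branch returns no valid answer. At a leaf of $T_\cup$ the grid collapses to a single level, and the cell in $G(\mathrm{leaf})$ on the ancestor chain is precisely the candidate, so a positive DFN there certifies an associated disjoint disk and a negative one rules it out. Thus the descent reaches a conclusive leaf after $O(\log n)$ DFN queries, and combining with the ascent yields the claimed $O(\log^3 n)$ worst-case query bound.
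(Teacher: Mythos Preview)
Your update analysis and the ascent phase of the query match the paper's argument essentially verbatim: $O(\log n)$ DFN updates along the root-to-leaf path of $T_\cup$, plus the standard BB[$\alpha$] amortization for rotation rebuilds. You also correctly flag a subtlety that the paper's proof does not spell out: a positive DFN answer at an internal node $(u,c_u)$ can be a \emph{false positive}, witnessed by a disk lying geometrically inside $c_u$ but whose associated nonatree cell is off the ancestor chain of~$c_q$.

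The genuine gap is in your descent. Your strategy---recurse into the lower-level child whenever its DFN is positive, and fall back to the higher-level child only after the lower branch returns ``no valid answer''---does \emph{not} terminate after $O(\log n)$ DFN queries. Place $c_q$ at level~$0$ and, for each $i=1,\ldots,m$, put one disk $d_i$ at level~$i$ in a cell $c_{d_i}$ that is a nonatree sibling of the ancestor $c_q^{(i)}$ (so the parent of $c_{d_i}$ is $c_q^{(i+1)}$), with every $d_i$ disjoint from~$o_q$. For a node $u$ of $T_\cup$ covering levels $[a,b]$ one has $c_u=c_q^{(b)}$, and $d_i\in\mathcal{D}_k(u,c_u)$ exactly when $a\le i\le b-1$; hence every \emph{internal} node of $T_\cup$ gives a positive DFN while every \emph{leaf} gives a negative one. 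Your backtracking descent then visits all $\Theta(m)$ nodes of the subtree before concluding ``no valid answer''. Adding one genuine witness at the top level does not help: the left-first policy still exhaustively explores every left subtree first. So the claimed $O(\log n)$ DFN queries for the descent---and hence the $O(\log^3 n)$ worst-case query bound---is not established.

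The paper's own ``by a binary search in the subtree $T_\cup(v)$'' is equally terse and leaves the same point unresolved; with the data structure as described (each node indexed by cells of its \emph{highest} level), a plain left-else-right descent can dead-end, and a backtracking descent can be linear. Some structural modification to eliminate false positives appears necessary to make the descent run in $O(\log n)$ DFN queries.
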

\begin{proof}
As noted above, for each internal node $v$, the weight of the two subtrees must change by $\Omega(w(T_\cup(v)))$ between two consecutive rotations. For a rotation at node $v$, we recompute all DFN data structure at the new child of $v$. Specifically, consider w.l.o.g.\ a right rotation at $v$ (refer Figure~\ref{fig:Tcup}b), where the left child $u$ is removed and the new right child $u'$ is created. The DFN data structures at $v$, $x$, $y$, and $z$ remain valid, and we need to compute a new DFN data structure for $u'$. By Lemma~\ref{lem:disjointness}, the expected preprocessing time of the DFN data structure for a set of $m$ disks is $O(m\log^9 m)$. This means that the update of the data structure $T_\cup$ due to a rotation at a node $v$ of weight $m=w(u')\leq O(n)$ takes $O(m\log^9 m)$ 
time. Consequently, rotations at a node $v$ of weight $w(v)$ can be done in $O(\log^9 w(v))\leq O(\log^9 n)$ expected amortized time. 

Thus, a rotation on each level of $T_\cup$ take $O(\log^9 n)$ expected amortized time per insertion or deletion. Summation over $O(\log n)$ levels implies that $O(\log^{10} n)$ expected amortized update time is devoted to rotations. Note also that for a disk insertion or deletion, we also update $O(\log n)$ DFN data structures (one on each level of $T_\cup$), each of which takes $O(\log^9 n)$ expected amortized update time. Overall, the data structure $T_\cup$ supports disk insertion and deletion in $O(\log^{10} n)$ amortized expected time.  

For a query cell $c_q$ and disk $d_q$, consider the ascending path in $T_\cup$ from the level of $c_q$ to the root. Consider the right siblings (if any) of all the nodes in this path. For each right sibling $v$, there is a unique cell $c_v\in G(v)$ such that $c_q\subseteq c_v$. We query the DFN data structure for $\mathcal{D}_k(v,c_v)$. If none of these DFN data structures finds any disk in $\mathcal{D}_k(v,c_v)$ disjoint $d_q$, then report that all disks associated with the ancestor cells of $c_q$ intersect $d_q$. Otherwise, let $v$ be the first (i.e., lowest) right sibling in which the DFN data structure returns a disk $d_v\in \mathcal{D}_k(v,c_v)$ disjoint $d_q$. By a binary search in the subtree $T_\cup(v)$, we find a leaf node $\ell\in L$ in which 
the DFN data structure returns a disk $d_\ell\in \mathcal{D}_k(\ell,c_\ell)$ disjoint $d_q$. In this case, we return the cell $c_\ell$ and the disk $d_\ell$. 
By Lemma~\ref{lem:disjointness}, we answer the query correctly, based on $O(\log n)$ queries to the DFN data structures, which takes $O(\log n)\cdot O(\log^2 n)=O(\log^3 n)$ worst-case time.
\end{proof}

\subsection{Dynamic Maintenance Using Dynamic Farthest Neighbor Data Structures}
\label{sec:dynamic}
To maintain an approximate maximum independent set of disks, we now consider how our data structures are affected by updates: Disks are inserted and deleted into an initially empty set of disks, and our goal is to 
maintain the data structures described in Section~\ref{sec:dynamization} and Section~\ref{sec:DDD}. 

On a high level, for a dynamic set of disks $\mathcal{D}$, we maintain eight nonatrees $N_1,\ldots N_8$, and for each $k\in \{1,\ldots , 8\}$, we maintain the cell location data structures~$F_c$, $F_o$, and~$F_o'$ and two sets of disks: an independent set~$S_k$ and a set of barrier disks~$B_k$. In this section, we show how to maintain these data structures with polylogarithmic update times while maintaining invariants~\ref{inv:1}--\ref{inv:5} described in Section~\ref{sec:dynamization}. For that, we may use the additional data structure~$T_\cup$, as defined in Section~\ref{sec:DDD}, to efficiently query the nonatrees, and their independent sets and barrier disks.

%Note that dynamic changes can influence the greedy property (invariant~\ref{inv:5}) of our (partial) solution~$S_k$, namely, 
%If a disk~$d$ in cell~$c$ is inserted into~$\mathcal{D}$ and does not satisfy invariant~\ref{inv:5} (which implements the greedy property), then we want $d$ to be part of~$S_k$. When a disk $d$ is added to~$S_k$, it may overlap with an obstacle disk in a bucket containing larger disks, in which case we have to delete such a disk from~$S_k$. Such a deletion may allow us to insert an even larger disk into~$S_k$, and this process propagates upwards through the buckets. We stop this propagation of inserting larger disks after $C_P$ iterations, where $C_P$ is an amortized constant. Similarly, when initially a disk is deleted from~$S_k$, we may be able to insert another disk of the same or larger bucket into~$S_k$, resulting in an equivalent propagation upwards. We stop this propagation after $C_P$ iterations, as well.

%A crucial property of the decomposition of a nonatree into ascending paths in 
For maintaining the invariants of our solution, %is that 
we can deal with each ascending path independently: If a disk in cell~$c$ on an ascending path is added to~$S_k$, then we create a new (true) obstacle cell with obstacle~$o(c) = 3d'$ where $d'$ is the smallest enclosing disk of~$c$. Observe that this disk is a subset of the obstacle disk at the top end of the ascending path, since the obstacle cell at the top has strictly larger side length. We ensure that every disk in~$S_k$ does not intersect an obstacle disk below it in the nonatree, and hence if a disk above the ascending path would intersect $o(c)$, then it would also intersect the obstacle of the obstacle cell at the top of the ascending path (cf.~Lemma~\ref{lem:closest-obstacle}). Thus when adding disks to~$S_k$, changes are contained within an ascending path. 

More specifically, when a disk $d$ associated with a cell $c\in N_k$ is inserted or deleted, then $c$ lies in an ascending path~$P(d)$ between two obstacle cells, say $c_1\preceq c\prec c_2$. To update the independent set $S_k$ and the barrier disks $B_k$, in general we run the greedy algorithm in this path. The greedy process guarantees that these disks are disjoint from any smaller disk in $S_k$. However, the newly added disks in $S_k$ may intersect the disk $s_2\in S_k$ associated with $c_2$: If this is the case, we delete $s_2$ from $S_k$, insert it into $B_k$, and assign it to the highest disk in~$S_k$ in $P(d)$ below $s_2$; this highest disk in $P(d)$ is necessarily the disk added last to~$S_k$, causing the intersection with~$s_2$.
Note, however, that if $s_2$ was already associated with a barrier disk, $\beta(c_2)$, then adding $s_2$ to $B_k$ would violate invariant~\ref{inv:4c}. For this reason, if $\beta(c_2)$ exists, we remove $s_2$ from $S_k$, run the greedy algorithm on a longer path, up to the cell associated with $\beta(c_2)$, and then reassign $\beta(c_2)$ to the largest disk in $S_k$ found by the greedy algorithm. Overall we distinguish between three cases to handle these scenarios (cf.~step~\ref{upd:while-loop} of \texttt{UpdateIS} below). 

We are now ready to explain for each insertion/deletion of a disk into~$\mathcal{D}$, how we update our data structures, beginning with a few useful subroutines:

\paragraph{Adding to and Removing from~$S_k$.} We start by specifying two subroutines that we use often to update an independent set~$S_k$. When we add a disk to or remove a disk from~$S_k$, we need to interact with the data structures~$T^i_k$, $F_o$, and~$F_o'$. These subroutines help abstracting from those data structures in the upcoming routines. First we explain how to \texttt{Add} a disk~$d$ in cell~$c$ to~$S_k$:

\begin{enumerate}
    \item Insert~$d$ into~$T^i_k$.
    \item\label{add:obs} Create the obstacle~$o(c)$ and insert~$c$ into~$F_o$ and~$F_o'$ (if it is not in these yet).
\end{enumerate}

Second, we explain the subroutine to \texttt{Remove} a disk~$d$ in cell~$c$ from $S_k$:

\begin{enumerate}
    \item Delete~$d$ from~$T^i_k$.
    \item\label{remove:obs} Remove the obstacle~$o(c)$ and delete~$c$ from~$F_o$ and~$F_o'$.
\end{enumerate}

\paragraph{Greedy Independent Set Procedure.} The subroutine \texttt{GreedyIS} runs the greedy algorithm on an ascending path in the nonatree~$N_k$ between the cells $c_1$ and $c_2$, $c_1 \prec c_2$, finds pairwise independent disks that are also disjoint from the obstacle disk $o(c_1)$, and returns the highest obstacle in the path at the end of the process. 

\begin{enumerate}
    \item\label{upd:find-cell} Query~$T_\cup$ with $c_1$ and $o(c_1)$ to either find the lowest cell~$c^*$ (in bucket $i$ of~$N_k$), such that $c_1\preceq c^*$, together with a disk $d\in\mathcal{D}_k$ associated with $c^*$ and disjoint from~$o(c_1)$, or find that no such cell and disk exist.
    \item\label{gr:while-loop} While $c^*$ exists and querying~$F_o'$ with $c^*$ returns a cell~$c \prec c_2$, repeat the following steps:
    \begin{enumerate}
        \item\label{upd:ins-Sk} First \texttt{Add} disk~$d$ to~$S_k$.
        \item\label{upd:set-obs} Rename $c^*$ to $c_o$.
        \item\label{upd:find-cell2} Query~$T_\cup$ with $c_o$ and $o(c_o)$ to either find the lowest cell~$c^*$ (in bucket $i$ of~$N_k$), such that $c_o\preceq c^*$, together with a disk $d\in\mathcal{D}_k$ associated with $c^*$ and disjoint from~$o(c_o)$, or find that no such cell and disk exist. 
        %\item While $c^*$ exists and querying~$F_o$ with $c^*$ results in $c_o$.
    \end{enumerate}
    \item Return $o(c_o)$.
\end{enumerate}

\paragraph{Updating Independent Sets.} Subroutine \texttt{UpdateIS} finds, for a cell~$c$ in $N_k$, the ascending path in the nonatree that contains $c$, and then runs \texttt{GreedyIS}, distinguishing between three cases based on whether the obstacle cell at the top of the path is associated with a disk in $S_k$ and whether it assigned a barrier. 
%It works as follows. 
Several steps of \texttt{UpdateIS} are visualized in Figure~\ref{fig:updateIS}.
\begin{enumerate}
    \item\label{upd:find-obs} Query~$F_o'$ with $c$ to find the highest obstacle cell~$c_o$, with $c_o \preceq c$.
    \item\label{upd:empty-Bk} If $\beta(c_o)\neq\textsc{nil}$, then remove $\beta(c_o)$ from~$B_k$ and set $\beta(c_o):=\textsc{nil}$.
     
    \item\label{upd:while-loop} Query~$F_o$ with the parent of~$c^*$, to find the lowest obstacle cell~$c^-$ such that $c^*\prec c^-$
    \begin{enumerate}
        \item\label{upd:case1} If no disk~$d^-\in S_k$ is associated with~$c^-$, then call \texttt{GreedyIS} with $c_1=c_o$ and $c_2=c^-$.  
        \item\label{upd:case2} Else if a disk~$d^-\in S_k$ is associated with~$c^-$, but $\beta(c^-)$ does not exist, then call \texttt{GreedyIS} with $c_1=c_o$ and $c_2=c^-$, which returns an obstacle disk $o(c)$ for some cell $c_1\prec c\prec c_2$. If $o(c)$ intersects $d^-$, then 
        \texttt{Remove} $d^-$ from $S_k$, add it to $B_k$, and set $\beta(c)=d^-$.
        \item\label{upd:case3} Else (a disk~$d^-\in S_k$ is associated with~$c^-$, and $\beta(c^-)$ exists), then \texttt{Remove} $d^-$ from $S_k$.         
        Call \texttt{GreedyIS} with $c_1=c_o$ and $c_2$ being the cell associated with $\beta(c^-)\in B_k$, which returns an obstacle disk $o(c)$ for some $c_1\prec c\prec c_2$. 
        If $o(c)$ intersects $\beta(c^-)$, then set $\beta(c)=\beta(c^-)$, else \texttt{Add} $\beta(c^-)$ to $S_k$ and remove $\beta(c^-)$ from $B_k$. 
        In both cases set $\beta(c^-)=\textsc{nil}$.
    \end{enumerate}
\end{enumerate}

\begin{figure}
    \centering
    \includegraphics[page=2]{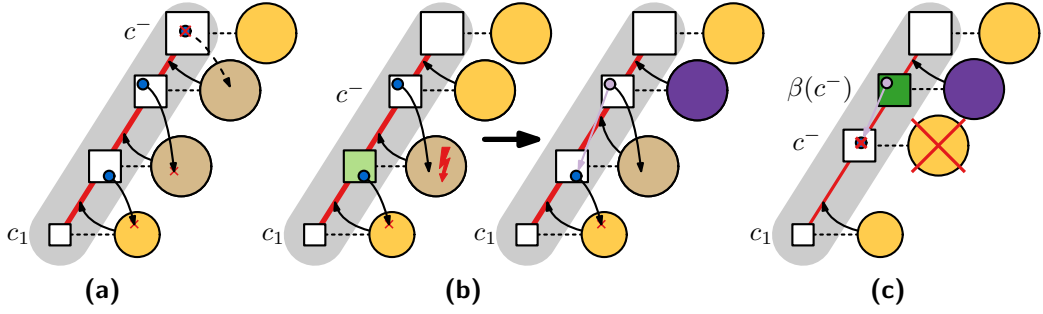}
    \caption{The three cases in Step~\ref{upd:while-loop} of \texttt{UpdateIS} ensure that no two barriers exist between consecutive obstacle cells in the gray ascending path(s): \textbf{\textsf{(a)}} There is no disk~$d^-\in S_k$ in~$c^-$ that can intersect the new (brown) obstacle disks in the gray ascending path. \textbf{\textsf{(b)}} The disk $d^-\in S_k$ in~$c^-$ is turned into a barrier if it overlaps the obstacle disk of the highest new disk in the light green cell. \textbf{\textsf{(c)}} If~$\beta(c^-)$ exists, remove~$d^-$ from~$S_k$ and run \texttt{GreedyIS} up to the dark green cell.}
    \label{fig:updateIS}
\end{figure}

\paragraph{Insertion.} 

Let $d_q$ be a disk that is inserted in step~$q$, we make the following updates:
\vspace{.2cm}

We \texttt{Insert} $d_q$ into our data structures, which relies on subroutine \texttt{UpdateIS}. 
\begin{enumerate}
    \item\label{ins:find-cell} Find the bucket~$i$ that $d_q$ belongs to based on its radius, and find a unique cell~$c$ in bucket~$i$ that fully contains $d_q$, using the center point of~$d_q$, similar to the unit disk case (see Figure~\ref{fig:grids}).
    \item\label{ins:find-nonatree} Determine the nonatree~$N_k$ that~$d_q$ will be inserted into: The bucket~$i$ determines whether we insert into a nonatree consisting of odd or even buckets, and the cell~$c$ determines which shifted grid, and hence which of the four nonatrees of the appropriate parity we insert into. %(If the cell does not exists, create it).
    \item\label{ins:ins-nonatree} Add~$d_q$ to~$N_k$: Remember that $N_k$ is compressed, and hence we need to first locate~$c$ in $N_k$ (in $O(\log{n})$ time). If $c$ does not exist, then the cell location query with~$c$ finds the lowest ancestor~$c_a$ of~$c$. Analogous to compressed quadtrees, inserting~$c$ as a descendant of~$c_a$ requires at most a constant number of other cells to be updated; these are either split or updated in terms of parent-child relations. We now make the following changes:
    \begin{enumerate}
        \item\label{ins:obs-leaf} If~$c$ is a leaf that is not in~$F_o$ and~$F_o'$ yet, \texttt{Add} an arbitrary disk $d$ associated with~$c$ to $S_k$. Add $c$ to~$F_o$ and~$F_o'$. 
        \item\label{ins:obs-misc} For any cell~$c'$ of the $O(1)$ cells that may get new children by updated parent-child relationships, we query~$F_o'$ with the child cells, to find out whether $c'$ has two or more relevant children.
        %here are obstacles (and hence cells that contribute disks) in at least two subtrees. 
        If so, create the obstacle disk $o(c')$ and insert $c'$ into $F_o$ and~$F_o'$ (if it was not in these yet), and call subroutine \texttt{UpdateIS} on~$c'$.
        \item\label{ins:delete_Bk1} Furthermore, before creating $o(c')$ and calling subroutine \texttt{UpdateIS} on~$c'$, we query~$T^i_k$ with~$c'$ to find whether it is associated with a disk~$d\in S_k$, if so, \texttt{Remove} $d$ from~$S_k$. %remove $\beta(c')$ from $B_k$, and set $\beta(c'):=\textsc{nil}$.
        \item\label{ins:delete_Bk2} If $c'$ had only a single subtree with relevant children before, let $c_o$ be the obstacle cell found by~$F_o'$ by querying with the root of this subtree. 
        %We query~$T^i_k$ with~$c_o$ to find whether it is associated with a disk~$d\in S_k$. In case such a disk exists, 
        If $\beta(c_o)\neq\textsc{nil}$ and $c'$ lies between $c_o$ and the cell associated with $\beta(c_0)$, then remove $\beta(c_o)$ from $B_k$, and set $\beta(c_o):=\textsc{nil}$.
    \end{enumerate}
    To finalize this step, insert~$d_q$ into~$T^i_\mathcal{D}$, such that we can find~$d_q$ after querying~$T^i_\mathcal{D}$ for cell~$c$. Similar to the unit disk case, this may require us to insert a node for cell~$c$ in ~$T^i_\mathcal{D}$ first.
    \item\label{ins:ins-Tcup} Insert~$d_q$ into~$T_\cup$ of~$N_k$. Precisely, $d_q$ is inserted into the DFN data structure of cell~$c$ in the leaf~$t$ of~$T_\cup$ corresponding to bucket~$i$. Note that if there was no leaf for bucket~$i$ yet, then this node is created and $T_\cup$ may be rebalanced (all in $O(\log^9{n})$ expected amortized time). Additionally, if $c$ did not exist yet, then the DFN data structure is initialized in this step. Subsequently, $d_q$ is added to all $O(\log{n})$ nodes on the path from~$t$ to the root of~$T_\cup$. In particular, $d_q$ is inserted in all DFN data structures of these nodes corresponding to the cell (of a coarser grid) that overlaps~$c$.   
    \item\label{ins:call-update} Call subroutine \texttt{UpdateIS} on cell~$c$.
    % \item\label{ins:call-cleanup} Call subroutine \texttt{cleanupCD}.
\end{enumerate}

\paragraph{Deletion.} Let $d_q$ be a disk that is deleted in step~$q$, we make the following updates:
\vspace{.2cm}

We \texttt{Delete} $d_q$ from our data structures, which again relies on the subroutine \texttt{UpdateIS}.
\begin{enumerate}
    \item\label{del:find-cell} Find the bucket~$i$ that $d_q$ belongs to based on its radius, and find a unique cell~$c$ in bucket~$i$ that fully contains $d_q$, using the center point of~$d_q$, similar to the unit disk case (see Figure~\ref{fig:grids}).
    \item\label{del:find-nonatree} Determine the nonatree~$N_k$ that~$d_q$ is located in: The bucket~$i$ determines whether we insert into a nonatree consisting of odd or even buckets, and the cell~$c$ determines which shifted grid, and hence which of the four nonatrees of the appropriate parity we insert into. %(If the cell does not exists, create it).
    \item\label{del:del-nonatree} Remove~$d_q$ from~$N_k$: We first locate~$c$ in $N_k$.
    If $c$ does not exist, we are done (since $d_q$ does then not exist in~$N_k$). If $c$ exists, let~$c_p$ be the parent cell of~$c$ (if such a parent exists).
    We delete~$d_q$ from~$T^i_\mathcal{D}$, and query it with~$c$ to check if the cell is now empty. If so, we delete~$c$ from~$T^i_\mathcal{D}$ and also from~$N_k$, which requires at most a constant number of other cells in~$N_k$ to be updated; these are either merges or updated in terms of parent-child relations. Note that these merges can merge $c_p$, in case it is empty, with other (empty) siblings, and we consider~$c_p$ to be the lowest existing non-empty ancestor of~$c$. We now make the following changes:
    \begin{enumerate}
        \item\label{del:obs-leaf} If~$c_p$ is a leaf, query~$T^i_k$ with~$c_p$ to find whether a disk~$d_p\in S_k$ is associated with~$c_p$, if not, \texttt{Add} an arbitrary disk associated with~$c_p$ to $S_k$. %Call subroutine \texttt{UpdateIS} on $c_p$. 
        \item\label{del:obs-unmerge}
        If~$c_p$ is a merge obstacle cell,   query~$F_o'$ with the child cells of~$c_p$,    to find out whether there are obstacles (and hence relevant cells) in at least two subtrees. If not, also remove the obstacle~$o(c)$ and delete~$c$ from~$F_o$ and~$F_o'$, remove~$\beta(c)$ from $B_k$, and set $\beta(c):=\textsc{nil}$. %Call subroutine \texttt{UpdateIS} on $c_p$. 
        %\item\label{del:obs-misc} For any cell~$c'$ of these~$O(1)$ cells that may get new children by updated parent-child relations, we query~$F_o'$ with the child cells, to find out whether there are obstacles (and hence cells that contribute disks) in at least two subtrees. If not, we remove the obstacle~$o(c')$ and delete~$c'$ from~$F_o$ and~$F_o'$, and call subroutine \texttt{UpdateIS} on $c'$.
        %we set the obstacle disk $o(c') = 3d'$, where $d'$ is the smallest enclosing disk of~$c'$ and insert $c'$ into $F_o$ and~$F_o'$ (if it was not in these yet).
        %\item\label{del:obs-del} If a cell~$c'$ has an obstacle but no longer has two subtrees with obstacles, .
    \end{enumerate}
    %To finalize this step, delete~$d_q$ from~$T^i_\mathcal{D}$.%, such that we can no longer find~$d_q$ when querying~$T^i_\mathcal{D}$ for cell~$c$.
    \item\label{del:del-Tcup} Delete~$d_q$ from~$T_\cup$ of~$N_k$. Precisely, $d_q$ is deleted from the DFN data structure of cell~$c$ in the leaf~$t$ of~$T_\cup$ corresponding to bucket~$i$. Note that if $c$ is now empty, then the DFN data structure for it can be removed. Additionally, if the leaf for bucket~$i$ is now empty, then this node is deleted and $T_\cup$ may be rebalanced (in $O(\log^9{n})$ expected amortized time). Subsequently, $d_q$ is removed from all $O(\log{n})$ nodes on the path from~$t$ to the root of~$T_\cup$. In particular, $d_q$ is deleted from all DFN data structures of these nodes corresponding to the cell (of a coarser grid) that overlaps~$c$, again removing DFN data structures when empty.   
    \item\label{del:del-Sk} Query~$T^i_k$ with $c$ to find whether~$d_q\in S_k$. If so, \texttt{Remove}~$d_q$ from~$S_k$, remove~$\beta(c)$ from $B_k$, and set $\beta(c):=\textsc{nil}$. %Then, query~$F_o'$ with the child cells of~$c$, to find out whether there are obstacles (and hence cells that contribute disks) in at least two subtrees. If not, also remove the obstacle~$o(c)$ and delete~$c$ from~$F_o$ and~$F_o'$.
    \item\label{del:del-Bk} Query $F_o'$ with $c$ to find the highest obstacle cell $c_o$, with $c_o \preceq c$. If $\beta(c_o)=d_q$, remove~$d_q$ from~$B_k$, and set $\beta(c):=\textsc{nil}$. 
    \item\label{del:call-update} Call subroutine \texttt{UpdateIS} on cell~$c$, if it is not deleted. Otherwise call \texttt{UpdateIS} on cell~$c_p$.
    % \item Call subroutine \texttt{cleanupCD}.
\end{enumerate}

\paragraph{Update Time Analysis.} Subroutine \texttt{UpdateIS} runs a greedy algorithm on an ascending path $P$ of $N_k$ between two consecutive obstacle cells. %, $c_o\prec c_o'$. 
We show first that the number of iterations in the while loop of \texttt{GreedyIS} (step~\ref{gr:while-loop}) is bounded by  $O(1)$.
%$|B_k(d_o)|$, i.e., the number of candidate disks assigned to the disk $d_o\in S_k$ associated with cell $c_o$ at the bottom of ascending path $P(d_o)$. 
\begin{lemma}\label{lem:LocalGreedy}
In each call to \texttt{GreedyIS}, the while loop  terminates after $O(1)$ iterations. 
% Let $c_o$ be an obstacle cell in $N_k$ associated with an obstacle disk $d_o$. 
% Then the while loop of \texttt{UpdateIS} on $c_o$ terminates after $O(1)$ 
 %  iterations. 
\end{lemma}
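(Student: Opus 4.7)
The plan is to bound the number $j$ of iterations of the while loop by showing that every disk $d_1, \dots, d_j$ that \texttt{GreedyIS} adds to $S_k$ belongs to a ``candidate pool'' of size $O(1)$ that is fixed before the loop begins.

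First, I would establish structural properties of the sequence of visited cells $c^*_1 \prec c^*_2 \prec \cdots \prec c^*_j$. Each $c^*_i$ must be a strict ancestor of $c^*_{i-1}$ in $N_k$, because every disk associated with $c^*_{i-1}$ is entirely contained in $c^*_{i-1} \subset d'_{c^*_{i-1}} \subset o(c^*_{i-1})$ and therefore cannot be the disk returned by $T_\cup$. Combined with Lemma~\ref{lem:sparse}, this yields $r(d_{i+1}) > 3\,r(d_i)$. A direct triangle-inequality computation, using that $o(c) = 3\,d'_c$ triples the radius of the enclosing disk, shows moreover that the obstacle disks are nested, $o(c_1) \subset o(c^*_1) \subset \cdots \subset o(c^*_j)$. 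Consequently each $d_i \subset c^*_i \subset o(c^*_i)$, the added disks are pairwise disjoint, and every $d_i$ is disjoint from $o(c_1)$.

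Second, I would bound the pool of candidate disks: those $d \in \mathcal{D}_k$ in cells strictly between $c_1$ and $c_2$ that are disjoint from $o(c_1)$. By invariant~\ref{inv:4c}, at most one barrier $b \in B_k$ can be assigned to $c_1$, residing in some cell $c_b$, and no other barrier or obstacle lies strictly between $c_1$ and $c_2$ on the ascending path. Just before the current update, invariants~\ref{inv:1}--\ref{inv:5} held, so by invariant~\ref{inv:5} every candidate disk must have intersected $o(c_1)$ or the barrier clearance disk $o(c_b)$; being disjoint from $o(c_1)$ rules out the first case, leaving only those intersecting $o(c_b)$. Lemma~\ref{lem:obstacle-packing2} bounds the number of pairwise disjoint higher-bucket disks intersecting $o(c_b)$ by $O(1)$. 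The dynamic update together with the constant-sized bookkeeping of \texttt{UpdateIS} before \texttt{GreedyIS}---the barrier removal in step~\ref{upd:empty-Bk} and the $O(1)$ additional removals in the insertion steps~\ref{ins:delete_Bk1}--\ref{ins:delete_Bk2} or the analogous deletion steps~\ref{del:obs-unmerge}--\ref{del:del-Bk}---can introduce at most $O(1)$ further candidates: the newly inserted disk, plus the $O(1)$ disks that had been previously shielded by each removed obstacle or barrier clearance disk, again bounded by Lemma~\ref{lem:obstacle-packing2}.

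Since the $d_i$ are pairwise disjoint, all disjoint from $o(c_1)$, and each lies in a strictly higher bucket than $c_1$ by Lemma~\ref{lem:sparse}, every $d_i$ must come from this $O(1)$-sized pool; the nested obstacles $o(c^*_i) \supset o(c_1)$ only shrink the pool in later iterations, so the loop terminates after $O(1)$ iterations. The main obstacle to making this rigorous is the careful enumeration of which disks become new candidates due to \texttt{UpdateIS}'s bookkeeping, ensuring that each of the $O(1)$ removed obstacles or barrier clearance disks exposes at most $O(1)$ additional candidates via Lemma~\ref{lem:obstacle-packing2}.
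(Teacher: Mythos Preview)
Your proposal is correct and follows essentially the same strategy as the paper: use invariant~\ref{inv:5} to argue that every disk on the ascending path $P$ was, prior to the update, dominated by some obstacle or barrier clearance disk; bound the number of such dominating disks by $O(1)$; and apply Lemma~\ref{lem:obstacle-packing2} to conclude that only $O(1)$ pairwise disjoint disks can be picked up by \texttt{GreedyIS}.

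The only difference is bookkeeping style. The paper argues directly that $P$ is contained in the union of at most four ascending paths of the nonatree \emph{before} the update (since \texttt{UpdateIS} may concatenate two paths in case~\ref{upd:case3}, and \texttt{Insert}/\texttt{Delete} remove at most one further obstacle cell beforehand), and each such original path carries at most two dominating disks (one obstacle at the bottom, at most one barrier). You instead fix the current endpoints $c_1,c_2$, isolate the dominating disks that still exist, and then separately account for the $O(1)$ obstacle or barrier cells removed during the update, each exposing $O(1)$ new candidates. These are equivalent; the paper's counting of original paths is exactly what you call ``the careful enumeration'' in your final sentence, and it is no harder than what you already sketched. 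Your first paragraph (nesting of obstacle disks, pairwise disjointness of the $d_i$) is correct but not actually needed---the paper omits it, since pairwise disjointness of $S^*$ follows directly from the construction and is all Lemma~\ref{lem:obstacle-packing2} requires.
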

\begin{proof}
Assume that \texttt{GreedyIS} is called for an ascending path $P$ between obstacle cells $c_1\prec c_2$.  Let $S^*$ be the set of disks added to $S_k$ in the while loop of \texttt{GreedyIS}. Then the while loop has $|S^*|+1$ iterations. We need to show that $|S^*|\leq O(1)$. 

By invariant~\ref{inv:5}, all disks associated with the cells of the ascending path $P$ between $c_1$ and $c_2$ intersect an obstacle disk or an barrier clearance disk $o(c)$ for some cell $c_1\preceq c\prec c_2$. We call such (barrier) obstacle disks the \emph{dominating} disks of~$P$. By Lemma~\ref{lem:obstacle-packing2} each dominating disk intersects $O(1)$ pairwise disjoint disks. Thus it is enough to show that $P$ has $O(1)$ such dominating disks.

Before any updates are made, any ascending path $P_0$ between two consecutive obstacle cells contains at most two dominating disks of $P_0$, the obstacle disk of the bottom cell of the path, and at most one barrier clearance disk, by invariant~\ref{inv:obs}. The ascending path $P$ is contained in the union of $O(1)$ such paths $P_0$.

Specifically, \texttt{GreedyIS} is called only in step~\ref{upd:while-loop} of \texttt{UpdateIS}, in three possible cases. In cases~\ref{upd:case1} and \ref{upd:case2}, $P$ is a path between two obstacle cells, and in case~\ref{upd:case3} it is contained in the union of two such paths. However, \texttt{UpdateIS} is called after the insertion or deletion of up to one obstacle cell in \texttt{Insert} and \texttt{Delete}. Consequently, $P$ is contained in the union of up to four ascending paths of the initial nonatree (i.e., the nonatree before the current update), and so $P$ contains at most $O(1)$ dominating disks, as required. 
%
% Let $c_1'$ be lowest obstacle cell in $N_k$ with $c_1\prec c_1'$; and consider the ascending path $P(d_o)$ in $N_k$ between $c_1$ and $c_1'$.  There is at most one barrier disk, say $b\in B_k$, in the cells along $P(d_o)$ by invariant~\ref{inv:4c}. Let $\mathcal{D}_P$ be the set of disks in $\mathcal{D}$ associated with the cells in $P(d_o)$. By invariant~\ref{inv:5}, every disk in $\mathcal{D}$ intersects $o(c_o)$ or the barrier clearance disk $o(c_b)$ of $b$, if it exists. The greedy process in \texttt{UpdateIS} adds disks to $S^*$ that are disjoint from $o(d_o)$ and each disk $d\in S^*$ is also disjoint from $o(d')$ for any smaller disk $d'\in S^*$. Consequently, each disk in $S^*$ intersects the barrier clearance disk of $b$. By Lemma~\ref{lem:clearance}, $b$ intersects at most one disk in $S^*$ in larger buckets. 
%This yields $|S^*|\leq 1$, as required.  
\end{proof}

\begin{lemma}\label{lem:greedyIS}
   Subroutine \texttt{GreedyIS} takes polylogarithmic amortized expected update time.
\end{lemma}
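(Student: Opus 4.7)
My plan is to decompose a single invocation of \texttt{GreedyIS} into its constituent operations and bound each one using the results already established.

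First, I would appeal to \lemmaref{lem:LocalGreedy}, which caps the number of iterations of the while loop in step~\ref{gr:while-loop} at $O(1)$. This is the crucial structural ingredient that prevents any ascending-path traversal from being too long, and it has already been proven. With the iteration count under control, the remaining task is purely a per-step cost analysis.

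Next I would tally the cost incurred outside the loop and within each iteration. The initial query to $T_\cup$ in step~\ref{upd:find-cell} costs $O(\log^3 n)$ worst-case by \lemmaref{lem:Tcup}. Inside each iteration, step~\ref{gr:while-loop} invokes an $F_o'$ query (worst-case $O(\log n)$ for the $\mathcal{Q}$-order on the compressed nonatree, as established in Section~\ref{sec:DDD}), the \texttt{Add} subroutine (which performs one insertion into a self-balancing search tree $T^i_k$ and one insertion into each of the ordered sets $F_o$ and $F_o'$, all in $O(\log n)$ time; note that \texttt{Add} does \emph{not} touch $T_\cup$), and one additional $T_\cup$ query in step~\ref{upd:find-cell2}, again $O(\log^3 n)$ worst-case by \lemmaref{lem:Tcup}. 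Thus each iteration costs $O(\log^3 n)$ worst-case time.

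Combining, $O(1)$ iterations at $O(\log^3 n)$ each, plus the initial $O(\log^3 n)$ query, gives a worst-case bound of $O(\log^3 n)$ per invocation of \texttt{GreedyIS}, which is certainly polylogarithmic amortized expected time, as claimed. The only subtlety is verifying that \texttt{GreedyIS} never itself triggers a rebalancing rotation or insertion into $T_\cup$; this is immediate from inspecting \texttt{Add}, whose writes are confined to $T^i_k$, $F_o$, and $F_o'$. Accordingly, I expect no real obstacle in the proof: the substantive work has already been done in \lemmaref{lem:LocalGreedy} (bounding iterations via the packing argument of \lemmaref{lem:obstacle-packing2}) and \lemmaref{lem:Tcup} (bounding query time on the hierarchical DFN structure), and this lemma is essentially a bookkeeping corollary of the two.
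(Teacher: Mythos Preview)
Your proposal is correct and follows essentially the same approach as the paper: invoke \lemmaref{lem:LocalGreedy} to bound the loop to $O(1)$ iterations, then tally the cost of the $T_\cup$ queries via \lemmaref{lem:Tcup}, the $F_o'$ queries, and the \texttt{Add} calls. If anything you are more careful than the paper, correctly noting that the $T_\cup$ queries here are worst-case $O(\log^3 n)$ and that \texttt{GreedyIS} never writes to $T_\cup$, so the ``expected amortized'' qualifier is in fact unnecessary for this lemma.
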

\begin{proof}
The while loop in step~\ref{gr:while-loop} is repeated $O(1)$ times by Lemma~\ref{lem:LocalGreedy}. 
It queries the data structure $T_\cup$ in step~\ref{upd:find-cell} and in each iteration of step~\ref{upd:find-cell2}. Each query, and $O(1)$ queries jointly, take polylogarithmic expected amortized time by Lemma~\ref{lem:Tcup}. 
Each iteration of the while loop queries $F_o'$ in $O(\log n)$ time and calls \texttt{Add}.
Subroutine \texttt{Add} interacts with the data structure~$F_o'$, $F_o$, and $T_k^i$ in polylogarithmic time. Overall,  \texttt{GreedyIS} takes polylogarithmic amortized expected update time.
\end{proof}

\begin{lemma}\label{cor:update}
   Subroutine \texttt{UpdateIS} takes polylogarithmic amortized expected update time.
\end{lemma}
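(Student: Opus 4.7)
The plan is to observe that \texttt{UpdateIS} is essentially a constant-sized dispatcher around a single call to \texttt{GreedyIS}, so the bound follows by combining \lemmaref{lem:greedyIS} with a short accounting of the remaining steps. Since \texttt{GreedyIS} already takes polylogarithmic amortized expected time, it suffices to show that the rest of \texttt{UpdateIS} contributes only polylogarithmic worst-case time per invocation.

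Going through the routine in order: Step~\ref{upd:find-obs} performs a single cell-location query on $F_o'$, which takes $O(\log n)$ by the $\mathcal{Q}$-order implementation. Step~\ref{upd:empty-Bk} removes at most one element from the self-balancing search tree storing $B_k$ and updates the map $\beta$ at a single cell, for another $O(\log n)$. Step~\ref{upd:while-loop} begins with one query on $F_o$ to locate the surrounding obstacle cell $c^-$, again in $O(\log n)$, and then branches into Cases~\ref{upd:case1}--\ref{upd:case3}; each branch issues exactly one call to \texttt{GreedyIS}, along with at most a constant number of auxiliary \texttt{Add}/\texttt{Remove} operations, $\beta$-updates, and a single disjointness test between two explicit disks.

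Each auxiliary \texttt{Add} or \texttt{Remove} only inserts into or deletes from the self-balancing search trees $T_k^i$, $F_o$, and $F_o'$, so each such operation costs $O(\log n)$ in the worst case. The disjointness test between an obstacle disk $o(c)$ and a fixed disk (either $d^-$ or $\beta(c^-)$) is a constant-time geometric predicate. Hence the portion of \texttt{UpdateIS} outside the call to \texttt{GreedyIS} runs in $O(\log n)$ worst-case time, while the \texttt{GreedyIS} call itself costs polylogarithmic amortized expected time by \lemmaref{lem:greedyIS}. Summing the two contributions yields the claimed polylogarithmic amortized expected bound; no real obstacle arises, as this lemma is a bookkeeping corollary of \lemmaref{lem:greedyIS}.
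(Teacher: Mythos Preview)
Your proof is correct and follows essentially the same approach as the paper's: you step through \texttt{UpdateIS}, bound each auxiliary operation (queries to $F_o$, $F_o'$, the \texttt{Add}/\texttt{Remove} calls, and the $\beta$-bookkeeping) by $O(\log n)$, and defer the cost of the single \texttt{GreedyIS} call to \lemmaref{lem:greedyIS}. The only cosmetic difference is that the paper states the barrier manipulation in step~\ref{upd:empty-Bk} as $O(1)$ rather than $O(\log n)$, and you are slightly more explicit about the constant-time disjointness predicate; neither affects the argument.
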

\begin{proof}
Steps~\ref{upd:find-obs} and~\ref{upd:while-loop} of \texttt{UpdateIS} query the data structure~$F_o'$ and $F_o$ in in $O(\log{n})$ time each. 
%, and with~$T^i_k$ in $O(\log^2{n})$ time. 
Step~\ref{upd:empty-Bk} manipulates barrier disks in $O(1)$ time. 
Each of the three cases in step~\ref{upd:while-loop} calls \texttt{GreedyIS}, and it may also run \texttt{Add} or \texttt{Remove} once. 
Subroutines \texttt{Add} and \texttt{Remove} each interact with the data structure~$F_o'$, $F_o$, and $T_k^i$ in polylogarithmic time. 
Subroutine \texttt{GreedyIS} takes polylogarithmic amortized expected update time by Lemma~\ref{lem:greedyIS}. Overall, \texttt{UpdateIS} runs a polylogarithmic expected amortized time. 
\end{proof}

\begin{lemma}\label{lem:insert-time}
Dynamic insertion of a disk takes polylogarithmic amortized expected update time; 
and incurs $O(1)$ insertions to and deletions from $S_k$ for some  $k\in \{1,\ldots, 8\}$. 
\end{lemma}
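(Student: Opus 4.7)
The plan is to walk through each numbered step of \texttt{Insert} and bound its cost and effect on $S_k$ using the lemmas already established. First, I would observe that steps~\ref{ins:find-cell} and~\ref{ins:find-nonatree} only require a constant-time arithmetic computation (to determine the bucket index $i$ from the radius of $d_q$) together with an $O(\log n)$ lookup to identify the target cell $c$ and the appropriate shifted nonatree $N_k$; neither step alters $S_k$.

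For step~\ref{ins:ins-nonatree}, I would rely on the standard fact (in analogy with compressed quadtrees, cf.~\cite[Chapter~2]{har2011geometric}) that inserting $c$ into the compressed nonatree $N_k$ touches only $O(1)$ cells; the initial cell-location query in $F_c$ that finds the lowest existing ancestor of $c$ takes $O(\log n)$. For each of these $O(1)$ affected cells, substeps~\ref{ins:obs-leaf}--\ref{ins:delete_Bk2} perform a bounded number of operations on $F_o$, $F_o'$, $T^i_k$, and $T^i_\mathcal{D}$ (each $O(\log n)$), plus at most one call to \texttt{UpdateIS}, which runs in polylogarithmic expected amortized time by \lemmaref{cor:update}. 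Step~\ref{ins:ins-Tcup} inserts $d_q$ into $T_\cup$, costing $O(\log^{10} n)$ expected amortized by \lemmaref{lem:Tcup}, and step~\ref{ins:call-update} is one more \texttt{UpdateIS} call. Summing these $O(1)$ polylogarithmic contributions yields the claimed update time.

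For the $O(1)$ bound on changes to $S_k$, I would combine two observations. First, each invocation of \texttt{UpdateIS} contributes $O(1)$ changes to $S_k$: its only source of changes is the subroutine \texttt{GreedyIS}, whose while loop terminates after $O(1)$ iterations by \lemmaref{lem:LocalGreedy}, plus at most one additional \texttt{Add} or \texttt{Remove} from the case analysis in step~\ref{upd:while-loop} of \texttt{UpdateIS}. Second, \texttt{Insert} triggers \texttt{UpdateIS} only $O(1)$ times (at most once per cell affected in step~\ref{ins:ins-nonatree}, plus the final call in step~\ref{ins:call-update}), and the direct manipulations of $S_k$ inside substeps~\ref{ins:obs-leaf} and~\ref{ins:delete_Bk1} also add up to $O(1)$.

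The main obstacle is the structural argument hiding in step~\ref{ins:ins-nonatree}: one must carefully verify that inserting $c$ into the compressed nonatree alters only $O(1)$ cells (beyond $c$ itself), so that the number of subsequent \texttt{UpdateIS} invocations is indeed constant. This amounts to adapting the standard compressed-quadtree insertion analysis to the $3\times 3$ subdivision of a nonatree, noting in particular that a new leaf splits at most one existing compressed edge and can create at most one new merge obstacle along the affected ancestor path. Once this structural bound is in hand, the rest of the proof is a routine summation of the polylogarithmic costs listed above.
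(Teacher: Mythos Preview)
Your proposal is correct and follows essentially the same step-by-step accounting as the paper's proof: bound each numbered step of \texttt{Insert} by the cost of its data-structure operations and its (constantly many) calls to \texttt{UpdateIS}, invoking \lemmaref{lem:Tcup} and \lemmaref{cor:update}. In fact you are more thorough than the paper in two respects: you explicitly justify the second clause of the lemma (the $O(1)$ bound on changes to $S_k$) via \lemmaref{lem:LocalGreedy}, whereas the paper's proof only argues the time bound; and you flag the compressed-nonatree structural claim (that insertion touches $O(1)$ cells, hence $O(1)$ calls to \texttt{UpdateIS}), which the paper asserts in the description of step~\ref{ins:ins-nonatree} but does not revisit in the proof.
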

\begin{proof}
Steps~\ref{ins:find-cell} and~\ref{ins:find-nonatree} of the \texttt{Insert} routine take $O(1)$ time. Step~\ref{ins:ins-nonatree} takes $O(\log^2{n})$ time, since cell location using~$F_c$, and insertion into~$N_k$ are handled in logarithmic time, while insertion into search tree~$T^i_\mathcal{D}$ takes $O(\log{n})$ time. Similarly, the $O(1)$ interactions with $F_o$ and~$F_o'$ in steps~\ref{ins:obs-leaf} and~\ref{ins:obs-misc} take logarithmic time as well. Step~\ref{ins:ins-Tcup} takes polylogarithmic expected amortized time by Lemma~\ref{lem:Tcup}.
Finally, the subroutine \texttt{UpdateIS} is called up to twice, in steps~\ref{ins:obs-misc} and~\ref{ins:call-update}, which also runs in polylogarithmic amortized expected update time by Lemma~\ref{cor:update}. 
\end{proof}

\begin{lemma}\label{lem:delete-time}
Dynamic deletion of a disk takes polylogarithmic expected amortized update time; 
and incurs $O(1)$ insertions to and deletions from $S_k$ for some $k\in \{1,\ldots, 8\}$. 
\end{lemma}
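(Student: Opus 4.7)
The plan is to mirror the proof of \lemmaref{lem:insert-time} step by step for the \texttt{Delete} routine, showing that each of the numbered steps runs in polylogarithmic expected amortized time and contributes only $O(1)$ changes to any independent set $S_k$. As the individual building blocks (\texttt{Add}, \texttt{Remove}, \texttt{GreedyIS}, \texttt{UpdateIS}, the data structures $F_c$, $F_o$, $F_o'$, $T_\mathcal{D}^i$, $T_k^i$, and $T_\cup$) have already been analyzed in Lemmata~\ref{lem:Tcup}, \ref{lem:greedyIS}, and~\ref{cor:update}, what remains is essentially bookkeeping: check each line of \texttt{Delete}, invoke the appropriate lemma, and sum the costs.

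First I would handle Steps~\ref{del:find-cell} and~\ref{del:find-nonatree}, which take $O(1)$ time as in the insertion analysis (the bucket is determined from the radius and the cell from the center coordinates). For Step~\ref{del:del-nonatree} I would observe that locating $c$ in the compressed nonatree $N_k$ via $F_c$ and deleting $c$ (together with the $O(1)$ merge operations on ancestor/sibling cells analogous to compressed quadtrees) costs $O(\log n)$, and the deletion from $T_\mathcal{D}^i$ is another $O(\log n)$. The substeps~\ref{del:obs-leaf} and~\ref{del:obs-unmerge} perform only $O(1)$ queries to $F_o$, $F_o'$, and $T_k^i$, and at most one call to \texttt{Add} or \texttt{Remove}, each of which touches only $F_o$, $F_o'$ and $T_k^i$ in $O(\log n)$ time. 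Step~\ref{del:del-Tcup} is a single insertion/deletion into $T_\cup$ together with the $O(\log n)$ DFN updates along the root path; by \lemmaref{lem:Tcup} this runs in polylogarithmic expected amortized time. Steps~\ref{del:del-Sk} and~\ref{del:del-Bk} are one $F_o'$ query and at most one \texttt{Remove} together with $O(1)$ updates to $B_k$ and the pointer $\beta(\cdot)$, again all in $O(\log n)$ time.

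Finally, Step~\ref{del:call-update} invokes \texttt{UpdateIS} on either $c$ or its lowest surviving ancestor $c_p$; by \corollaryref{cor:update} this costs polylogarithmic expected amortized time. Summing the contributions of all steps yields the claimed polylogarithmic expected amortized bound.

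For the second part of the statement, I would collect the changes to $S_k$ along the routine: Step~\ref{del:obs-leaf} may \texttt{Add} one disk; Step~\ref{del:del-Sk} may \texttt{Remove} one disk; and the single invocation of \texttt{UpdateIS} triggers \texttt{GreedyIS} on an ascending path together with the case distinction in Step~\ref{upd:while-loop} of \texttt{UpdateIS}. By \lemmaref{lem:LocalGreedy}, \texttt{GreedyIS} adds $O(1)$ disks to $S_k$, and each case of Step~\ref{upd:while-loop} performs $O(1)$ additional \texttt{Add}/\texttt{Remove} operations. Thus \texttt{Delete} causes $O(1)$ insertions and deletions in $S_k$ in total, and only in the single nonatree $N_k$ determined by the bucket and grid of $d_q$, which proves the lemma. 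The only non-trivial point — and the one I would flag as the ``main obstacle'' if there were one — is to verify that the path processed by \texttt{UpdateIS} after a deletion (where a cell may have vanished and merges of ancestors may have occurred) is still contained in $O(1)$ original ascending paths, so that \lemmaref{lem:LocalGreedy} applies; this follows since \texttt{Delete} removes or merges only $O(1)$ obstacle cells, exactly the situation covered in the proof of \lemmaref{lem:LocalGreedy}.
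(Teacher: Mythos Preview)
Your proof is correct and follows essentially the same approach as the paper's own proof, which tersely notes that Steps~\ref{del:find-cell}--\ref{del:del-Tcup} mirror the corresponding steps of \texttt{Insert}, that Steps~\ref{del:del-Sk}--\ref{del:del-Bk} are $O(\log n)$ operations on $T_k^i$, $F_o'$, and $B_k$, and that Step~\ref{del:call-update} is handled by Lemma~\ref{cor:update}. You are actually more explicit than the paper in justifying the $O(1)$ changes to $S_k$ (invoking Lemma~\ref{lem:LocalGreedy} and checking that the deletion removes only $O(1)$ obstacle cells), which the paper leaves implicit; one tiny slip is that Step~\ref{del:del-Sk} queries $T_k^i$ rather than $F_o'$, but this does not affect the bound.
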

\begin{proof}
Steps~\ref{del:find-cell}-\ref{del:del-Tcup} of \texttt{Delete} have the same asymptotic running time as the corresponding steps in \texttt{Insert}, since (asymptotically) the same number of interactions take place with the same data structures. Steps~\ref{del:del-Sk} and~\ref{del:del-Bk} query~$T^i_k$ and $F_o'$, resp., possibly call subroutine \texttt{Remove}, in $O(\log{n})$ time each, and manipulate barrier disks in $O(1)$ time. Finally, the subroutine \texttt{UpdateIS} is called in step~\ref{del:call-update}, which runs in polylogarithmic amortized expected update time by Lemma~\ref{cor:update}. 
\end{proof}

\paragraph{Maintenance of Invariants.} We show that invariants~\ref{inv:1}--\ref{inv:5} hold after each \texttt{Insert} or \texttt{Delete} call.

\begin{lemma}\label{lem:maintain-invariants}
    Dynamic insertion or deletion of a disk maintains invariants~\ref{inv:1}--\ref{inv:5}.
\end{lemma}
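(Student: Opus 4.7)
The plan is to verify each invariant in turn, splitting them into (i) structural invariants (\ref{inv:1}, \ref{inv:2}, \ref{inv:3a}, \ref{inv:3b}, \ref{inv:obs}, \ref{inv:bobs}), which follow directly from the bookkeeping of \texttt{Add}, \texttt{Remove}, and the nonatree/bucket construction; and (ii) the substantive invariants \ref{inv:3c} (3-clearance), \ref{inv:4c} and \ref{inv:4d} (barrier properties), and \ref{inv:5} (domination), which require a case analysis of step~\ref{upd:while-loop} in \texttt{UpdateIS}. For (i): invariant~\ref{inv:1} holds because each disk is routed to a unique bucket via its radius (steps~\ref{ins:find-cell}, \ref{del:find-cell}) and to a unique containing shifted grid, as in the unit-disk case; invariant~\ref{inv:2} is immediate from sparsification; invariant~\ref{inv:3a} is preserved by the paired \texttt{Add}/\texttt{Remove} used whenever a disk migrates between $S_k$ and $B_k$; invariant~\ref{inv:3b} holds because each new insertion into $S_k$ or $B_k$ happens in a cell not previously containing an $S_k \cup B_k$ element; and invariants~\ref{inv:obsa}, \ref{inv:obsb}, \ref{inv:bobs} are enforced by \texttt{Add}/\texttt{Remove} together with the explicit handling of merge obstacles in steps~\ref{ins:obs-misc} and~\ref{del:obs-unmerge}.

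For invariant~\ref{inv:3c}, I would argue inductively using the structure of \texttt{GreedyIS}. Each added disk $d$ in cell $c^{\ast}$ is returned by $T_\cup$ as disjoint from $o(c_o) = 3d'_o$, where $c_o$ is the current obstacle cell strictly below $c^{\ast}$; this directly yields 3-clearance between $d$ and the disk associated with $c_o$. Any larger disk added later in the same call lies above $c_o$, so 3-clearance is inherited transitively through $c_o$, and any pre-existing larger $S_k$ disk above the entire path already satisfied 3-clearance against $o(c_o)$ by the invariant before the update (together with~\ref{inv:4d} whenever the ascending path crosses a barrier cell). Lemma~\ref{lem:closest-obstacle} supplies the crucial fact that a newly added $d$ can intersect an existing $S_k$ disk only at the next obstacle cell $c^{-}$ above on the ascending path; the three branches of step~\ref{upd:while-loop} are designed exactly to resolve that single potential conflict without propagating further up.

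The heart of the proof is the preservation of \ref{inv:4c}, \ref{inv:4d}, and \ref{inv:5} in those three branches. In case~\ref{upd:case1} no new barrier arises, and invariant~\ref{inv:5} holds because \texttt{GreedyIS} terminates (Lemma~\ref{lem:Tcup}) only when every disk on the affected ascending path intersects the current obstacle or some barrier clearance disk, while paths outside the local window are untouched. In case~\ref{upd:case2}, when $d^{-}$ is moved to $B_k$, the new barrier clearance disk $o(c_b)$ coincides with the obstacle disk that $d^{-}$ formerly defined as an $S_k$ member, so the 3-clearance against the original $d^{-}$ that held before the update immediately yields~\ref{inv:4d} for the barrier; step~\ref{upd:empty-Bk} combined with the assumption $\beta(c^{-}) = \textsc{nil}$ ensures that at most one barrier remains on the path, giving~\ref{inv:4c}. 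Case~\ref{upd:case3} extends the greedy past $\beta(c^{-})$ and either installs a single new barrier higher up or promotes $\beta(c^{-})$ back into $S_k$; either subcase preserves at most one barrier per ascending path between consecutive obstacles, and the argument for~\ref{inv:4d} repeats with the new barrier cell. The main obstacle is the bookkeeping required for~\ref{inv:4c} across the auxiliary steps of \texttt{Insert} and \texttt{Delete}---creation and destruction of merge obstacles (\ref{ins:obs-misc}, \ref{del:obs-unmerge}) and deletion of barriers that become orphaned (\ref{ins:delete_Bk2}, \ref{del:del-Sk}, \ref{del:del-Bk})---so that every remaining barrier continues to sit alone on a single ascending path between consecutive obstacles after each structural change to the obstacle forest; I expect this to reduce to a short enumeration of the $O(1)$ affected path segments per update.
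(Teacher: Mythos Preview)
Your proposal is correct and follows essentially the same approach as the paper: a split into structural bookkeeping invariants handled by \texttt{Add}/\texttt{Remove} and the nonatree maintenance, followed by a case analysis of the three branches in step~\ref{upd:while-loop} of \texttt{UpdateIS} for the substantive invariants, with Lemma~\ref{lem:closest-obstacle} doing the work for 3-clearance and the observation that a new barrier inherits its clearance property from its former role as an $S_k$ obstacle giving~\ref{inv:4d}. The paper's proof expands in more detail precisely where you anticipate the main effort---the enumeration showing that every ascending path on which invariant~\ref{inv:5} could be violated (via removal of an obstacle cell or barrier in \texttt{Insert}/\texttt{Delete}) is one on which \texttt{UpdateIS} is subsequently called---and also treats separately the three places disks enter $S_k$ (the \texttt{GreedyIS} loop, leaf creation, and the promotion of $\beta(c^-)$ in case~\ref{upd:case3}), the last of which needs invariant~\ref{inv:4d} from before the update rather than Lemma~\ref{lem:closest-obstacle}.
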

\begin{proof}
    For this proof we assume that the invariants hold before a dynamic update, and show that after an insertion or deletion, the invariants still hold.
    
    \paragraph{Invariants~\ref{inv:1} and~\ref{inv:2}.}Steps~\ref{ins:find-cell} and~\ref{ins:find-nonatree} of both \texttt{Insert} and \texttt{Delete} ensure that invariants~\ref{inv:1} and~\ref{inv:2} are satisfied, by identifying exactly one cell~$c$ in one nonatree~$N_k$ (with the right level parity), for the inserted/deleted disk. 
    
    \paragraph{Invariant~\ref{inv:3}.} 
    For invariant~\ref{inv:3a}, we need to show that no disk lies in both $S_k$ and $B_k$. New disks are added to~$S_k$ in four steps of the algorithm: in step~\ref{upd:ins-Sk} of \texttt{GreedyIS}, in step~\ref{ins:obs-leaf} of \texttt{Insert} and step~\ref{del:obs-leaf} of \texttt{Delete} at a newly created leaf of the nonatree, and in step~\ref{upd:case3} of \texttt{UpdateIS} where a disk moves from $B_k$ to $S_k$. The latter three steps clearly keep $S_k$ and $B_k$ disjoint. Subroutine \texttt{GreedyIS} is called only in \texttt{UpdateIS}. Before we call \texttt{GreedyIS} on an ascending path, we always remove any barrier disks from that path in step~\ref{upd:empty-Bk}; as well as already before calling \texttt{UpdateIS}: in steps~\ref{ins:delete_Bk1} and~\ref{ins:delete_Bk2} of \texttt{Insert}; and steps~\ref{del:del-Sk} and~\ref{del:del-Bk} of \texttt{Delete}. 
    This establishes that $B_k$ and $S_k$ remain disjoint (invariant~\ref{inv:3a}). 
    %Only when a disk is deleted from~$S_k$ can it become a barrier disk, in step~\ref{upd:case2}. 

    For invariant~\ref{inv:3b}, note that when we add a new disk $d$ to $S_k$, the cells associate with $d$ contribute disks to neither $S_k$ nor $B_k$. A new disk is added to~$B_k$ only in step~\ref{upd:case2} of \texttt{UpdateIS}. In this step, a disk $d^-$ is moved from $S_k$ to $B_k$. Consequently, the cell associated with $d^-$ does not contribute any other barrier disk to $B_k\cup S_k$, and so invariant~\ref{inv:3b} is maintained.     

    It remains to consider invariant~\ref{inv:3c}, that $S_k$ is an independent set with 3-clearance. Assume that this holds before the dynamic update; and note that deleting disks from $S_k$ cannot violate this property. We therefore have to prove only for a newly added disk~$d$ that the invariant still holds. 

\begin{itemize}
    \item Suppose a new disk $d$ is added to $S_k$ in step~\ref{upd:ins-Sk} of \texttt{GreedyIS}. 
    %Disk $d$ was found by a query to~$T_\cup$ (in step~\ref{upd:find-cell} or~\ref{upd:find-cell2} of \texttt{GreedyIS}), and  to find a disk~$d$ that we add to~$S_k$, we know the following properties of~$d$.
    %\begin{itemize}
    %    \item 
    We query~$T_\cup$ with a cell~$c$ and an obstacle~$o(c_o)$ to find~$d$ in cell~$c^*$ (in step~\ref{upd:find-cell} or~\ref{upd:find-cell2} of \texttt{GreedyIS}). The obstacle~$o(c_o)$ was found by querying~$F_o'$ with~$c$, and hence~$c_o$ is the closest obstacle cell below~$c$. Any obstacle cell below~$c_o$ would have an obstacle with smaller radius (being empty, or defined by a cell~$c' \prec c_o$), which would also be completely contained in~$o(c_o)$ (because $c' \prec c_o$). Thus, disk~$d$ found by~$T_\cup$ must be disjoint from any obstacle cell below~$c$, satisfying invariant~\ref{inv:3c} for the disk of~$S_k$ in those obstacle cells below~$c$.

    For the disks above~$c$, we know the following. In step~\ref{upd:find-cell} of \texttt{GreedyIS}, we check whether the closest obstacle disk below the cell~$c^*$ is still~$o(c_o)$. If this is not the case, then a disk is found above the next obstacle disk on the ascending path from~$c$ to the root. Adding such a disk is problematic, since it will intersect with the obstacle disk above $c$. Thus we add~$d$ only if it is located between $c_o$ and the next obstacle cell above it. All other disks in $S_k$ on the ascending path from~$c$ to the root are therefore located above $d$. By Lemma~\ref{lem:clearance}, at most one disk in~$S_k$ can intersect the obstacle of cell~$c^*$. If such a disk exists, it must be in the cell returned by~$F_o$ (Lemma~\ref{lem:closest-obstacle}). However, if $o(c^*)$ intersects a disk $d^-$ in the next obstacle cell above $d$, then we are in case~\ref{upd:case2} or~\ref{upd:case3} of \texttt{UpdateIS}, and $d^-$ is removed from $S_k$. 
    Thus invariant~\ref{inv:3c} also holds for all disks above~$d$.
        % \item In case a cell~$c^*$ is found, we check in step~\ref{gr:while-loop} whether~$c^*$ is still below the cell defining the top end~$c_2$ of the ascending path; and~$c_2$ is set to be the next obstacle cell on the ascending path, or lower, in step~\ref{upd:while-loop} of \texttt{UpdateIS}. If so, then invariant~\ref{inv:3b} must hold: The cell~$c^*$ cannot be contributing a disk to~$S_k$ already, when $T_\cup$ found disk~$d$ in~$c^*$: the obstacle disk~$o(c^*)$ would exist and prevent $d \subset o(c^*)$ from being reported by the $T_\cup$ query. 
        % thus in this case invariant~\ref{inv:3b} is satisfied, even though~$d$ is located in~$c_o$. 
%    \end{itemize}
This shows that the addition of a disk to $S_k$ in step~\ref{upd:ins-Sk} of \texttt{GreedyIS} maintains invariant~\ref{inv:3c}.

\item Suppose we add a disk $d$ associated with a newly created leaf $c$ of the nonatree in step~\ref{ins:obs-leaf} of \texttt{Insert} and step~\ref{del:obs-leaf} of \texttt{Delete}, then we call \texttt{UpdateIS} (in step~\ref{ins:call-update} of \texttt{Insert} and step~\ref{del:call-update} of \texttt{Delete}) for cell $c$. For the disks above~$c$, the argument above goes through and shows that 3-clearance is also maintained for all disks above~$d$.
\item Suppose that a disk $\beta(c^-)$ associated with a cell~$c_\beta$ was moved from $B_k$ to $S_k$ in step~\ref{upd:case3} of \texttt{UpdateIS}. We know that by invariant~\ref{inv:4d} the obstacle disk $o(c_\beta)$ is disjoint from all disks associated with cells~$c$ for which~$c_\beta\prec c$. Furthermore, in step~\ref{upd:case3} of \texttt{UpdateIS} the obstacle disk~$o(c)$ of the highest cell~$c\prec c_\beta$ associated with a disk in~$S_k$, does not intersect~$\beta(c^-)$. Hence, all obstacle disks associated with cells~$c$ for which $c\prec c_\beta$ are also disjoint from~$\beta(c^-)$ (by Lemma~\ref{lem:closest-obstacle}), and 3-clearance is maintained. 
\end{itemize}

    \paragraph{Invariant~\ref{inv:obs}.} To maintain invariant~\ref{inv:obs}, we take the following steps:
        In step~\ref{ins:ins-nonatree} of both \texttt{Insert} and \texttt{Delete} the structure of~$N_k$ can change, and hence we query~$F_o'$ and update $F_o$ and~$F_o'$ to make sure that invariant~\ref{inv:obsa} remains satisfied.
        Furthermore, when a disk is added to~$S_k$ in step~\ref{add:obs} of \texttt{Add}, or when a disk is deleted from~$S_k$ in step~\ref{remove:obs} of \texttt{Remove}, $F_o$ and~$F_o'$ are updated to ensure that invariant~\ref{inv:obsa} is satisfied. Whenever we add an obstacle cell to $F_o$ and $F_o'$ we also create an obstacle disk~$o(c)=3d'$, where~$d'$ is the smallest enclosing disk of~$c$, satisfying Invariant~\ref{inv:obsb}

    \paragraph{Invariant~\ref{inv:4}.}
    First consider invariant~\ref{inv:4c}. 
    % Recall that a new disk is added to~$B_k$ only in step~\ref{upd:case2} of \texttt{UpdateIS}.
    Observe disks are added to~$B_k$ only in \texttt{UpdateIS}, and that \texttt{Insert}, and \texttt{Delete} only remove disks from~$B_k$. Every such modification of~$B_k$ (add or remove) is accompanied by a corresponding assignment change of~$\beta$ (resp. setting it to a disk, or to \textsc{nil}). The only other time the assignment changes, is in step~\ref{upd:case3}, and after changing the assignment of a barrier disk to a different cell, the assignment of the old cell is set to \textsc{nil}. Since \texttt{UpdateIS} is the only subroutine where $\beta$ is set to a non-\textsc{nil} value, invariant~\ref{inv:4c} is maintained when the following two conditions are met. Subroutine \texttt{UpdateIS} correctly assigns each barrier disk~$b\in B_k$ associated with cell~$c$ to the highest obstacle cell below $c$; no obstacle cell appears between a barrier disk and its assigned obstacle cell.
    
    In step~\ref{upd:case2} of \texttt{UpdateIS}, there is no barrier disk assigned to the bottom or top obstacle cell of the ascending path. Thus we can turn the top obstacle cell into a barrier cell, and assign the barrier disk to the highest obstacle cell below it, since there is not other assigned barrier disk in the two affected ascending paths. In step~\ref{upd:case3} there is only a barrier disk at the top of the ascending path, which will either be turned into a disk of~$S_k$, or is assigned to the highest obstacle cell below it. In both cases \texttt{UpdateIS} ensures that Invariant~\ref{inv:4c} is maintained.
    
    As explained in the previous paragraph, \texttt{UpdateIS} ensures that all assignments for barrier disks in the affected ascending paths are set correctly, regardless of the new obstacle cells that are created. It therefore suffices to show that no obstacle cells appear between barrier disks and their assigned obstacle cells in subroutines \texttt{Insert} and \texttt{Delete}. 
    
    During \texttt{Insert} a new (true) obstacle can appear in a newly created leaf, which has not obstacle cells below it, or new (merge) obstacle cells can be created in step~\ref{ins:ins-nonatree}. Such a merge obstacle cell splits an ascending path in two ascending paths, and we need to update the assignment when a barrier disk is assigned to the obstacle cell at the bottom of the lower ascending path, but the barrier disk is located in the top path. In step~\ref{ins:delete_Bk2} we set the assignment to \textsc{nil} and remove the barrier disk, when this happens, to maintain Invariant~\ref{inv:4c}.

    During \texttt{Delete} obstacle cells can only appear as new leafs in~$N_k$, and hence these cannot violate barrier disk assignment. Though it may happen that barrier disks, or their assigned obstacles disappear because disk deletion (resp. in step~ \ref{del:del-Bk} and steps~\ref{del:obs-unmerge} and \ref{del:del-Sk}). In all these cases the barrier disks are removed and assignments set to~\textsc{nil}. Thus, Invariant~\ref{inv:4c} is maintained
    
    For invariants~\ref{inv:bobs} and~\ref{inv:4d}, recall that new disks can be added to $B_k$ only in step~\ref{upd:case2} of \texttt{UpdateIS}, where a disk $d^-$ associated with a cell $c^-$ moves from $S_k$ to $B_k$. Before the dynamic update, $S_k$ satisfied invariant~\ref{inv:3b}, and so $o(c^-)$ is disjoint from all disks in $S_k$ associated with cells $c$, $c^-\prec c$. Thus, when~$d^-$ is moved to~$B_k$, it inherits~$o(c^-)$ as the barrier clearance disk with the same property, maintaining both invariants.

    \paragraph{Invariant~\ref{inv:5}.} Finally, we prove that invariant~\ref{inv:5} is also satisfied after a dynamic update. We first prove that after every insertion or deletion of a disk~$d$, the \texttt{UpdateIS} subroutine is called on each ascending path where invariant~\ref{inv:5} might possibly be violated, and then prove that \texttt{UpdateIS} restores the invariant.
    \begin{itemize}
        \item After an insertion of a disk $d$ into~$\mathcal{D}$, invariant~\ref{inv:5} can be violated in two ways: (1) disk~$d$ violates invariant~\ref{inv:5}; or the set of dominating disks decreases because of (2) the deletion of a barrier disk or an obstacle cell.
        %However, the addition of new barriers or obstacle cells do not violate the invariant. 
        When a disk $d$ is inserted, associated with a cell $c$, then \texttt{Insert} calls \texttt{UpdateIS} on the cell~$c$ and hence the ascending path containing~$c$ (in step~\ref{ins:call-update}). Subroutine \texttt{Insert} removes a barrier disk in step~\ref{ins:delete_Bk2}. Observe that in this case the barrier disk is located above a new merge obstacle cell~$c'$, and \texttt{UpdateIS} is called on the ascending path with~$c'$ as the bottom cell. 
        Finally, note that \texttt{Insert} does not remove any obstacle cells:  Steps~\ref{ins:obs-misc} and~\ref{ins:delete_Bk1} together may turn a true obstacle cell into a merge obstacle cell.
        \item After a deletion of a disk $d$ from~$\mathcal{D}$, invariant~\ref{inv:5} can be violated in only one way: if the set of dominating disks decreases because of the deletion of a barrier disk or an obstacle cell.
        Step~\ref{del:obs-leaf} of \texttt{Delete} removes a leaf $c$ of the nonatree, which is an obstacle cell, and an ancestor cell $c_p$ becomes a leaf. 
        Step~\ref{del:obs-unmerge} of \texttt{Delete} removes a merge obstacle cell, when~$c$ was the only cell in one of the two subtrees of ancestor cell~$c_p$.
        In both cases, every disk in $\mathcal{D}_k$ that was in the ascending path starting from $c$ is now in an ascending path containing $c_p$, on which we call \texttt{UpdateIS} in step~\ref{del:call-update}. 
        
        Step~\ref{del:del-Sk} of \texttt{Delete} removes a true obstacle cell (when $d\in S_k$), and any barrier assigned to this cell. Step~\ref{del:del-Bk} of \texttt{Delete} removes a barrier. In both cases, step~\ref{del:call-update} calls \texttt{UpdateIS} on the ascending path containing $c$ as well as any disks dominated by the obstacle disk or barrier clearance disk that has been removed. 
        
        \item We now show that \texttt{UpdateIS} re-establishes invariant~\ref{inv:5} on the ascending path $P$ it is called upon. Note that \texttt{UpdateIS} calls subroutine \texttt{GreedyIS} on a path $P'$, where either $P'=P$ (steps~\ref{upd:case1}--\ref{upd:case2}) or $P\subset P'$ (step~\ref{upd:case3}). Before the call to \texttt{GreedyIS} on $P'$, it removes any barrier disks (step~\ref{upd:empty-Bk}) or obstacle cells (step~\ref{upd:case3}) from interior cells of $P'$. The disks in $\mathcal{D}_k$ that are dominated by any obstacle disk or barrier clearance disk that are removed in these steps are all contained in path $P'$.  
        
        %additional disks can start violating invariant~\ref{inv:5}: In step~\ref{upd:empty-Bk} a barrier disk is removed, and in step~\ref{upd:case3} the obstacle cell at the top of the ascending path is removed.  However, before calling \texttt{GreedyIS} in step~\ref{upd:while-loop}, the length of the ascending path is adjusted to include all cells that contain disks that violate Invariant~\ref{inv:5}. 
        
        Then \texttt{GreedyIS} incrementally adds disks to~$S_k$ by querying $T_\cup$ to find disks that are not yet dominated by the highest obstacle disk in~$P'$. This process ends when a disk is found outside of~$P'$, and hence all disks in the ascending path are either in~$S_k$, or intersect the highest obstacle disk below them, satisfying Invariant~\ref{inv:5a} in both cases. After subroutine \texttt{GreedyIS} has finished, \texttt{UpdateIS} may still swap disks between $B_k$ and $S_k$ (in steps~\ref{upd:case2} and~\ref{upd:case3}), but this does not violate the invariant, since disks will simply swap from satisfying Invariant~\ref{inv:5a} to~\ref{inv:5b} or vice versa. Thus, Invariant~\ref{inv:5} is maintained.
        \qedhere
    \end{itemize}
\end{proof}

We have described how to maintain the independent sets $S_1,\ldots ,S_8$ satisfying invariants~\ref{inv:1}--\ref{inv:5} in polylogarithmic expected amortized update time. By Lemma~\ref{lem:invariants}, the largest of $S_1,\ldots , S_8$ is a constant-factor approximate MIS of $\mathcal{D}$. For each dynamic update in $\mathcal{D}$ can add or remove $O(1)$ disks in $S_k$, for some $k\in \{1,\ldots, 8\}$, by Lemmata~\ref{lem:insert-time}--\ref{lem:delete-time}. 
%However, $S_k$ changes incrementally (one disk at a time), while the size of the MIS changes by at most one, consequently, the largest of $S_1,\ldots , S_8$ remains a constant-factor approximate MIS at all times. 
%
Thus, by Lemma~\ref{lem:mix}, we can smoothly transition from one independent set to another using the MIX algorithm, with $O(1)$ changes in the ultimate independent set per update in $\mathcal{D}$, and conclude the following theorem.

\theoremArbitraryDisks*

\subsection{Lower Bound}\label{sec:lb}
We state a lower bound for the \textsc{DGMIS} problem for a set of congruent disks. Our argument is similar to that of Henzinger et al.~\cite{Henzinger0W20}, who gave such lower bounds for hypercubes. For the sake of completeness, we state their result. 

\begin{theorem}\label{lb:disk}
For a fully dynamic set of unit disks in the plane, there is no algorithm for \textsc{DGIS} with approximation ratio $1+\varepsilon$ and amortized update time $n^{O((1/\varepsilon)^{1-\delta})}$, for any $\varepsilon,\delta>0$, unless the \textsf{ETH} fails.
\end{theorem}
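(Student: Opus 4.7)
The plan is a straightforward static-to-dynamic reduction that turns a fast fully dynamic approximation algorithm into a fast static approximation algorithm, and then invoke the lower bound of Marx~\cite{marx2007optimality} for $(1+\varepsilon)$-approximating the maximum independent set on unit disks.

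Suppose, for contradiction, that for some $\varepsilon,\delta>0$ there is a fully dynamic algorithm $\mathcal{A}$ for MIS on unit disks in the plane with approximation ratio $1+\varepsilon$ and amortized update time $T(n)=n^{O((1/\varepsilon)^{1-\delta})}$. Given a static instance $\mathcal{D}$ of $n$ unit disks, we initialize $\mathcal{A}$ on the empty set and insert the $n$ disks of $\mathcal{D}$ one by one; then we query the maintained approximate MIS. Since the update time is amortized, the total running time is at most $O(n\cdot T(n))=n^{1+O((1/\varepsilon)^{1-\delta})}$, and the reported set is a $(1+\varepsilon)$-approximate MIS of $\mathcal{D}$. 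Thus, from $\mathcal{A}$ we obtain a \emph{static} $(1+\varepsilon)$-approximation algorithm for MIS on $n$ unit disks running in time $n^{1+O((1/\varepsilon)^{1-\delta})}$.

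Marx~\cite{marx2007optimality} proved that, unless the \textsf{ETH} fails, for every $\delta'>0$ no $(1+\varepsilon)$-approximation algorithm for MIS on unit disks in the plane runs in time $f(1/\varepsilon)\cdot n^{o((1/\varepsilon)^{1-\delta'})}$ for any computable function $f$. Fix any $\delta'\in(0,\delta)$. For every constant $C>0$ we have
\[
  1+C\,(1/\varepsilon)^{1-\delta}
  \;=\; o\!\bigl((1/\varepsilon)^{1-\delta'}\bigr)
  \qquad\text{as }\varepsilon\to 0,
\]
because the ratio $(1/\varepsilon)^{1-\delta'}/(1/\varepsilon)^{1-\delta}=(1/\varepsilon)^{\delta-\delta'}$ tends to infinity. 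Hence, for all sufficiently small $\varepsilon$, the running time $n^{1+O((1/\varepsilon)^{1-\delta})}$ of our static algorithm lies in $n^{o((1/\varepsilon)^{1-\delta'})}$, contradicting Marx's lower bound and therefore \textsf{ETH}.

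The substantive content is contained entirely in the Marx reduction, which is cited as a black box; the only thing to verify is that it applies to \emph{unit disks in the plane} (and not only to, say, unit squares or hyperrectangles as in~\cite{Henzinger0W20}). This follows because Marx's theorem is stated for unit disks among several other planar geometric MIS problems. The remaining step---simulating static by dynamic through sequential insertions---is routine, so the only real care needed is matching the exponents $(1/\varepsilon)^{1-\delta}$ in the assumed upper bound and $(1/\varepsilon)^{1-\delta'}$ in Marx's lower bound by taking $\delta'<\delta$, as above.
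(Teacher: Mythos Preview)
Your proposal is correct and follows essentially the same approach as the paper: both use the trivial static-to-dynamic reduction (insert the $n$ disks one by one, then query) and then invoke Marx's ETH-based lower bound for $(1+\varepsilon)$-approximate MIS on unit disks. The only difference is cosmetic---the paper absorbs the extra factor~$n$ directly into the $n^{O((1/\varepsilon)^{1-\delta})}$ exponent, whereas you explicitly compare $1+O((1/\varepsilon)^{1-\delta})$ against $(1/\varepsilon)^{1-\delta'}$ by choosing $\delta'<\delta$.
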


\begin{proof}
Marx~\cite{marx2007optimality} showed that, assuming \textsf{ETH}, there is no $\delta>0$,
such that a $2^{(1/\varepsilon)^{O(1)}}\cdot n^{O((1/\varepsilon)^{1-\delta})}$ 
time \textsf{PTAS} exists for \textsc{MIS} for unit disks. Suppose we have an algorithm that maintains $(1+\varepsilon)$-approximate solution with an amortized update time $n^{O((1/\varepsilon)^{1-\delta})}$. Then, we could transform the input instance of \textsc{MIS} to a dynamic instance by inserting the disks one-by-one, in overall $n^{O((1/\varepsilon)^{1-\delta})}$ time. This contradicts the result of Marx~\cite{marx2007optimality}. 
\end{proof}

\section{Conclusions}
\label{sec:con}

We studied the dynamic geometric independent set problem for a collection of disks in the plane and presented the first fully dynamic algorithm with polylogarithmic update time. First, we showed that for a fully dynamic set of unit disks in the plane, a constant factor approximate maximum independent set can be maintained in polylogarithmic update time. Moreover, we showed that this result generalizes to fat objects in any fixed dimension. 
Our main result was a dynamic data algorithm that maintains a constant factor approximate maximum independent set in polylogarithmic amortized update time. One bottleneck in our framework is the nearest/farthest neighbor data structure~\cite{KaplanMRSS20, Liu22} (as discussed in Section~\ref{sec:cont}), which provides only \emph{expected amortized} polylogarithmic update time. This is the only reason why our algorithm does not guarantee deterministic worst-case update time, and it does not extend to balls in $\mathbb{R}^d$ for $d\geq 3$, or to arbitrary fat objects in the plane. It remains an open problem whether there is a dynamic nearest/farthest neighbor data structure in constant dimensions $d\geq 2$ with a worst-case polylogarithmic update and query time: Any such result would immediately carry over to a fully dynamic algorithm for an approximate MIS for balls in higher dimensions.  

\paragraph{Beyond Fatness.} While there have been several attempts to obtain constant-factor dynamic approximation schemes for various sub-families of rectangles, it is not known if, for a dynamic collection of axis-aligned rectangles in the plane, there exists an algorithm that maintains a constant-factor approximate maximum independent set in sublinear update time. On the one hand, due to Henzinger et al.~\cite{Henzinger0W20}, we know that it is not possible to maintain a $(1+\varepsilon)$-approximate solution in $n^{O((1/\varepsilon)^{1-\delta})}$ amortized update time, for any $\delta>0$, unless the \textsf{ETH} fails. On the other hand, recent progress on \textsc{MIS} for a static set of axis-parallel rectangles resulted in several constant-factor approximations \cite{galvez20223,mitchell2022approximating}. However, these algorithms are based on dynamic programming, and hence it is not clear how to naturally extend them into the dynamic realm.

\bibliographystyle{alphaurl}
\bibliography{main.bib}
\end{document}